\documentclass[authorversion,screen,nonacm]{acmart}

\AtBeginDocument{%
   \providecommand\BibTeX{{%
         \normalfont B\kern-0.5em{\scshape i\kern-0.25em
            b}\kern-0.8em\TeX}}}

\setcopyright{none} 


\def\RegVerFlag{1}

\ifx\RegVerFlag\undefined%
\newcommand{\dbVer}[1]{#1}
\newcommand{\regVer}[1]{}
\else
\newcommand{\dbVer}[1]{}
\newcommand{\regVer}[1]{#1}
\fi

\IfFileExists{sariel_computer.sty}{\def\sarielComp{1}}{}
\ifx\sarielComp\undefined%
\newcommand{\SarielComp}[1]{}
\newcommand{\NotSarielComp}[1]{#1}%
\else
\newcommand{\SarielComp}[1]{#1}%
\newcommand{\NotSarielComp}[1]{}%
\fi
\newcommand{\IfPrinterVer}[2]{#2}%


\usepackage{amsmath}%
\usepackage{xcolor}%
\usepackage{scalerel}%
\usepackage{euscript}%
\usepackage{stmaryrd}%

\SarielComp{\usepackage{sariel_colors}}%


\usepackage{mleftright}%
\usepackage{xspace}%
\usepackage{hyperref}%

\IfPrinterVer{%
   \usepackage{hyperref}%
}{%
   \usepackage{hyperref}%
   \hypersetup{%
      breaklinks,%
      ocgcolorlinks, colorlinks=true,%
      urlcolor=[rgb]{0.25,0.0,0.0},%
      linkcolor=[rgb]{0.5,0.0,0.0},%
      citecolor=[rgb]{0,0.2,0.445},%
      filecolor=[rgb]{0,0,0.4},
      anchorcolor=[rgb]={0.0,0.1,0.2}%
   }
}

\newcommand{\HLink}[2]{\hyperref[#2]{#1~\ref*{#2}}}
\newcommand{\HLinkSuffix}[3]{\hyperref[#2]{#1\ref*{#2}{#3}}}

\newcommand{\figlab}[1]{\label{fig:#1}}
\newcommand{\figref}[1]{\HLink{Figure}{fig:#1}}

\newcommand{\thmlab}[1]{{\label{theo:#1}}}
\newcommand{\thmref}[1]{\HLink{Theorem}{theo:#1}}

\newcommand{\itemlab}[1]{\label{item:#1}}
\newcommand{\itemref}[1]{\HLinkSuffix{}{item:#1}{}}
\newcommand{\stepref}[1]{\HLinkSuffix{Step}{item:#1}{}}

\providecommand{\deflab}[1]{\label{def:#1}}
\newcommand{\defref}[1]{\HLink{Definition}{def:#1}}

\newcommand{\apndlab}[1]{\label{apnd:#1}}
\newcommand{\apndref}[1]{\HLink{Appendix}{apnd:#1}}

\newcommand{\lemlab}[1]{\label{lemma:#1}}
\newcommand{\lemref}[1]{\HLink{Lemma}{lemma:#1}}%

\newcommand{\remlab}[1]{\label{remark:#1}}
\newcommand{\remref}[1]{\HLink{Remark}{remark:#1}}%

\newcommand{\seclab}[1]{\label{section:#1}}
\newcommand{\secref}[1]{\HLink{Section}{section:#1}}%

\providecommand{\eqlab}[1]{}%
\renewcommand{\eqlab}[1]{\label{equation:#1}}
\newcommand{\Eqref}[1]{\HLinkSuffix{Eq.~(}{equation:#1}{)}}


\newcommand{\remove}[1]{}%
\newcommand{\Set}[2]{\left\{ #1 \;\middle\vert\; #2 \right\}}

\newcommand{\pth}[2][\!]{\mleft({#2}\mright)}%

\newcommand{\pbrcx}[1]{\left[ {#1} \right]}%
\newcommand{\ProbLTR}{\mathbb{P}}%

\newcommand{\Prob}[1]{\mathop{\ProbLTR} \mleft[ #1 \mright]}%
\newcommand{\ProbCond}[2]{\mathop{\ProbLTR}\!\left[%
       #1 \;\middle\vert\; #2 \right]}

\newcommand{\ExChar}{\mathbb{E}}%
\newcommand{\ExSym}{\mathop{\ExChar}}%
\newcommand{\Ex}[2][\!]{\ExSym#1\pbrcx{#2}}

\newcommand{\ceil}[1]{\left\lceil {#1} \right\rceil}

\newcommand{\brc}[1]{\left\{ {#1} \right\}}
\newcommand{\cardin}[1]{\left| {#1} \right|}%

\renewcommand{\th}{th\xspace}

\renewcommand{\Re}{\mathbb{R}}%

\definecolor{blue25}{rgb}{0, 0, 11}
\dbVer{\definecolor{blue25}{rgb}{0,0,0}}%

\providecommand{\emphic}[2]{%
   \textcolor{blue25}{%
      \textbf{\emph{#1}}}%
   \index{#2}}

\ifx\PDFBlackWhite\undefined
\else
\renewcommand{\emphic}[2]{\textbf{\emph{#1}}}
\fi
\providecommand{\emphi}[1]{\emphic{#1}{#1}}
\renewcommand{\emphi}[1]{\emphic{#1}{#1}}


\providecommand{\Mh}[1]{#1}%
\usepackage{algorithm}
\usepackage{algorithmic}
\usepackage{caption}
\usepackage{subcaption}


\renewcommand{\Re}{\mathbb{R}}
\newcommand{\dist}{\Mh{\mathcal{D}}}%
\newcommand{\PS}{\Mh{S}}%
\newcommand{\PSA}{\Mh{T}}%
\newcommand{\DS}{\Mh{\EuScript{D}}}%

\newcommand{\norm}[1]{\left\|#1\right\|}%
\newcommand{\eps}{\varepsilon}

\newcommand{\LL}{\Mh{\mathsf{L}}}%

\newcommand{\Term}[1]{\textsf{#1}}
\newcommand{\TermI}[1]{\Term{#1}\index{#1@\Term{#1}}}

\newcommand{\MovieLens}{\Term{MovieLens}\xspace}%

\newcommand{\naive}{na\"ive\xspace}
\providecommand{\si}[1]{#1}

\newcommand{\GloVe}{Glo{}V{}e\xspace}

\newcommand{\LSH}{\TermI{LSH}\xspace}
\newcommand{\MNIST}{\TermI{MNIST}\xspace}%
\newcommand{\NN}{\TermI{NN}\xspace}
\newcommand{\ANN}{\TermI{ANN}\xspace}

\newcommand{\AFNN}{Approximately Fair \NN}
\newcommand{\AFANN}{Approximately Fair \ANN}
\newcommand{\prb}{\Mh{\varphi}}%
\newcommand{\tldO}{\scalerel*{\widetilde{O}}{j^2}}%

\newcommand{\dsS}{\Mh{\mathcal{S}}}%
\newcommand{\dsQ}{\Mh{\mathcal{Q}}}%

\newcommand{\bucket}{\Mh{H}}

\newcommand{\nbrY}[2]{\Mh{B_{\PS}}\pth{#1, #2}}%
\newcommand{\nNY}[2]{\Mh{n}\pth{#1, #2}}%
\newcommand{\pNY}[2]{\Mh{\ProbLTR}\pth{#1, #2}}%

\usepackage[inline]{enumitem}

\newlist{compactitem}{itemize}{5}%
\setlist[compactitem]{topsep=0pt,itemsep=-1ex,partopsep=1ex,parsep=1ex,%
   label=\ensuremath{\bullet}}%

\newlist{compactenumA}{enumerate}{5}%
\setlist[compactenumA]{topsep=0pt,itemsep=-1ex,partopsep=1ex,parsep=1ex,%
   label=(\Alph*)}%

\newlist{compactenuma}{enumerate}{5}%
\setlist[compactenuma]{topsep=0pt,itemsep=-1ex,partopsep=1ex,parsep=1ex,%
   label=(\alph*)}%

\newlist{compactenumI}{enumerate}{5}%
\setlist[compactenumI]{topsep=0pt,itemsep=-1ex,partopsep=1ex,parsep=1ex,%
   label=(\Roman*)}%

\newlist{compactenumi}{enumerate}{5}%
\setlist[compactenumi]{topsep=0pt,itemsep=-1ex,partopsep=1ex,parsep=1ex,%
   label=(\roman*)}%


\newcommand{\ground}{\Mh{\mathcal{U}}}%
\newcommand{\SC}{\Mh{B}}
\newcommand{\SA}{\ensuremath{\Mh{U}}}%
\newcommand{\epsA}{\Mh{\xi}}%

\newcommand{\BadProb}{\Mh{\gamma}}%

\newcommand{\setA}{\Mh{A}}%
\newcommand{\Family}{\Mh{\mathcal{F}}}%
\newcommand{\FamilyA}{\Mh{\mathcal{G}}}%
\newcommand{\FamilyB}{\Mh{\mathcal{H}}}%
\newcommand{\aprxEps}{\Mh{\,\ts\approx_\eps\,\ts}}%

\newcommand{\MS}{\Mh{\mathcal{X}}}%

\newcommand{\OL}{\Mh{\mathcal{O}}}%
\newcommand{\mOL}{\Mh{m^{}_\mathcal{O}}}%
\newcommand{\Tree}{\Mh{T}}%
\newcommand{\obj}{\Mh{o}}%

\newcommand{\Event}{\mathcal{E}}%

\newcommand{\ts}{\hspace{0.6pt}}%
\newcommand{\Cardin}[1]{%
   \ts\!\!\left\bracevert%
       \!\ts\!%
       \vphantom{#1}%
       #1%
       \!\!\ts%
   \right\bracevert%
   \!\!\ts%
}



\regVer{%
   \numberwithin{figure}{section}%
   \numberwithin{table}{section}%
   \numberwithin{equation}{section}%
}%

\ifx\SuppressAuthor\undefined%
\newcommand{\ShowAuthor}[1]{#1}%
\else%
\newcommand{\ShowAuthor}[1]{}
\fi

\newcommand{\degC}{\Mh{\mathsf{d}}}%
\newcommand{\DegY}[2]{\Mh{\mathsf{D}}_{#1}\pth{#2}}%
\newcommand{\degY}[2]{\Mh{\mathsf{d}}_{#1}\pth{#2}}%
\newcommand{\degX}[1]{\Mh{\mathsf{d}}\pth{#1}}%

\newcommand{\nSets}{\Mh{g}}%

\newcommand{\NNIS}{\Term{NNIS}\xspace}
\newcommand{\NNS}{\Term{NNS}\xspace}

\newcommand{\PQ}{\texttt{PQ}\xspace}
\newcommand{\etal}{\textit{et~al.}\xspace}

\newcommand{\BO}[1]{{ O}\left(#1\right)}

\newcommand{\BT}[1]{{\Theta}\left(#1\right)}

\newcommand{\PR}[1]{\textnormal{Pr}\left[#1\right]}

\newcommand{\ip}[2]{\langle{#1},{#2}\rangle}
\newcommand{\p}{\Mh{\mathbf{p}}}%
\newcommand{\q}{\Mh{\mathbf{q}}}%
\newcommand{\x}{\mathbf{x}}%
\newcommand{\y}{\mathbf{y}}%
 \renewcommand{\a}{\mathbf{a}}

\providecommand{\Mh}[1]{#1}%

\newcommand{\D}{\mathcal{D}} 
\newcommand{\X}{\mathcal{X}} 
\newtheorem{lemma}{Lemma}
\newtheorem{theorem}{Theorem} 
\newtheorem{definition}{Definition}

\newtheorem{remark}{Remark}%

\newtheorem{defn}[definition]{Definition}

\usepackage{todonotes}

\providecommand{\IntRange}[1]{\mleft\llbracket #1 \mright\rrbracket}
\newcommand{\IRX}[1]{\IntRange{#1}}%

\begin{document}


\title{Sampling a Near Neighbor in High Dimensions} \subtitle{--- Who
   is the Fairest of Them All?}

\regVer{
\author{Martin Aum\"uller} \affiliation{ \institution{IT University of
      Copenhagen, Denmark} } \email{maau@itu.dk}

\author{Sariel Har-Peled} \affiliation{ \institution{University of
      Illinois at Urbana-Champaign (UIUC)} \country{United States}}
\email{sariel@illinois.edu}

\author{Sepideh Mahabadi} \affiliation{ \institution{Toyota
      Technological Institute at Chicago (TTIC)} \country{United
      States}} \email{mahabadi@ttic.edu}

\author{Rasmus Pagh} \affiliation{\institution{BARC and University of
      Copenhagen, Denmark}} \email{pagh@di.ku.dk}

\author{Francesco Silvestri} \affiliation{\institution{University of
      Padova, Italy}} \email{silvestri@dei.unipd.it}
}

\dbVer{
    \author{Undisclosed Authors}
}

\regVer{
\renewcommand{\shortauthors}%
{M. Aum\"uller, S. Har-Peled, S. Mahabadi, R. Pagh, and F. Silvestri}
}


\begin{abstract}
    Similarity search is a fundamental algorithmic primitive, widely
    used in many computer science disciplines. Given a set of points
    $\PS$ and a radius parameter $r>0$, the $r$-near neighbor ($r$-NN)
    problem asks for a data structure that, given any query point $q$,
    returns a point $p$ within distance at most $r$ from $q$.  In this
    paper, we study the $r$-NN problem in the light of individual
    fairness and providing equal opportunities: all points that are
    within distance $r$ from the query should have the same
    probability to be returned. In the low-dimensional case, this
    problem was first studied by Hu, Qiao, and Tao (PODS 2014).
    Locality sensitive hashing (LSH), the theoretically strongest
    approach to similarity search in high dimensions, does not provide
    such a fairness guarantee.

    In this work, we show that \LSH based algorithms can be made fair,
    without a significant loss in efficiency. We propose several
    efficient data structures for the exact and approximate variants
    of the fair NN problem.  Our approach works more generally for
    sampling uniformly from a sub-collection of sets of a given
    collection and can be used in a few other applications.  We also
    develop a data structure for fair similarity search under inner
    product that requires nearly-linear space and exploits locality
    sensitive filters.  The paper concludes with an experimental
    evaluation that highlights the inherent unfairness of NN data
    structures and shows the performance of our algorithms on
    real-world datasets.

    \regVer{Preliminary versions of the results of this paper were published in
    \cite{hm-nnwft-19,aps-fnnsi-20}.}
\end{abstract}


\begin{CCSXML}
<ccs2012>
   <concept>
       <concept_id>10003752.10003809.10010055.10010057</concept_id>
       <concept_desc>Theory of computation~Sketching and sampling</concept_desc>
       <concept_significance>500</concept_significance>
       </concept>
   <concept>
       <concept_id>10002951.10003227.10003351.10003445</concept_id>
       <concept_desc>Information systems~Nearest-neighbor search</concept_desc>
       <concept_significance>500</concept_significance>
       </concept>
 </ccs2012>
\end{CCSXML}

\ccsdesc[500]{Theory of computation~Sketching and sampling}
\ccsdesc[500]{Information systems~Nearest-neighbor search}

\keywords{Similarity search; Near Neighbor; Locality Sensitive
   Hashing; Fairness; Sampling}

\maketitle

\section{Introduction}
In recent years, following a growing concern about the fairness of the
algorithms and their bias toward a specific population or
feature~\cite{hps-eosl-16, c-fpdis-17, msp-bdras-16,
   kleinberg2017human}, there has been an increasing interest in
building algorithms that achieve (appropriately defined)
\emph{fairness} \cite{dwork2012fairness}.
The goal is to remove, or at least minimize, unethical behavior such
as discrimination and bias in algorithmic decision making, as
nowadays, many important decisions, such as college admissions,
offering home loans, or estimating the likelihood of recidivism, rely
on machine learning algorithms.  While algorithms are not inherently
biased, nevertheless, they may amplify the already existing biases in
the data.  Hence, this concern has led to the design of fair
algorithms for many different applications, e.g.,
\cite{donini2018empirical,abdlw-rafc-18,pleiss2017fairness,
   pmlr-v89-chierichetti19a,elzayn2019fair,olfat2018convex,
   chierichetti2017fair,backurs2019scalable,bera2019fair,
   kleindessner2019guarantees}.

There is no unique definition of fairness~(see \cite{hps-eosl-16} and
references therein), but different formulations that depend on the
computational problem at hand, and on the ethical goals we aim for.
Fairness goals are often defined in the political context of
socio-technical systems~\cite{Whi16}, and have to be seen in an
interdisciplinary spectrum covering many fields outside computer
science~\cite{Selbst19}.
In particular, researchers have studied both {\em group
   fairness}\footnote{The concept is denoted as statistical fairness
   too, e.g.,~\cite{c-fpdis-17}.}  (where demographics of the
population are preserved in the outcome), and {\em individual
   fairness} (where the goal is to treat individuals with similar
conditions similarly) \cite{dwork2012fairness}.  The latter concept of
``equal opportunity'' requires that people who can achieve a certain
advantaged outcome, such as finishing a university degree, or paying
back a loan, have equal opportunity of being able to get access to it
in the first place. %

Bias in the data used for training machine learning algorithms is a
monumental challenge in creating fair
algorithms~\cite{huang2007correcting, torralba2011unbiased,
   zafar2017fairness, c-fpdis-17}. Here, we are interested in a
somewhat different problem of handling the bias introduced by the data
structures used by such algorithms.
Specifically, data structures may introduce bias in the data stored in
them, and the way they answer queries, because of the way the data is
stored and how it is being accessed.  It is also possible that some
techniques for boosting performance, like randomization and
approximation that result in non-deterministic behavior, add to the
overall algorithmic bias. For instance, some database indexes for fast
search might give an (unexpected) advantage to some portions of the
input data.  Such a defect leads to selection bias by the algorithms
using such data structures.  It is thus natural to want data
structures that do not introduce a selection bias into the data when
handling queries.
To this end, imagine a data structure that can return, as an answer to
a query, an item out of a set of acceptable answers. The purpose is
then to return uniformly a random item out of the set of acceptable
outcomes, without explicitly computing the whole set of acceptable
answers (which might be prohibitively expensive).

\paragraph{The Near Neighbor Problem.} In this work, we study
similarity search and in particular the near neighbor problem from the
perspective of individual fairness.  Similarity search is an important
primitive in many applications in computer science such as machine
learning, recommender systems, data mining, computer vision, and many
others; see~\cite{sti-nnmlv-06, ai-nohaa-08} for an overview.  One of
the most common formulations of similarity search is the $r$-near
neighbor ($r$-NN) problem, formally defined as follows.  Let
$(\MS,\dist)$ be a metric space where the distance function
$\dist(\cdot, \cdot)$ reflects the (dis)similarity between two data
points. Given a set $\PS \subseteq \MS$ of $n$ points and a radius
parameter $r$, the goal of the $r$-NN problem is to preprocess $\PS$
and construct a data structure, such that for a query point
$q \in \MS$, one can report a point $p \in \PS$, such that
$\dist(p,q) \leq r$ if such a point exists.  As all the existing
algorithms for the \emph{exact} variant of the problem have either
space or query time that depends exponentially on the ambient
dimension of $\MS$, people have considered the approximate variant of
the problem. In the \emph{$c$-approximate near neighbor} (\ANN)
problem, the algorithm is allowed to report a point $p$ whose distance
to the query is at most $cr$ if a point within distance $r$ of the
query exists, for some prespecified constant $c > 1$.

\paragraph{Fair Near Neighbor.} As we will see, common existing data
structures for similarity search have a behavior that introduces bias
in the output. Our goal is to capture and algorithmically remove this
bias from these data structures.
Our goal is to develop a data structure for the $r$-near neighbor
problem where we aim to be fair among ``all the points'' in the
neighborhood, i.e., all points within distance $r$ from the given
query have the same probability to be returned. We introduce and study
the \emph{fair near neighbor} problem: if $\nbrY{\q}{r}$ is the ball
of input points at distance at most $r$ from a query $\q$, we would
like that each point in $\nbrY{\q}{r}$ is returned as near neighbor of
$\q$ with the uniform probability of $1/\nNY{\q}{r}$ where
$\nNY{\q}{r}=|\nbrY{\q}{r}|$.

\paragraph{Locality Sensitive Hashing.} Perhaps the most prominent
approach to get an \ANN data structure for high-dimensional data is
via the Locality Sensitive Hashing (\LSH) framework proposed by Indyk
and Motwani \cite{IndykM98,him-anntr-12}, which leads to sub-linear
query time and sub-quadratic space. In particular, for $\MS=\Re^d$, by
using \LSH one can get a query time of $n^{\rho+o(1)}$ and space
$n^{1+\rho+o(1)}$ where for the $L_1$ distance metric $\rho=1/c$
\cite{IndykM98,him-anntr-12}, and for the $L_2$ distance metric
$\rho=1/c^2 + o_c(1)$ \cite{ai-nohaa-08}.  In the \LSH framework,
which is formally introduced in \secref{lsh}, the idea is to
hash all points using several hash functions that are chosen randomly,
with the property that closer points have a higher probability of
collision than the far points.  Thus, the collision probability
between two points is a decreasing function of their
distance~\cite{Charikar02}. Therefore, the closer points to a query
have a higher probability of falling into a bucket being probed than
far points. Thus, reporting a random point from a random bucket
computed for the query, produces a distribution that is biased by the
distance to the query: closer points to the query tend to have a
higher probability of being chosen.  On the other hand, the uniformity
property required in fair \NN can be trivially achieved by finding
\emph{all} $r$-near neighbor of a query and then randomly selecting
one of them.  This is computationally inefficient since the query time
is a function of the size of the neighborhood.  One contribution in
this paper is the description of much more efficient data structures
that still use LSH in a black-box way.


\paragraph{Applications: When random nearby is better than nearest.}
The bias mentioned above towards nearer points is usually a good
property, but is not always desirable. Indeed, consider the following
scenarios: 
\medskip%
\begin{compactenumI}[leftmargin=0.8cm,itemsep=-0.5ex]
    \item The nearest neighbor might not be the best if the input is
    noisy, and the closest point might be viewed as an
    unrepresentative outlier. Any point in the neighborhood might be
    then considered to be equivalently beneficial. This is to some
    extent why $k$-\NN classification \cite{ell-ca-09} is so effective
    in reducing the effect of noise.
    
    \item However, $k$-\NN works better in many cases if $k$ is large,
    but computing the $k$ nearest-neighbors is quite expensive if $k$
    is large \cite{haaa-spkp-14}. Computing quickly a random nearby
    neighbor can significantly speed-up such classification.
    
    \item If one wants to estimate the number of items with a desired
    property within the neighborhood, then the easiest way to do it is
    via uniform random sampling from the neighborhood. In particular,
    this is useful for density estimation \cite{klk-oknnd-12}.  More
    generally, this can be seen as a special case of query sampling in
    database systems~\cite{Olken1995}, where the goal is to return a
    random sample of the output of a given query, for efficiently
    providing statistics on the query.  This can for example be used
    for estimating aggregate queries (e.g., \texttt{sum} or
    \texttt{count}), see~\cite{Olken1995Survey} for more details.
    Another example for the usefulness is discrimination discovery in
    existing databases~\cite{LuongRT11}: by performing independent
    queries to obtain a sample with statistical significance, we can
    reason about the distribution of attribute types. We could report
    on discrimination if the population counts grouped by a certain
    attribute differ much more than we would expect them to.

    \item We are interested in anonymizing the query
    \cite{a-u4aql-07}, thus returning a random near-neighbor might
    serve as the first line of defense in trying to make it harder to
    recover the query. Similarly, one might want to anonymize the
    nearest-neighbor \cite{qa-eppkn-08}, for applications were we are
    interested in a ``typical'' data item close to the query, without
    identifying the nearest item.

    \item As another application, consider a recommender system used
    by a newspaper to recommend articles to users.  Popular
    recommender systems based on matrix factorization~\cite{KorenBV09}
    give recommendations by computing the inner product similarity of
    a user feature vector with all item feature vectors using some
    efficient similarity search algorithm.  It is common practice to
    recommend those items that have the largest inner product with the
    user.  However, in general it is not clear that it is desirable to
    recommend the ``closest'' articles.  Indeed, it might be desirable
    to recommend articles that are on the same topic but are not
    \emph{too} aligned with the user feature vector, and may provide a
    different perspective~\cite{Abiteboul17}.  As described by
    Adomavicius and Kwon in~\cite{adomavicius2014optimization},
    recommendations can be made more diverse by sampling $k$ items
    from a larger top-$\ell$ list of recommendations at random. Our
    data structures could replace the final near neighbor search
    routine employed in such systems.

    \item Another natural application is simulating a random walk in
    the graph where two items are connected if they are in distance at
    most $r$ from each other. Such random walks are used by some graph
    clustering algorithms \cite{hk-curw-01}.
\end{compactenumI}

\smallskip To the best of our knowledge, previous results focused on
exact near neighbor sampling in the Euclidean space up to three
dimensions \cite{ap-irsra-19,aw-irsr-17,hqt-irs-14,Olken1995}.
Although these results might be extended to $\mathbb{R}^d$ for any
$d>1$, they suffer from the \emph{curse of dimensionality} as the
query time increases exponentially with the dimension, making the data
structures too expensive in high dimensions.  These bounds are
unlikely to be significantly improved since several conditional lower
bounds show that an exponential dependency on $d$ in query time or
space is unavoidable for \emph{exact} near neighbor search (see, e.g.,
\cite{AlmanR15,Williams05}).

\subsection{Problem formulations}
\seclab{problems}

In the following we formally define the variants of the fair \NN
problem that we consider in this paper.  For all constructions
presented, these guarantees hold only in the absence of a failure
event that happens with probability at most $\delta$ for some
arbitrarily small $\delta > 0$.

\begin{definition}[$r$-near neighbor sampling, i.e., {Fair NN with
    dependence}]
    \deflab{nns}%
    Consider a set $\PS\subseteq \X$ of $n$ points in a metric space
    $(\X, \D)$.  The \emph{$r$-near neighbor sampling problem}
    ($r$-\NNS) asks to construct a data structure for $\PS$ to solve
    the following task with probability at least $1 - \delta$: Given
    query $\q$, return a point $\p$ uniformly sampled from the set
    $\nbrY{\q}{r}$. We also refer to this problem as Fair \NN with
    Dependence.
\end{definition}

Observe that the definition above does not require different query
results to be independent.  If the query algorithm is deterministic
and randomness is only used in the construction of the data structure,
the returned near neighbor of a query will always be the same.
Furthermore, the result of a query $\q$ might be correlated with the
result of a different query $\q'$.  This motivates us to extend the
$r$-\NNS problem to the scenario where we aim at independence.

\begin{definition}[$r$-near neighbor independent sampling, i.e., {Fair
    \NN}] \deflab{nnis} Consider a set $\PS\subseteq \X$ of $n$
    points in a metric space $(\X, \D)$.  The \emph{$r$-near neighbor
       independent sampling problem} ($r$-\NNIS) asks to construct a
    data structure for~$\PS$ that for any sequence of up to $n$
    queries $\q_1, \q_2,\ldots, \q_n$ satisfies the following
    properties with probability at least $1 - \delta$:
    \begin{enumerate}
        \item For each query $\q_i$, it returns a point
        $\textnormal{OUT}_{i,\q_i}$ uniformly sampled from
        $\nbrY{\q_i}{r}$;
        \item The point returned for query $\q_i$, with $i>1$, is
        independent of previous query results. That is, for any
        $\p\in \nbrY{\q_i}{r}$ and any sequence
        $\p_1,\ldots,\p_{i-1}$, we have that
        \begin{equation*}
            \Pr[ \textnormal{OUT}_{i,\q} {=} \p \mid \textnormal{OUT}_{i-1,\q_{i - 1}} {=} \p_{i - 1}, \ldots, \textnormal{OUT}_{1,\q_1} {=} \p_1] = \frac{1}{\nNY{\q}{r}}.
        \end{equation*}
    \end{enumerate}
    We also refer to this problem as \emph{Fair \NN}.
\end{definition}
We note that in the low-dimensional
setting~\cite{hqt-irs-14,aw-irsr-17,ap-irsra-19}, the $r$-near
neighbor independent sampling problem is usually called
\emph{independent range sampling} (IRS).  Next, motivated by
applications, we define two approximate variants of the problem that
we study in this work.  More precisely, we slightly relax the fairness
constraint, allowing the probabilities of reporting a neighbor to be
an ``almost uniform" distribution.

\begin{definition}[{\AFNN}]
    Consider a set $\PS\subseteq \X$ of $n$ points in a metric space
    $(\X, \D)$.  The \emph{\AFNN} problem asks to construct a data
    structure for~$\PS$ that for any query $\q$, returns each point
    $\p\in \nbrY{\q}{r}$ with probability $\mu_\p$ where $\mu$ is an
    approximately uniform probability distribution:
    $\pNY{\q}{r}/ (1+\eps) \leq \mu_\p \leq (1+\eps)\pNY{\q}{r}$,
    where $\pNY{\q}{r} = 1/\nNY{\q}{r}$.  We again assume the same
    independence assumption as in \defref{nnis}.
\end{definition}

Next, we allow the algorithm to report an almost uniform distribution
from an \emph{approximate} neighborhood of the query.

\begin{definition}[{\AFANN}]
    Consider a set $\PS\subseteq \X$ of $n$ points in a metric space
    $(\X, \D)$.  The \emph{\AFANN} problem asks to construct a data
    structure for~$\PS$ that for any query $\q$, returns each point
    $\p\in S'$ with probability $\mu_\p$ where
    $\prb/(1+\eps) \leq \mu_\p \leq (1+\eps)\prb$, where $S'$ is a
    point set such that
    $\nbrY{\q}{r}\subseteq S' \subseteq \nbrY{\q}{cr}$, and
    $\prb = 1/|S'|$.  We again assume the same independence assumption
    as in \defref{nnis}.
\end{definition}

\subsection{Our results}
\seclab{results} We propose several solutions to the different
variants of the Fair \NN problem.  Our solutions make use of the
\LSH framework~\cite{IndykM98} and we denote by $\dsS(n,c)$ the space
usage and by $\dsQ(n,c)$ the running time of a standard \LSH data
structure that solves the $c$-\ANN problem in the space $(\X, \D)$.
\begin{itemize}
    \item \secref{fair:nn:dependent} describes a solution to
    the Fair \NN problem with dependence with expected running time
    $\tldO(\dsQ(n,c) + n(\q, cr) - n(\q, r))$ and space
    $\dsS(n,c) + O(n)$.  The data structure uses an independent
    permutation of the data points on top of a standard LSH data
    structure and inspects buckets according to the order of points
    under this permutation.  See \thmref{fair:nn:dependent} for
    the exact statement.
    \item In \secref{approx:fair:nn} we provide a data structure for
    \AFANN that uses space $\dsS(n,c)$ and whose query time is
    $\tldO(\dsQ(n,c))$, both in expectation and also with high
    probability (using slightly different bounds). See \thmref{approx-neighborhood} for the exact
    statement.
    \item \secref{independent:fair:nn} shows how to solve the
    Fair \NN problem in expected query time
    $\tldO(\dsQ(n,c) + n(\q, cr) / n(\q, r))$ and space usage
    $O(\dsS(n,c))$.  Each bucket is equipped with a count-sketch and
    the algorithm works by repeatedly sampling points within a certain
    window from the permutation.  See
    \thmref{fair:nn:independent} for the exact statement.
    \item In \secref{tableau} we introduce an easy-to-implement
    nearly-linear space data structure based on the locality-sensitive
    filter approach put forward
    in~\cite{AndoniLRW17,christiani2017framework}.  As each input
    point appears once in the data structure, the data structure can
    be easily adapted to solve the Fair \NN problem.
    While conceptually simple, it does not use \LSH as a black-box
    and works only for some distances: we describe it for similarity
    search under inner product, although it can be adapted to some
    other metrics (like Euclidean and Hamming distances) with standard
    techniques.  See \thmref{tableau:fair:nn} for the exact
    statement.
    \item Lastly, in \secref{evaluation} we present an
    empirical evaluation of (u{}n)fairness in traditional recommendation
    systems on real-world datasets, and we then analyze the additional
    computational cost for guaranteeing fairness.  More precisely, we
    compare the performance of our algorithms with the algorithm that
    uniformly picks a bucket and reports a random point, on five
    different datasets using both Euclidean distance and Jaccard
    similarity.  Our empirical results show that while the standard
    \LSH algorithm fails to fairly sample a point in the neighborhood
    of the query, our algorithms produce empirical distributions which
    are much closer to the uniform distribution.  We further include a
    case study highlighting the unfairness that might arise in special
    situations when considering \AFANN.
\end{itemize}

We remark that for the approximate variants, the dependence of our
algorithms on $\eps$ is only $O(\log (1/\eps))$.  While we omitted the
exact poly-logarithmic factors in the list above, they are generally
lower for the approximate versions.  Furthermore, these methods can
be embedded in the existing LSH method to achieve unbiased query
results in a straightforward way.  On the other hand, the exact
methods will have higher logarithmic factors and use additional data
structures.

\subsection{Data structure for sampling from a sub-collection of sets}

In order to obtain our results, we first study a more generic problem
in \secref{union-of-sets}: given a collection $\Family$ of
sets from a universe of $n$ elements, a query is a sub-collection
$\FamilyA\subseteq \Family$ of these sets and the goal is to sample
(almost) uniformly from the union of the sets in this sub-collection.
We also show how to modify the data structure to handle outliers in
\secref{outliers}, as it is the case for \LSH, as the
sampling algorithm needs to ignore such points once they are reported
as a sample.  This will allow us to derive most of the results
concerning variants of Fair \NN in \secref{s:f:nn} as
corollaries from these more abstract data structures.

\bigskip\noindent%
\textbf{Applications.}  Here are a few examples of applications of a
data structure that provides uniform samples from a union of sets:
\smallskip%
\begin{compactenumA}
    \item Given a subset $\setA$ of vertices in the graph, randomly
    pick (with uniform distribution) a neighbor to one of the vertices
    of $\setA$. This can be used in simulating disease spread
    \cite{ke-nem-05}.  \smallskip
    \item Here, we use a variants of the data structure to implement
    Fair \NN.

    \smallskip%
    \item Uniform sampling for range searching \cite{hqt-irs-14,
       aw-irsr-17, ap-irsra-19}. Indeed, consider a set of points,
    stored in a data structure for range queries. Using the above, we
    can support sampling from the points reported by several queries,
    even if the reported answers are not disjoint.
\end{compactenumA}
\smallskip%
Being unaware of any previous work on this problem, we believe this
data structure is of independent interest.

\subsection{Discussion of Fairness Assumptions}
In the context of our problem definition we assume---as do many papers
on fairness-related topics---an implicit world-view described by
Friedler \etal \cite{FriedlerSV16} as ``what you see is what you
get''. WYSIWYG means that a certain distance between individuals in
the so-called ``construct space'' (the true merit of individuals) is
approximately represented by the feature vectors in ``observed
space''. As described in their paper, one has to subscribe to this
world-view to achieve certain fairness conditions. Moreover, we
acknowledge that our problem definition requires to set a threshold
parameter $r$ which might be internal to the dataset. This problem
occurs frequently in the machine learning community, e.g., when score
thresholding is applied to obtain a classification result. Kannan
\etal discuss the fairness implications of such threshold approaches
in~\cite{Kannan}.

We stress that the $r$-near neighbor independent sampling problem
might not be {\bf\emph{the}} fairness definition in the context of
similarity search.  Instead, we think of it as a suitable starting
point for discussion, and acknowledge that the application will often
motivate a suitable fairness property.  For example, in the case of a
recommender system, we might want to consider a weighted case where
closer points are more likely to be returned. As discussed earlier,
and exemplified in the experimental evaluation, a standard LSH
approach does not have such guarantees despite its monotonic collision
probability function.  We leave the weighted case as an interesting
direction for future work.

\dbVer{
\subsection{Related Work}

The different problem variants discussed in \secref{problems} were 
first introduced in~\cite{nipsanon} (for approximate fair near neighbor search) 
and~\cite{podsanon} (for fair near neighbor search).
Moreover, \cite{nipsanon} also introduced the more abstract problem of sampling uniformly from a sub-collection of sets. 
Both papers present algorithms that achieve running time bounds that are roughly similar to the main statements presented in \secref{results}. 
In this paper, we extend the work in \cite{podsanon} by describing variants that work in more abstract setting of sampling from a sub-collection.
Furthermore, we discuss high probability bound on the running time of the query algorithm in \secref{independent:fair:nn}.
This papers simplifies the algorithms to solve approximate fair near neighbors that were described in \cite{nipsanon}. 
Lastly, this paper presents a unified experimental view on the problem of sampling a fair near neighbor. 
}


\section{Preliminaries}
\seclab{prelims}

\textbf{Set representation.} %
Let $\ground$ be an underlying ground set of $n$ objects (i.e.,
elements). In this paper, we deal with sets of objects. Assume that
such a set $\setA\subseteq \ground$ is stored in some reasonable data
structure, where one can insert delete, or query an object in constant
time. Querying for an object $\obj\in \ground$, requires deciding if
$\obj \in \setA$. Such a representation of a set is straightforward to
implement using an array to store the objects, and a hash table.  This
representation allows random access to the elements in the set, or
uniform sampling from the set.

If hashing is not feasible, one can just use a standard dictionary
data structure -- this would slow down the operations by a logarithmic
factor.

\medskip

\noindent
\textbf{Subset size estimation.}  We need the following standard
estimation tool, \cite[Lemma 2.8]{bhrrs-eeiso-17}.

\begin{lemma}
    \lemlab{est:set}%
    Consider two sets $\SC \subseteq \SA$, where $n = \cardin{\SA}$.
    Let $\epsA, \BadProb \in (0,1)$ be parameters, such that
    $\BadProb < 1/ \log n$. Assume that one is given an access to a
    membership oracle that, given an element $x \in \SA$, returns
    whether or not $x \in \SC$. Then, one can compute an estimate $s$,
    such that
    $(1-\epsA)\cardin{\SC}\leq s \leq (1+\epsA)\cardin{\SC}$, and
    computing this estimates requires
    $O( (n/\cardin{\SC}) \epsA^{-2} \log \BadProb^{-1})$ oracle
    queries. The returned estimate is correct with probability
    $\geq 1 - \BadProb$.
\end{lemma}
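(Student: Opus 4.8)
The plan is the standard ``geometric guessing with uniform sampling'' estimator, and the hypothesis $\BadProb < 1/\log n$ will enter precisely when I collapse a union bound over $O(\log n)$ guesses into the stated $\log \BadProb^{-1}$ factor. Write $N = \cardin{\SC}$ for the (unknown) quantity to be estimated; I may assume $N \ge 1$ (otherwise the claimed query bound is vacuous) and $\epsA \le 1/2$ (an estimator for general $\epsA \in (0,1)$ follows from the one for $\epsA = 1/2$). Let $p = N/n$, so a uniform sample from $\SA$ lies in $\SC$ with probability exactly $p$, checkable with one oracle query. Fix absolute constants $C_1 \le C_2$ to be chosen, put $L = \log\!\pth{(\log_2 n + 1)/\BadProb}$, and set $\tau = C_1 \epsA^{-2} L$. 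For $j = 0,1,2,\dots$ let $g_j = n/2^j$ and $m_j = \ceil{\,C_2\,(n/g_j)\,\epsA^{-2} L\,}$; at stage $j$ draw $m_j$ independent uniform samples from $\SA$, let $X_j$ count how many lie in $\SC$ (one oracle query each), stop at the first $j$ with $X_j \ge \tau$, and return $s = n X_j / m_j$.

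Next I would fix a single good event. Each $X_j$ is a sum of $m_j$ i.i.d.\ Bernoulli$(p)$ variables with $\Ex{X_j} = m_j p \ge C_2 (N/g_j)\epsA^{-2} L$, so $\Ex{X_j} \ge (C_2/C_1)\tau$ whenever $g_j \le N$, while $\Ex{X_j} \le \tau/4$ once $g_j$ exceeds a suitable constant multiple of $N$. Choosing $C_1$ (and then $C_2 \ge C_1$) large enough, Chernoff bounds give, for each fixed $j$: (i) if $\Ex{X_j} \le \tau/4$ then $\Prob{X_j \ge \tau} \le \BadProb/\pth{2(\log_2 n + 1)}$; and (ii) if $\Ex{X_j} \ge \tau/4$ then $\Prob{\cardin{X_j - \Ex{X_j}} > \epsA \Ex{X_j}} \le \BadProb/\pth{2(\log_2 n + 1)}$, using that in this regime $\Ex{X_j} \ge \tau/4 = \Omega(\epsA^{-2} L)$. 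Let $\Event$ be the event that the applicable conclusion holds for every stage $j \le \lceil \log_2 n \rceil$; since there are at most $\log_2 n + 1$ such stages, $\Prob{\Event} \ge 1 - \BadProb$.

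Then I would check that on $\Event$ the algorithm behaves as wanted, deterministically. If it stops at stage $j$ then $X_j \ge \tau$, which by (i) rules out $\Ex{X_j} \le \tau/4$; hence (ii) applies and gives $X_j \in (1 \pm \epsA)\Ex{X_j}$, so $s = n X_j / m_j = N \cdot \pth{X_j / \Ex{X_j}} \in (1 \pm \epsA) N$ — the required accuracy. Moreover the algorithm stops no later than stage $j^\star = \ceil{\log_2 (n/N)} \le \lceil \log_2 n \rceil$: there $g_{j^\star} \le N$, so $\Ex{X_{j^\star}} \ge (C_2/C_1)\tau$ and, by (ii), $X_{j^\star} \ge (1-\epsA)(C_2/C_1)\tau \ge \tau$ once $C_2/C_1 \ge 2$ (recall $\epsA \le 1/2$). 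Writing $j$ for the stopping stage, $g_j \ge g_{j^\star} > N/2$, and because the $m_i$ grow geometrically in $i$ the total number of oracle queries is $\sum_{i=0}^{j} m_i = O(m_j) = O\!\pth{(n/g_j)\,\epsA^{-2} L} = O\!\pth{(n/N)\,\epsA^{-2} L}$.

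Finally, $\BadProb < 1/\log n$ gives $L = \log\!\pth{(\log_2 n + 1)/\BadProb} = \log(\log_2 n + 1) + \log \BadProb^{-1} = O(\log \BadProb^{-1})$, so the query bound becomes $O\!\pth{(n/\cardin{\SC})\,\epsA^{-2}\log \BadProb^{-1}}$, matching the statement, and the whole procedure succeeds with probability $\Prob{\Event} \ge 1 - \BadProb$. The only real obstacle is the bookkeeping of constants so that, on $\Event$, the stopping guess $g_j$ is pinned between two constant multiples of $N$: the upper bound $g_j = O(N)$ is what forces $\Ex{X_j} = \Omega(\epsA^{-2} L)$ and hence makes the $(1 \pm \epsA)$-concentration of (ii) kick in (accuracy of $s$), while $g_j = \Omega(N)$ is what caps the query count; both fall out of (i)--(ii) once the thresholds $\tau$ and $m_j$ are aligned. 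Everything else is a routine Chernoff-and-union-bound computation, which I would not spell out.
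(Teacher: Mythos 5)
Your proof is correct. Note that the paper itself does not prove this lemma at all --- it imports it verbatim as a ``standard estimation tool'' from the cited reference (Lemma~2.8 of \cite{bhrrs-eeiso-17}) --- so there is no in-paper argument to compare against; your geometric-guessing estimator with a stopping threshold $\tau$, per-stage Chernoff bounds, and a union bound over the $O(\log n)$ guesses is the standard way this result is established, and your use of the hypothesis $\BadProb < 1/\log n$ to absorb the $\log\log n$ union-bound overhead into $\log \BadProb^{-1}$ is exactly where that assumption is needed.
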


\noindent
\textbf{Weighted sampling.}  We need the following standard data
structure for weighted sampling.

\begin{lemma}\lemlab{ds-tree}
    Given a set of objects $\FamilyB = \brc{ \obj_1, \ldots, \obj_t}$,
    with associated weights $w_1,\ldots, w_t$, one can preprocess them
    in $O(t)$ time, such that one can sample an object out of
    $\FamilyB$.  The probability of an object $\obj_i$ to be sampled
    is $w_i / \sum_{j=1}^t w_j$. In addition the data structure
    supports updates to the weights. An update or sample operation
    takes $O( \log t)$ time.
\end{lemma}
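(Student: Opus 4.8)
The plan is to store the objects at the leaves of a balanced binary tree in which every internal node caches the sum of the weights in its subtree; this is the textbook partial-sums (``Fenwick-style'') tree, and all three operations fall out of maintaining that single invariant.

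First I would build the tree. Create $t$ leaves, with leaf $i$ holding object $\obj_i$ and weight $w_i$, and fix the shape to be a balanced binary tree of height $O(\log t)$ over these leaves (if it is convenient to pad the number of leaves up to the next power of two, this at most doubles the size and does not matter asymptotically). Then compute all subtree sums in one bottom-up pass: the value stored at an internal node is the sum of the values of its two children. The tree has $O(t)$ nodes, each touched once, so construction takes $O(t)$ time, and the root ends up storing $W=\sum_{j=1}^{t} w_j$.

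Next, sampling. Draw a uniform real $u\in[0,W)$ and descend from the root: at an internal node with children $\ell$ and $r$ whose cached sums are $W_\ell$ and $W_r$, go to $\ell$ if $u<W_\ell$, and otherwise replace $u$ by $u-W_\ell$ and go to $r$; stop at a leaf and return its object. A routine induction on the depth shows that the descent reaches leaf $i$ exactly when $u$ lands in the half-open subinterval of $[0,W)$ of length $w_i$ associated with that leaf, so $\obj_i$ is returned with probability $w_i/W$, as claimed. The descent visits one node per level, hence costs $O(\log t)$. For an update that changes $w_i$ to a new value $w_i'$, walk from leaf $i$ up to the root and add $\Delta = w_i' - w_i$ to the cached sum at each of the $O(\log t)$ ancestors (equivalently, recompute each ancestor from its two children); this restores the invariant in $O(\log t)$ time.

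There is no genuinely hard step here; the only points that deserve a word of care are that the tree shape must be kept balanced (which is free, since the object set is given in advance so we can fix the shape at construction time) and that the preprocessing must be the single bottom-up pass rather than $t$ successive insertions, in order to hit $O(t)$ rather than $O(t\log t)$. If one wishes to avoid real-number arithmetic, when the weights are integral one can instead draw a uniform integer in $\{0,\dots,W-1\}$ and compare against the integer subtree sums; this changes nothing in the analysis.
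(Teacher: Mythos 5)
Your proposal is correct and matches the paper's construction: a balanced binary tree with cached subtree weight sums, built in one $O(t)$ bottom-up pass, with root-to-leaf updates and a top-down descent for sampling. The only cosmetic difference is that you descend using a single uniform draw $u\in[0,W)$ while the paper flips a fresh coin at each internal node with bias $w(u_1)/w(u)$; these induce the same distribution on leaves.
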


\begin{proof}
    Build a balanced binary tree $\Tree$, where the objects of
    $\FamilyA$ are stored in the leaves.  Every internal node $u$ of
    $\Tree$, also maintains the total weight $w(u)$ of the objects in
    its subtree. The tree $\Tree$ has height $O( \log t)$, and weight
    updates can be carried out in $O( \log t)$ time, by updating the
    path from the root to the leaf storing the relevant object.

    Sampling is now done as follows -- we start the traversal from the
    root. At each stage, when being at node $u$, the algorithm
    considers the two children $u_1,u_2$. It continues to $u_1$ with
    probability $w(u_1)/ w(u)$, and otherwise it continues into
    $u_2$. The object sampled is the one in the leaf that this
    traversal ends up at.
\end{proof}%

\noindent
\textbf{Sketch for distinct elements.}\seclab{sketch} In
\secref{union-of-sets-independent} we will use sketches for
estimating the number of distinct elements.  Consider a stream of $m$
elements $x_1,\ldots, x_m$ in the domain $[n] = \{1,\ldots, n\}$ and
let $F_0$ be the number of distinct elements in the stream (i.e., the
zeroth-frequency moment).  Several papers have studied sketches (i.e.,
compact data structures) for estimating $F_0$.  For the sake of
simplicity we use the simple sketch in \cite{BarYossefJKST02}, which
generalizes the seminal result by Flajolet and Martin
\cite{FlajoletM85}.  The data structure consists of
$\Delta=\BT{\log(1/\delta)}$ lists $L_1,\ldots L_\Delta$; for
$1 \leq w \leq \Delta$, $L_w$ contains the $t = \BT{1/\epsilon^2}$
distinct smallest values of the set
$\{\psi_w(x_i) : 1\leq i \leq m\}$, where
$\psi_w : [n] \rightarrow [n^3]$ is a hash function picked from a
pairwise independent family.  It is shown in \cite{BarYossefJKST02}
that the median $\hat F_0$ of the values
$tn^3/v_0, \ldots, tn^3/v_\Delta$, where $v_w$ denotes the $t$\th
smallest value in $L_w$, is an $\epsilon$-approximation to the number
of distinct elements in the stream with probability at least
$1-\delta$: that is,
$(1-\epsilon) F_0 \leq \hat F_0 \leq (1+\epsilon) F_0$.  The data
structure requires $\BO{\epsilon^{-2} \log m \log (1/\delta)}$ bits
and $\BO{\log(1/\epsilon) \log m \log (1/\delta)}$ query time.  A nice
property of this sketch is that if we split the stream in $k$ segments
and we compute the sketch of each segment, then it is possible to
reconstruct the sketch of the entire stream by combining the sketches
of individual segments (the cost is linear in the sketch size).

\section{Data structure for sampling from the union of sets}
\seclab{union-of-sets}

\noindent
\textbf{The problem.}  Assume you are given a data structure that
contains a large collection $\Family$ of sets of objects.  In total,
there are $n = \cardin{\bigcup \Family}$ objects.  The sets in
$\Family$ are not necessarily disjoint. The task is to preprocess the
data structure, such that given a sub-collection
$\FamilyA \subseteq \Family$ of the sets, one can quickly pick
uniformly at random an object from the set
\begin{math}
    {\textstyle \bigcup} \FamilyA%
    :=%
    \bigcup_{\setA \in \FamilyA} \setA.
\end{math}

\noindent
\textbf{Naive solution.}  The naive solution is to take the sets under
consideration (in $\FamilyA$), compute their union, and sample
directly from the union set ${\textstyle \bigcup} \FamilyA$. Our
purpose is to do (much) better -- in particular, the goal is to get a
query time that depends logarithmically on the total size of all sets
in $\FamilyA$.

\noindent
\textbf{Parameters.} The query is a family
$\FamilyA \subseteq \Family$, and define
$m = \Cardin{\FamilyA} := \sum_{\setA \in \FamilyA} \cardin{\setA}$
(which should be distinguished from $\nSets = \cardin{\FamilyA}$ and
from $N= \cardin{\bigcup \FamilyA}$).

\noindent
\textbf{Preprocessing.}
For each set $\setA \in \Family$, we build the set representation
mentioned in the preliminaries section.  In addition, we assume that
each set is stored in a data structure that enables easy random access
or uniform sampling on this set (for example, store each set in its
own array). Thus, for each set $\setA$, and an element, we can decide
if the element is in $\setA$ in constant time.

\noindent
\textbf{Variants.}
In the
same way as there were multiple fairness definitions in
\secref{problems}, we can wish for a one-shot sample from
$\bigcup \FamilyA$ (allowing for \emph{dependence}) or for
\emph{independent} results. Moreover, sample probabilities can be
\emph{exact} or \emph{approximate}.  Since all elements are valid
elements to return, there is no difference between an \emph{exact} or
\emph{approximate} neighborhood.  This will be the topic of
\secref{outliers}.

\noindent
\textbf{Outline.}
The approaches discussed in the following use two ideas: (i) using a
random permutation of the universe to introduce a natural order of the
elements and (ii) using rejection sampling to introduce randomness
during the query to guarantee an (approximately) equal output
probability.

The first approach in \secref{union-of-sets-dependent} uses
only the random permutation and results in one-shot uniform sample,
lacking independence.  The data structures in
\secref{uniform}--\secref{almost:uniform} build on
top of rejection sampling. They provide independence, but introduce
approximate probabilities.  Finally, the data structure in
\secref{union-of-sets-independent} makes use of both ideas to
produce an independent sample with exact probabilities.

\subsection{Uniform sampling with dependence}
\seclab{union-of-sets-dependent}
We start with a simple data structure for sampling a uniform point
from a collection of sets, i.e., given a sub-collection $\FamilyA$,
sample a point in $\bigcup \FamilyA$ uniformly at random.  Since all
randomness is in the preprocessing of the data structure, this variant
does not guarantee independence regarding multiple queries.

The main idea is quite simple.  We initially assign a (random) rank to
each of the $n$ objects in $\bigcup \Family$ using a random
permutation. We sort all sets in $\Family$ according to the ranks of
their objects.  For a given query collection $\FamilyA$, we iterate
through each set and keep track of the element with minimum rank in
$\bigcup \FamilyA$.  This element will be returned as answer to the
query.  The random permutation guarantees that all points in
$\FamilyA$ have the same chance of being returned.

\begin{lemma}
    \lemlab{set:dependent}
    Let $N = \cardin{\bigcup \FamilyA}$, $g = \cardin{\FamilyA}$, and
    $m = \sum_{X \in \FamilyA} \cardin{X}$.  The above algorithm
    samples an element $x \in \bigcup \FamilyA$ according to the
    uniform distribution in time $O(g)$.
\end{lemma}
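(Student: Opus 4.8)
The plan is to establish the two assertions of the lemma separately: the $O(g)$ running time, which is immediate from the preprocessing, and the uniformity of the returned element, which follows from the exchangeability of a uniformly random permutation.

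For the running time, recall that during preprocessing each set $X \in \Family$ is stored sorted by the ranks of its objects. Consequently the object of minimum rank in $X$ is the first element of this sorted order and can be read off in $O(1)$ time. Given the query collection $\FamilyA$ (handed to the algorithm as a list of $g = \cardin{\FamilyA}$ sets), the algorithm retrieves the minimum-rank object of each of these $g$ sets in constant time, maintains a running minimum over these $g$ candidates, and returns the object of overall smallest rank in $\bigcup \FamilyA$. This takes $O(g)$ time in total, with no dependence on $m$ or $N$.

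For correctness, let $x_1, \ldots, x_N$ be the distinct objects of $\bigcup \FamilyA$, and let $\pi$ be the uniformly random permutation of the $n$ objects of $\bigcup \Family$ fixed in preprocessing, so that $x_j$ has rank $\pi(x_j)$. The element returned is the $x_j$ minimizing $\pi(x_j)$. Since $\pi$ is uniform over all permutations, the linear order it induces on the fixed subset $\{x_1, \ldots, x_N\}$ is uniform over all $N!$ orderings; in particular, each $x_j$ is equally likely to be the one of smallest rank among $x_1, \ldots, x_N$. Hence the algorithm returns each object of $\bigcup \FamilyA$ with probability exactly $1/N$, i.e., uniformly at random. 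The ranks are pairwise distinct, so the minimizer is unique and the output is well defined.

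There is no real obstacle here; the only two points worth stating explicitly are (i) that the per-set minimum is available in $O(1)$ time precisely because each set was pre-sorted by rank, which is what keeps the query time at $O(g)$ rather than depending on $m = \sum_{X \in \FamilyA} \cardin{X}$, and (ii) that, since all randomness resides in the single permutation $\pi$ chosen once at preprocessing time, the guarantee is only marginal: the answers to two different query collections may be correlated, and repeating the same query always returns the same object. This is exactly the ``with dependence'' caveat in the statement, and it is the reason the subsequent sections reintroduce randomness at query time via rejection sampling.
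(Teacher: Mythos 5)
Your proof is correct and follows exactly the paper's own argument: the per-set minimum is available in $O(1)$ because each set is pre-sorted by rank, and uniformity follows since each element of $\bigcup \FamilyA$ is equally likely to receive the smallest rank under the random permutation. The additional remarks on the dependence caveat are accurate but not needed for the lemma itself.
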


\begin{proof}
    The algorithm keeps track of the element of minimum rank among the
    $g$ different sets.  Since all of them are sorted during
    preprocessing, this takes time $O(1)$ per set.  The sample is
    uniformly distributed because each element has the same chance of
    being the smallest under the random permutation.
\end{proof}

If we repeat the same query on the same sub-collection $\FamilyA$, the
above data structure always returns the same point: if we let $OUT_i$
denote the output of the $i$\th sample from $\FamilyA$, we have that
$\PR{OUT_i = x | OUT_1=x_1}$ is 1 if $x=x_1$ and 0 otherwise.  We now
extend the above data structure to get independent samples when we
repeat the query on the \emph{same} sub-collection $\FamilyA$, that is
\begin{align*}
  \PR{OUT_i =x | OUT_{i-1}=x_{i-1}, \ldots OUT_{1}=x_{1}}= \PR{OUT_i = x} = 1/N
\end{align*}
We add to each set in $\Family$ a priority queue which supports key
updates, using ranks as key.  For each point $x\in \bigcup \Family$,
we keep a pointer to all sets (and their respective priority queue)
containing $x$.  At query time, we search the point in sets $\FamilyA$
with minimum rank, as in the previous approach.  Then, just before
returning the sample, we apply a small random perturbation to ranks
for ``destroying'' any relevant information that can be collected by
repeating the query.  The perturbation is obtained by applying a
random swap, similar to the one in the Fisher-Yates shuffle
\cite{Knuth97}: let $r_x$ be the rank of $x$; we randomly select a
rank $r$ in $\{r_k,\ldots n\}$ and let $y$ be the point with rank $r$;
then we swap the ranks of $x$ and $y$ and update accordingly the
priority queues.  We have the following lemma, where $\delta$ denotes
the maximum number of sets in $\Family$ containing a point in
$\bigcup \Family$.

\begin{lemma}
    \lemlab{rep_query_dependent}
    Let $n = \cardin{\bigcup \Family}$,
    $N = \cardin{\bigcup \FamilyA}$, $g = \cardin{\FamilyA}$, and
    $\delta=\max_{x\in \bigcup \Family} |\{A \in \Family | x \in A\}|$
    Assume to repeat $k$ times the above query procedure on the
    sub-collection $\FamilyA$, and let $OUT_i$ denote the output of
    the $i$\th iteration with $1 \leq i \leq k$.  Then, for any
    $x\in \bigcup\FamilyA$, we have that $\PR{OUT_1 = x}=1/N$ and
    $\PR{OUT_i = x | OUT_{i-1}=x_{i-1}, \ldots OUT_{1}=x_{1}} =
    \PR{OUT_1 = x} = 1/N$ for any $i>1$.  Each query requires
    $\BO{(g+\delta) \log n}$ time.
\end{lemma}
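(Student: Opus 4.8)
The plan is to separate the claim into three parts: (i) the first query is uniform; (ii) after the random perturbation, the rank configuration is again a uniformly random permutation of $[n]$ restricted appropriately, so that subsequent queries are again uniform and independent of the history; and (iii) the per-query running time bound. Part (i) is immediate from \lemref{set:dependent}: before any query, the ranks form a uniformly random permutation, so the element of minimum rank inside $\bigcup\FamilyA$ is uniform over the $N$ elements of that union.

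For the independence claim, the key observation is that the random swap is exactly one step of a Fisher--Yates shuffle, so it preserves the property that the ranks are a uniformly random bijection with the universe. More precisely, I would argue by induction on $i$ that, conditioned on the entire history $OUT_1 = x_1, \dots, OUT_{i-1} = x_{i-1}$, the current rank assignment is distributed uniformly among all permutations consistent with that history; but in fact the perturbation is designed so that after the swap there is \emph{no} residual information — the returned point $x$ had the minimum rank $r_x$ among $\bigcup\FamilyA$, and we swap $r_x$ with a uniformly chosen rank $r \in \{r_x, \dots, n\}$ (here I read $r_k$ in the statement as $r_x$, the rank of the returned element). The standard Fisher--Yates argument shows that conditioned on knowing which element currently sits at each rank $< r_x$, the element now occupying rank $r_x$ is uniform over the remaining $n - r_x + 1$ elements, and more generally the whole suffix is a uniform random permutation of the remaining elements. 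Since the query only reveals the identity of the minimum-rank element in $\bigcup\FamilyA$, and this minimum rank is "consumed" and re-randomized, the posterior on the rank configuration relevant to the next query is again uniform — hence $\PR{OUT_i = x \mid \text{history}} = 1/N$. The cleanest way to phrase this is: the map "state $\mapsto$ (output, new state)" has the property that conditioned on the output, the new state is distributed as a fresh uniform permutation (over the appropriate set), so the outputs are i.i.d.\ uniform on $\bigcup\FamilyA$.

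For the running time: each query scans the $g$ priority queues to extract the global minimum rank, costing $O(g \log n)$ (an extract-min or find-min per queue). The element $x$ returned belongs to at most $\delta$ sets, and in each such set we must update its key after the swap; the swapped partner $y$ also belongs to at most $\delta$ sets, each requiring a key-update in a priority queue of size $O(n)$, for a total of $O(\delta \log n)$. Summing gives $O((g + \delta)\log n)$ per query, as claimed.

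The main obstacle I anticipate is making the conditioning argument in part (ii) fully rigorous: one must be careful that the swap partner $y$ is chosen among ranks $\{r_x, \dots, n\}$ rather than among elements, and verify that this still yields a valid Fisher--Yates-type step that erases exactly the information leaked by returning the argmin — i.e., that an adversary who sees $x_1, \dots, x_{i-1}$ learns nothing about the current positions of elements among ranks $\geq$ (the consumed rank) beyond what a uniform permutation would give. I would handle this by stating a clean invariant ("at the start of each query, the ranks of the not-yet-returned elements, conditioned on the query history, are a uniformly random permutation of their rank-slots") and checking it is preserved by one query-plus-perturbation step.
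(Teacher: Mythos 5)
Your proposal is correct and follows essentially the same route as the paper's proof: argue that the Fisher--Yates-style swap of the returned element's rank with a uniformly chosen rank in $\{r_x,\ldots,n\}$ re-randomizes the configuration so that each element of $\bigcup\FamilyA$ is again equally likely to be the minimum, and charge $O(g\log n)$ for the extract-min over the $g$ priority queues plus $O(\delta\log n)$ for updating the queues of the at most $2\delta$ sets containing $x$ or $y$. Your reading of the typo ``$r_k$'' as $r_x$ is also the intended one.
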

\begin{proof}
    Let $L$ be the set of points in $\bigcup\Family$ with ranks larger
    than $r_x$. We have that $|L|=n-r_x$ and
    $\bigcup\FamilyA\setminus\{x\} \subseteq L$. Before the swap, the
    ranks of points in $L$ are unknown and all permutations of points
    in $L$ are equally likely.  After the swap, each point in
    $L\bigcup \{x\}$ has rank $r_x$ with probability $1/(n-r_x+1)$,
    independently of previous random choices.  Moreover, each point in
    $\bigcup\FamilyA$ has probability $1/N$ to be the point in
    $\bigcup\FamilyA$ with smaller rank after the swap.

    By assuming that each priority queue can be updated in
    $\BO{\log n}$ time, the point with minimum rank can be extracted in $\BO{g \log n}$ time, and the final rank shuffle requires
    $\BO{\delta \log n}$ time as we need to update the priority queues
    of the at most $2\delta$ sets containing point $x$ or point $y$.
\end{proof}

We remark that the re{}randomization technique is only restricted to
single element queries: over time all elements in $\FamilyA$ get
higher and higher ranks: this means that for another collection
$\FamilyA'$, which intersects $\FamilyA$, the elements in
$\bigcup \FamilyA \setminus \bigcup \FamilyA'$ become more and more
likely to be returned. The next sections provide slightly more
involved data structures that guarantee independence even among
different queries.

\subsection{Uniform sampling via exact degree computation}
\seclab{uniform}

The query is a family $\FamilyA \subseteq \Family$.  The
\emphi{degree} of an element $x \in \bigcup \FamilyA$, is the number
of sets of $\FamilyA$ that contains it -- that is,
$\degY{\FamilyA}{x} = \cardin{\DegY{\FamilyA}{x}}$, where
$ \DegY{\FamilyA}{x}%
=%
\Set{ \setA \in \FamilyA}{ x \in \setA }.  $
The algorithm repeatedly does the following:
\begin{compactenumI}
    \regVer{\smallskip}%
    \item \itemlab{s:sample}%
    Picks one set from $\FamilyA$ with probabilities proportional to
    their sizes. That is, a set $\setA \in \FamilyA$ is picked with
    probability $\cardin{\setA} / m$.
    
    \item \itemlab{b:sample}%
    It picks an element $x \in \setA$ uniformly at random.
    
    \item Computes the degree $\degC = \degY{\FamilyA}{x}$.
    
    \item Outputs $x$ and stop with probability $1/\degC$. Otherwise,
    continues to the next iteration.
\end{compactenumI}

\begin{lemma}
    \lemlab{q:exact}%
    Let $N = \cardin{\bigcup\FamilyA}$, $\nSets = \cardin{\FamilyA}$,
    and $m = \sum_{X \in \FamilyA} \cardin{X}$. The above algorithm
    samples an element $x \in \bigcup \FamilyA$ according to the
    uniform distribution. The algorithm takes in expectation
    $O( \nSets m/N ) = O( \nSets^2 )$ time. The query time is
    $O(\nSets^2 \log N)$ with high probability.
\end{lemma}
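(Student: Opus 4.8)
The plan is to decouple correctness from the running-time analysis. For correctness I will show that in a single iteration every element $x\in\bigcup\FamilyA$ is output with the same probability, namely $1/m$; uniformity then follows by conditioning on the (geometrically distributed) iteration in which the algorithm halts. For the running time I will bound the number of iterations by a geometric tail bound and multiply by the cost of one iteration, then use $m\le\nSets N$ to simplify.

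\textbf{Correctness.} Fix $x\in\bigcup\FamilyA$ and consider one iteration. In Steps~\itemref{s:sample}--\itemref{b:sample} the candidate equals $x$ exactly when the set drawn in Step~\itemref{s:sample} is one of the $\degY{\FamilyA}{x}$ sets of $\FamilyA$ containing $x$ and then $x$ is chosen in Step~\itemref{b:sample}; the probability of drawing a particular such $\setA$ is $\cardin{\setA}/m$ and the probability of then picking $x$ from it is $1/\cardin{\setA}$, so the candidate equals $x$ with probability $\sum_{\setA\in\DegY{\FamilyA}{x}} \frac{\cardin{\setA}}{m}\cdot\frac{1}{\cardin{\setA}} = \frac{\degY{\FamilyA}{x}}{m}$. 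The candidate is then output with probability $1/\degY{\FamilyA}{x}$, so the iteration outputs $x$ and halts with probability exactly $1/m$, independent of $x$. Summing over the $N$ elements of $\bigcup\FamilyA$, an iteration halts with probability $N/m$ (a valid probability, since $m=\sum_{x}\degY{\FamilyA}{x}\ge N$), and conditioned on halting the output is uniform on $\bigcup\FamilyA$; since iterations use fresh, independent randomness, the final output is uniform.

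\textbf{Running time.} Let $T$ be the number of iterations; it is geometric with success probability $N/m$, so $\Ex{T}=m/N$. One iteration costs: $O(\log\nSets)$ for the weighted choice in Step~\itemref{s:sample} (via \lemref{ds-tree} on the $\nSets$ set sizes, built once per query in $O(\nSets)$ time), $O(1)$ for Step~\itemref{b:sample} using the array representation, $O(\nSets)$ to compute $\degC=\degY{\FamilyA}{x}$ by testing membership of $x$ in each of the $\nSets$ sets in $O(1)$ time, and $O(1)$ for the coin flip --- dominated by the $O(\nSets)$ degree computation. Thus the expected total time is $O(\nSets)+\Ex{T}\cdot O(\nSets)=O(\nSets\, m/N)$. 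Since each $X\in\FamilyA$ satisfies $\cardin{X}\le N$, we have $m\le\nSets N$, i.e.\ $m/N\le\nSets$, which yields the bound $O(\nSets^2)$.

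\textbf{High probability.} Using $m/N\le\nSets$ once more, the per-iteration success probability is $N/m\ge 1/\nSets$, so $\Prob{T\ge k}\le(1-1/\nSets)^{k}\le e^{-k/\nSets}$; choosing $k=\Theta(\nSets\log N)$ makes this $N^{-\Omega(1)}$. Hence with high probability $T=O(\nSets\log N)$ and the total running time is $O(\nSets^2\log N)$. I do not expect a genuine obstacle here: the only points needing care are the per-iteration accounting (ensuring the $O(\nSets)$ degree computation, rather than the sampling steps, dominates) and the two applications of $m\le\nSets N$ that convert the ``$m/N$'' factor into $\nSets$.
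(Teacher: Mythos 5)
Your proposal is correct and follows essentially the same route as the paper: per-round output probability exactly $1/m$ for each element (hence uniformity conditioned on halting), a geometric number of rounds with success probability $N/m$, an $O(\nSets)$ per-round cost dominated by the exact degree computation, and the double-counting bound $m \le \nSets N$ to get $O(\nSets^2)$ and the high-probability version. The only (harmless) difference is that you account explicitly for the weighted set-selection step via the tree of \lemref{ds-tree}, a detail the paper leaves implicit.
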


\begin{proof}
    Observe that an element $x \in \bigcup \FamilyA$ is picked by step
    \itemref{b:sample} with probability $\alpha = \degX{x}/m$. The
    element $x$ is output with probability $\beta = 1/ \degX{x}$. As
    such, the probability of $x$ to be output by the algorithm in this
    round is $\alpha \beta = 1/ \Cardin{\FamilyA}$. This implies that
    the output distribution is uniform on all the elements of
    $\bigcup \FamilyA$.
    
    The probability of success in a round is $N/m$, which implies that
    in expectation $m/N$ rounds are used, and with high probability
    $O((m/N) \log N)$ rounds. Computing the degree
    $\degY{\FamilyA}{x}$ takes $O( \cardin{\FamilyA})$ time, which
    implies the first bound on the running time. As for the second
    bound, observe that an element can appear only once in each set of
    $\FamilyA$, which readily implies that
    $\degX{y} \leq \cardin{\FamilyA}$, for all
    $y \in \bigcup \FamilyA$.
\end{proof}%

\subsection{Almost uniform sampling via degree approximation}

The bottleneck in the above algorithm is computing the degree of an
element. We replace this by an approximation.

\begin{definition}
    Given two positive real numbers $x$ and $y$, and a parameter
    $\eps \in (0,1)$, the numbers $x$ and $y$ are
    \emphi{$\eps$-approximation} of each other, denoted by
    $x \aprxEps y$, if $x/(1+\eps) \leq y \leq x(1+\eps)$ and
    $y/(1+\eps) \leq x \leq
    y(1+\eps)$.
\end{definition}
In the approximate version, given an item $x \in \bigcup \FamilyA$, we
can approximate its degree and get an improved runtime for the
algorithm.

\begin{lemma}%
    \lemlab{almost-uniform}%
    The input is a family of sets $\Family$ that one can preprocess in
    linear time.  Let $\FamilyA\subseteq\Family$ be a sub-family and
    let $N = \cardin{\bigcup\FamilyA}$, $\nSets = \cardin{\FamilyA}$,
    and $\eps \in (0,1)$ be a parameter.  One can sample an element
    $x \in \bigcup \FamilyA$ with almost uniform probability
    distribution.  Specifically, the probability of an element to be
    output is $\aprxEps 1/N$. After linear time preprocessing, the
    query time is $O\pth{ \nSets \eps^{-2} \log N}$, in expectation,
    and the query succeeds with high probability.
\end{lemma}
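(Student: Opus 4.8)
The strategy is to take the exact-degree sampling algorithm of \lemref{q:exact} and replace the exact degree computation in its third step with the subset-size estimator of \lemref{est:set}. The algorithm becomes: (i) pick a set $\setA\in\FamilyA$ with probability $\cardin{\setA}/m$; (ii) pick $x\in\setA$ uniformly; (iii) compute an $(1+\eps)$-approximation $\widetilde{\degC}$ of $\degC = \degY{\FamilyA}{x}$ using \lemref{est:set} with the membership oracle ``is $x\in\setA'$?'' over the $\nSets$ sets of $\FamilyA$; (iv) output $x$ and stop with probability $1/\widetilde{\degC}$, otherwise repeat. I would first argue correctness of the output distribution, then bound the expected running time, then deal with the failure probability of the estimator.

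\textbf{Correctness of the distribution.} As in \lemref{q:exact}, the probability that a given round picks $x$ in step (ii) is $\degC/m$ where $\degC=\degX{x}$. Conditioned on that, $x$ is output with probability $1/\widetilde{\degC}$. Since the estimator's output $\widetilde{\degC}$ is a random variable depending only on the randomness of that round's estimation (and $x$), the per-round output probability of $x$ is $(\degC/m)\cdot\Ex{1/\widetilde{\degC}}$. The key point is that $\Ex{1/\widetilde{\degC}}$ is the same up to $(1+\eps)$ factors for all $x$ with the same degree, and more importantly $\widetilde{\degC}\aprxEps\degC$, so $1/\widetilde{\degC}\aprxEps 1/\degC$ and hence $(\degC/m)\Ex{1/\widetilde{\degC}}$ is within a $(1+\eps)$ (or $(1+\eps)^2$, which I will absorb by rescaling $\eps$) factor of $1/m$ for every $x$. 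Summing the geometric series over rounds, the overall probability that $x$ is eventually output is this per-round quantity divided by $\sum_{y}(\text{per-round prob of }y)$, and since every per-round probability is within $(1+O(\eps))$ of $1/m$, the normalized output probability is within $(1+O(\eps))$ of $1/N$, i.e.\ $\aprxEps 1/N$ after adjusting constants in $\eps$.

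\textbf{Running time.} In each round, \lemref{est:set} applied with universe size $\nSets=\cardin{\FamilyA}$ and target subset $\DegY{\FamilyA}{x}$ of size $\degC$ uses $O\pth{(\nSets/\degC)\eps^{-2}\log \BadProb^{-1}}$ oracle calls, each answerable in $O(1)$ time by testing membership $x\in\setA'$. Since $x$ is chosen with probability $\degC/m$ in step (ii), the expected cost of a round is $\sum_{x}\tfrac{\degC}{m}\cdot O\pth{\tfrac{\nSets}{\degC}\eps^{-2}\log\BadProb^{-1}} = O\pth{\tfrac{\nSets^2}{m}\eps^{-2}\log\BadProb^{-1}}$; combined with the initial $O(1)$ cost for steps (i)--(ii) this is $O\pth{\tfrac{\nSets^2}{m}\eps^{-2}\log\BadProb^{-1}}$ per round (using $\nSets^2/m\ge 1$ since $m\le\nSets N\le\nSets^2$... actually $m\ge N$, and $\nSets\ge 1$, so the estimator term dominates). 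As in \lemref{q:exact}, the success probability per round is $\approx N/m$ (each round outputs with per-round probability $\approx 1/m$ for each of $N$ elements), so the expected number of rounds is $O(m/N)$, and the total expected query time is $O\pth{\tfrac{\nSets^2}{m}\cdot\tfrac{m}{N}\eps^{-2}\log\BadProb^{-1}} = O\pth{\tfrac{\nSets^2}{N}\eps^{-2}\log\BadProb^{-1}}$. Using $\nSets^2/N\le\nSets$ (since $N\ge\nSets$ holds when... hmm, not in general) — more carefully, $m/N\le\nSets$ always since $m=\sum\cardin{\setA}\le\nSets\cdot\max\cardin{\setA}$ is not quite it either; the cleanest bound is $m \le \nSets N$ is false, but $\degX{x}\le\nSets$ gives $m=\sum_x\degX{x}\le \nSets N$, hence $m/N\le\nSets$ and the total is $O(\nSets\eps^{-2}\log\BadProb^{-1})$ in expectation, matching the claimed $O(\nSets\eps^{-2}\log N)$ once we set $\BadProb$ polynomially small (see below).

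\textbf{Main obstacle: failure probability of the estimator.} This is the delicate point. \lemref{est:set} returns a correct $(1+\eps)$-estimate only with probability $\ge 1-\BadProb$, and it requires $\BadProb<1/\log n$. A wrong estimate in a round corrupts the output distribution. The fix is to set $\BadProb$ to be inverse-polynomially small — say $\BadProb = N^{-c}$ for a suitable constant — which costs only an extra $O(\log N)$ factor in each estimation (giving the $\log N$ in the stated bound) and makes the probability that any estimation is wrong, over the expected $O(\nSets)$-ish rounds, negligible; by a union bound over the (high-probability upper bound on the) number of rounds, with high probability every round's estimate is accurate, and then the analysis above goes through verbatim. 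One should also confirm the number of rounds is $O((m/N)\log N)$ with high probability, exactly as argued in \lemref{q:exact}, so that the union bound is over a polynomial number of events. I would also remark that when $\degC$ is small (e.g.\ $1$) the estimator still works with $O(\nSets\eps^{-2}\log N)$ oracle calls in that round, which is consistent with the bound. The linear-time preprocessing claim is immediate since we only need the set representations and per-element set-membership pointers already assumed in the preprocessing description.
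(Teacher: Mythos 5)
Your algorithm and overall strategy coincide with the paper's: run the sampler of \lemref{q:exact} but replace the exact degree by the estimator of \lemref{est:set}, argue the per-round output probability of each element is $\aprxEps 1/m$, and normalize. Your treatment of the estimator's failure probability (taking $\BadProb$ polynomially small and union-bounding over a high-probability bound on the number of rounds) is actually more explicit than the paper's. Two concrete issues remain in the running-time analysis. First, an arithmetic slip: the expected work in a round is $\sum_{x\in\bigcup\FamilyA} \frac{\degX{x}}{m}\cdot O\pth{\frac{\nSets}{\degX{x}}\eps^{-2}\log N} = O\pth{\frac{N\nSets}{m}\eps^{-2}\log N}$, since there are $N$ summands each equal to $\frac{\nSets}{m}\eps^{-2}\log N$ --- not $\nSets^2/m$. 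With the correct value, multiplying by the $O(m/N)$ expected rounds gives $O(\nSets\eps^{-2}\log N)$ directly, whereas your detour lands you at $\nSets^2/N$, which cannot be bounded by $\nSets$ in general (it would require $\nSets\le N$, and your appeal to $m/N\le\nSets$ does not rescue it).

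Second, and this is a point the paper itself explicitly flags as ``not quite right'': you bound the total expected time by the product of the expected work per round and the expected number of rounds. These are dependent --- a round that samples a low-degree element does more estimation work \emph{and} is more likely to be the terminating round --- so the product of expectations is not a valid bound without further justification. The paper repairs this by writing the running time as $\sum_i Y_{i-1}L_i$, where $L_i$ is the work round $i$ would perform and $Y_{i-1}$ indicates the algorithm has not stopped after $i-1$ rounds; since $L_i$ is independent of $Y_{i-1}$ (the latter depends only on earlier rounds), linearity of expectation together with $\Prob{Y_{i-1}=1}\le(1-\alpha)^{i-1}$ yields the clean bound $O(W/\alpha)$. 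Adding this (or an equivalent Wald-type argument) is needed to make your time bound rigorous; the rest of your proof then goes through as in the paper.
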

\begin{proof}
    Let $m = \Cardin{\FamilyA}$.  Since
    $\degX{x} = \cardin{\DegY{\FamilyA}{x}}$, it follows that we need
    to approximate the size of $\DegY{\FamilyA}{x}$ in
    $\FamilyA$. Given a set $\setA \in \FamilyA$, we can in constant
    time check if $x \in \setA$, and as such decide if
    $\setA \in \DegY{\FamilyA}{x}$. It follows that we can apply the
    algorithm of \lemref{est:set}, which requires
    \begin{math}
        W(x)%
        =%
        O\bigl( \tfrac{\nSets}{ \degX{x}} \eps^{-2} \log N\bigr)
    \end{math}
    time, where the algorithm succeeds with high probability. The
    query algorithm is the same as before, except that it uses the
    estimated degree.

    For $x \in \bigcup \FamilyA$, let $\Event_x$ be the event that the
    element $x$ is picked for estimation in a round, and let
    $\Event_x'$ be the event that it was actually output in that
    round.  Clearly, we have $\ProbCond{\Event_x' }{\Event_x} = 1/d$,
    where $d$ is the degree estimate of $x$. Since
    $d \aprxEps \degX{x}$ (with high probability), it follows that
    $\ProbCond{\Event_x' }{\Event_x} \aprxEps 1/\degX{x}$. Since there
    are $\degX{x}$ copies of $x$ in $\FamilyA$, and the element for
    estimation is picked uniformly from the sets of $\FamilyA$, it
    follows that the probability of any element
    $x \in \bigcup \FamilyA$ to be output in a round is
    \begin{equation*}
        \Prob{\Event_x'}%
        = 
        \ProbCond{\Event_x'}{\Event_x}
        \Prob{\Event_x}
        = 
        \ProbCond{\Event_x'}{\Event_x}
        \frac{ \degX{x}}{ m}  \aprxEps 1/m,
    \end{equation*}
    as $\Event_x' \subseteq \Event_x$.  As such, the probability of
    the algorithm terminating in a round is
    \begin{math}
        \alpha %
        = %
        \sum_{x \in \bigcup \FamilyA} \Prob{\Event_x'}%
        \aprxEps N/m%
        \geq%
        N/2m.
    \end{math}
    As for the expected amount of work in each round, observe that it
    is proportional to
    \begin{equation*}
        W%
        =%
        \sum_{x \in \bigcup \FamilyA} \Prob{\Event_x} W(x)
        =%
        \sum_{x \in \bigcup \FamilyA}
        \frac{\degX{x}}{m}
        \frac{ \nSets }{ \eps^2 \degX{x}} \log N
        =%
        O\pth{ \frac{ n \nSets}{m} \eps^{-2} \log N}.
    \end{equation*}
    
    Intuitively, since the expected amount of work in each iteration
    is $W$, and the expected number of rounds is $1/\alpha$, the
    expected running time is $O( W / \alpha)$. This argument is not
    quite right, as the amount of work in each round effects the
    probability of the algorithm to terminate in the round (i.e., the
    two variables are not independent). We continue with a bit more
    care -- let $L_i$ be the running time in the $i$\th round of the
    algorithm if it was to do an $i$\th iteration (i.e., think about a
    version of the algorithm that skips the experiment in the end of
    the iteration to decide whether it is going to stop), and let
    $Y_i$ be a random variable that is $1$ if the (original) algorithm
    had not stopped at the end of the first $i$ iterations of the
    algorithm.
    
    By the above, we have that
    $y_i = \Prob{Y_{i}=1} = \ProbCond{Y_{i}=1}{Y_{i-1}
       =1}\Prob{Y_{i-1}=1} \leq (1-\alpha)y_{i-1} \leq (1-\alpha)^i$,
    and $\Ex{L_i} = O(W)$. Importantly, $L_i$ and $Y_{i-1}$ are
    independent (while $L_i$ and $Y_i$ are dependent). We clearly have
    that the running time of the algorithm is
    $O \bigl( \sum_{i=1}^\infty Y_{i-1}L_i \bigr)$ (here, we define
    $Y_0 =1$). Thus, the expected running time of the algorithm is
    proportional to
    \begin{align*}
      \Ex{ \Bigl. \smash{\sum_{i} Y_{i-1}L_i }}%
      &=%
        \sum_{i} \Ex{Y_{i-1}L_i }
        =%
        \sum_{i} \Ex{Y_{i-1}} \Ex{L_i }
        \leq%
        W \sum_{i} y_{i-1}
        \leq%
        W \sum_{i=1}^\infty (1-\alpha)^{i-1}%
        =%
        \frac{W}{\alpha}\\
      &=%
        O( \nSets \eps^{-2} \log N),
    \end{align*}
    because of linearity of expectations, and since $L_i$ and
    $Y_{i-1}$ are independent.
\end{proof}

\begin{remark}\remlab{whp:approx}
    The query time of \lemref{almost-uniform} deteriorates to
    $O\pth{ \nSets \eps^{-2} \log^2 T}$ if one wants the bound to hold
    with high probability, where $T$ is some (rough) upper bound on
    $N$. This follows by restarting the query algorithm if the query
    time exceeds (say by a factor of two) the expected running time.
    A standard application of Markov's inequality implies that this
    process would have to be restarted at most $O(\log T)$ times, with
    high probability. Here, one can set $T$ to be $n \cdot \nSets $ as
    a rough upper bound on $N$.
\end{remark}

\begin{remark}
    The sampling algorithm is independent of whether or not we fully
    know the underlying family $\Family$ and the sub-family
    $\FamilyA$. This means the past queries do not affect the sampled
    object reported for the query $\FamilyA$. Therefore, the almost
    uniform distribution property holds in the presence of several
    queries and independently for each of them.
\end{remark}

\subsection{Almost uniform sampling via simulation}
\seclab{almost:uniform}

It turns out that one can avoid the degree approximation stage in the
above algorithm, and achieve only a polylogarithmic dependence on
$\eps^{-1}$.  To this end, let $x$ be the element picked. We need to
simulate a process that accepts $x$ with probability $1/\degX{x}$.

We start with the following natural idea for estimating $\degX{x}$ --
probe the sets randomly (with replacement), and stop in the $i$\th
iteration if it is the first iteration where the probe found a set
that contains $x$. If there are $\nSets$ sets, then the distribution
of $i$ is geometric, with probability $p = \degX{x}/\nSets$. In
particular, in expectation, $\Ex{i} = \nSets/\degX{x}$, which implies
that $\degX{x} = \nSets /\Ex{i}$. As such, it is natural to take
$\nSets/i$ as an estimation for the degree of $x$. Thus, to simulate a
process that succeeds with probability $1/\degX{x}$, it would be
natural to return $1$ with probability $i/\nSets$ and $0$
otherwise. Surprisingly, while this seems like a heuristic, it does
work, under the right interpretation, as testified by the following.%

\begin{lemma}
    \lemlab{first}%
    Assume we have $\nSets$ urns, and exactly $\degC > 0$ of them, are
    non-empty. Furthermore, assume that we can check if a specific urn
    is empty in constant time. Then, there is a randomized algorithm,
    that outputs a number $Y \geq 0$, such that $\Ex{Y}=1/\degC$. The
    expected running time of the algorithm is $O(\nSets/\degC)$.
\end{lemma}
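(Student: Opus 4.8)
The plan is to read $Y$ directly off the geometric waiting-time experiment sketched just before the lemma, omitting the extra coin flip that the informal discussion (correctly) flags as problematic. The algorithm is: repeatedly pick an urn independently and uniformly at random (i.e., probing \emph{with replacement}); let $I$ be the number of probes performed up to and including the first one that lands in a non-empty urn; then output $Y = I/\nSets$ and halt. Since $\degC > 0$, each probe succeeds with positive probability, so the experiment terminates almost surely, $Y$ is well defined, and $Y \ge 1/\nSets > 0$.

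First I would establish the distributional fact. Exactly $\degC$ of the $\nSets$ urns are non-empty, so each probe independently lands in a non-empty urn with probability $p = \degC/\nSets$; hence $I$ is geometric with parameter $p$ and $\Ex{I} = 1/p = \nSets/\degC$. By linearity of expectation, $\Ex{Y} = \Ex{I}/\nSets = 1/\degC$, which is exactly the claim. The one point worth spelling out --- the ``right interpretation'' alluded to in the text --- is that $Y$ is emitted as a nonnegative real number rather than used as a success probability, so it is harmless that $I/\nSets$ may exceed $1$; this is precisely why the naive rule ``output $1$ with probability $I/\nSets$'' is wrong while outputting the number $I/\nSets$ is fine.

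The running-time bound is then immediate: each probe costs $O(1)$ by hypothesis and the algorithm performs exactly $I$ probes, so its expected running time is $O(\Ex{I}) = O(\nSets/\degC)$. I do not expect any genuine obstacle here: the only subtlety is interpretive (not conflating $Y$ with an acceptance probability), and the only computation is the mean of a geometric random variable. The statement asks only for an expected-time bound, so no further work (such as a high-probability bound via restarts and Markov's inequality) is needed.
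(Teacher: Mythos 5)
Your proposal is correct and is essentially the paper's own proof: the same probe-with-replacement algorithm, outputting $I/\nSets$ where $I$ is the geometric waiting time with parameter $p = \degC/\nSets$, giving $\Ex{Y} = 1/\degC$ and expected running time $O(\nSets/\degC)$. The only cosmetic difference is that the paper evaluates the geometric series explicitly via $\sum_{i\ge 1} i x^i = x/(1-x)^2$, whereas you invoke the known mean of a geometric random variable and linearity of expectation.
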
%
\begin{proof}
    The algorithm repeatedly probes urns (uniformly at random), until
    it finds a non-empty urn. Assume it found a non-empty urn in the
    $i$\th probe. The algorithm outputs the value $i/\nSets$ and
    stops.
    
    Setting $p = \degC/\nSets$, and let $Y$ be the output of the
    algorithm. we have that
    \begin{equation*}
        \Ex{\bigl. Y}%
        =%
        \sum_{i=1}^\infty \frac{i}{\nSets} (1-p)^{i-1} p
        =%
        \frac{p}{\nSets(1-p)}\sum_{i=1}^\infty i (1-p)^{i} 
        =%
        \frac{p}{\nSets(1-p)} \cdot \frac{1-p}{p^2}
        =%
        \frac{1}{p\nSets}%
        =%
        \frac{1}{\degC},
    \end{equation*}
    using the formula $\sum_{i=1}^\infty ix^i = {x}/{(1-x)^2}$.

    The expected number of probes performed by the algorithm until it
    finds a non-empty urn is $1/p = \nSets/\degC$, which implies that
    the expected running time of the algorithm is $O(\nSets/\degC)$.~
\end{proof}%

The natural way to deploy \lemref{first}, is to run its algorithm to
get a number $y$, and then return $1$ with probability $y$. The
problem is that $y$ can be strictly larger than $1$, which is
meaningless for probabilities. Instead, we backoff by using the value
$y/\Delta$, for some parameter $\Delta$. If the returned value is
larger than $1$, we just treat it at zero.  If the zeroing never
happened, the algorithm would return one with probability
$1/(\degX{x}\Delta)$ -- which we can use to our purposes via,
essentially, amplification. Instead, the probability of success is
going to be slightly smaller, but fortunately, the loss can be made
arbitrarily small by taking $\Delta$ to be sufficiently large.

\begin{lemma}
    \lemlab{second}%
    There are $\nSets$ urns, and exactly $\degC > 0$ of them are not
    empty. Furthermore, assume one can check if a specific urn is
    empty in constant time. Let $\BadProb \in (0,1)$ be a
    parameter. Then one can output a number $Z \geq 0$, such that
    $Z \in [0,1]$, and
    \begin{math}
        \Ex{Z} \in I= \bigl[ \tfrac{1}{\degC\Delta} - \BadProb,
        \tfrac{1}{\degC\Delta} \bigr],\Bigr.%
    \end{math}
    where
    $\Delta = \ceil{\smash{\ln \BadProb^{-1}} } + 4 = \Theta(\log
    \BadProb^{-1})$. The expected running time of the algorithm is
    $O( \nSets/\degC)$.

    Alternatively, the algorithm can output a bit $X$, such that
    $\Prob{X=1} \in I$.
\end{lemma}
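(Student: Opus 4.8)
The plan is to run the algorithm of \lemref{first} once, obtaining the value $y = i/\nSets$, where $i$ is the number of probes until the first non-empty urn is hit; since $i$ is geometric with success probability $p = \degC/\nSets$, \lemref{first} gives $\Ex{y} = 1/\degC$ and expected running time $O(\nSets/\degC)$ for producing $y$. I then \emph{truncate}: output $Z = y/\Delta$ if $y \le \Delta$, and $Z = 0$ otherwise. By construction $Z \in [0,1]$, and since truncation costs $O(1)$, the running time bound follows immediately; the same bound holds for the bit version below.

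Next I would handle the expectation. The upper inequality $\Ex{Z} \le \Ex{y}/\Delta = 1/(\degC\Delta)$ is immediate since $Z \le y/\Delta$ pointwise. For the lower inequality, decompose $\Ex{Z} = \Ex{y}/\Delta - \tfrac{1}{\Delta}\Ex{y\cdot\mathbf{1}\{y>\Delta\}}$; writing $y = i/\nSets$ turns the task into showing
\begin{equation*}
  \frac{1}{\Delta\nSets}\,\Ex{\,i\cdot\mathbf{1}\{i>\Delta\nSets\}\,} \;\le\; \BadProb ,
\end{equation*}
after which $\Ex{Z} \ge 1/(\degC\Delta) - \BadProb$, i.e., $\Ex{Z}\in I$, as desired.

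For the tail bound I would use that $t := \Delta\nSets$ is an integer (both $\Delta$ and $\nSets$ are), so $\Ex{i\cdot\mathbf{1}\{i>t\}} = \sum_{i=t+1}^{\infty} i(1-p)^{i-1}p$, which by the identity $\sum_{i=k}^{\infty} i q^{i-1} = \tfrac{k q^{k-1}}{1-q} + \tfrac{q^{k}}{(1-q)^{2}}$ equals $(1-p)^{t}\bigl(t+1+\tfrac{1-p}{p}\bigr)$. Since $(1-p)^{t}\le e^{-pt} = e^{-\degC\Delta}$ (as $p\nSets=\degC$), $\tfrac{1-p}{p}<\tfrac{1}{p}=\tfrac{\nSets}{\degC}$, and $\degC,\nSets\ge 1$, dividing by $\Delta\nSets$ gives
\begin{equation*}
  \frac{1}{\Delta\nSets}\,\Ex{\,i\cdot\mathbf{1}\{i>\Delta\nSets\}\,} \;\le\; \frac{\Delta+2}{\Delta}\,e^{-\degC\Delta} \;\le\; \frac{3}{2}\,e^{-\Delta} \;\le\; \frac{3}{2}\,e^{-4}\BadProb \;<\; \BadProb ,
\end{equation*}
using $\Delta\ge 4$, $\degC\ge 1$, and $\Delta\ge\ln\BadProb^{-1}+4$. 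Finally, for the alternative bit version, after computing $Z$ I would draw an independent uniform $U\in[0,1]$ and output $X = 1$ iff $U\le Z$, so that $\Prob{X=1} = \Ex{Z}\in I$.

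The only real computation is the tail estimate, and the point to be careful about is confirming that the polynomial factor $t+1+\tfrac{1-p}{p}$, once divided by $\Delta\nSets$, is genuinely absorbed by $e^{-\degC\Delta}$ for every $\degC,\nSets\ge 1$ --- which is exactly what the additive constant ``$+4$'' in $\Delta = \ceil{\ln\BadProb^{-1}}+4$ buys.
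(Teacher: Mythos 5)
Your proof is correct and follows the same algorithm and overall decomposition as the paper: run the probing procedure of \lemref{first}, rescale by $\Delta$, zero out the outcome when $i > \nSets\Delta$, and write $\Ex{Z}$ as $1/(\degC\Delta)$ minus a truncated tail. The only place you diverge is in how that tail is estimated. The paper groups the terms $i \in [\nSets j+1, \nSets(j+1)]$ into blocks, bounds the $j$-th block by $\tfrac{j+1}{\Delta}e^{-\degC j}$, and then sums the blocks via a ratio-test argument showing consecutive blocks decay by a factor of at least $2$, arriving at the bound $4e^{-\Delta}\le\BadProb$. You instead evaluate the tail $\sum_{i>t} i(1-p)^{i-1}p$ in closed form via the differentiated geometric series and bound the resulting expression $(1-p)^{t}\bigl(t+1+\tfrac{1-p}{p}\bigr)/(\Delta\nSets)$ directly by $\tfrac{3}{2}e^{-4}\BadProb$. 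Both computations are valid and use the slack provided by the ``$+4$'' in $\Delta$ in the same way; your exact summation is arguably cleaner and gives a slightly sharper constant, while the paper's blocking argument avoids the closed-form identity. The handling of $Z\in[0,1]$, the running time, and the bit-output variant all match the paper.
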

\begin{proof}
    We modify the algorithm of \lemref{first}, so that it outputs
    $i/(\nSets\Delta)$ instead of $i/\nSets$. If the algorithm does
    not stop in the first $\nSets\Delta+1$ iterations, then the
    algorithm stops and outputs $0$. Observe that the probability that
    the algorithm fails to stop in the first $\nSets\Delta$
    iterations, for $p = \degC / \nSets$, is
    \begin{math}
        (1-p)^{\nSets\Delta} \leq \exp\pth{ -\frac{\degC}{\nSets}
           \nSets \Delta}%
        \leq%
        \exp( -\degC \Delta) \leq%
        \exp( - \Delta) \ll \BadProb.
    \end{math}

    Let $Z$ be the random variable that is the number output by the
    algorithm. Arguing as in \lemref{first}, we have that
    \begin{math}
        \Ex{Z} \leq 1/(\degC\Delta).
    \end{math}
    More precisely, we have
    \begin{math}
        \Ex{Z}%
        =%
        \frac{1}{\degC\Delta} - \sum_{i=\nSets\Delta+1}^\infty
        \frac{i}{\nSets \Delta} (1-p)^{i-1} p.
    \end{math}
    Let
    \begin{align*}
      \sum_{i=\nSets j+1}^{\nSets(j+1)} \frac{i}{\nSets} (1-p)^{i-1} p
      &\leq%
        (j+1)\sum_{i=\nSets j+1}^{\nSets(j+1)}  (1-p)^{i-1} p
        =%
        (j+1)(1-p)^{\nSets j}\sum_{i=0}^{\nSets - 1}  (1-p)^{i} p
      \\
      &\leq%
        (j+1)(1-p)^{\nSets j}
        \leq%
        (j+1) \pth{1-\frac{\degC}{\nSets}}^{\nSets j}
        \leq%
        (j+1) \exp \pth{- \degC j }.
    \end{align*}

    Let $g(j) = \frac{j+1}{\Delta} \exp \pth{- \degC j }$.  We have
    that
    \begin{math}
        \Ex{Z}%
        \geq%
        \frac{1}{\degC \Delta} - \beta,
    \end{math}
    where $\beta = \sum_{j=\Delta}^\infty g(j)$.  Furthermore, for
    $j \geq \Delta$, we have
    \begin{equation*}
        \frac{g(j+1)}{g(j)} %
        =%
        \frac
        {  (j+2) \exp \pth{- \degC (j+1) }}
        {  (j+1) \exp \pth{- \degC j }}%
        \leq %
        \pth{1+\frac{1}{\Delta}} e^{-\degC}
        \leq %
        \frac{5}{4} e^{-\degC}
        \leq%
        \frac{1}{2}.
    \end{equation*}
    As such, we have that
    \begin{equation*}
        \beta%
        =%
        \sum_{j=\Delta}^\infty g(j)%
        \leq%
        2 g(\Delta)%
        \leq%
        2 \frac{\Delta+1}{\Delta} \exp \pth{- \degC \Delta }%
        \leq 
        4 \exp \pth{- \Delta }%
        \leq%
        \BadProb,
    \end{equation*}
    by the choice of value for $\Delta$. This implies that
    $\Ex{Z} \geq 1/(\degC\Delta) - \beta \geq 1/(\degC \Delta) -
    \BadProb$, as desired.

    The alternative algorithm takes the output $Z$, and returns $1$
    with probability $Z$, and zero otherwise.
\end{proof}%

\begin{lemma}%
    \lemlab{almost:uniform:2}%
    The input is a family of sets $\Family$ that one preprocesses in
    linear time.  Let $\FamilyA\subseteq\Family$ be a sub-family and
    let $N = \cardin{\bigcup\FamilyA}$, $\nSets = \cardin{\FamilyA}$,
    and let $\eps \in (0,1)$ be a parameter.  One can sample an
    element $x \in \bigcup \FamilyA$ with almost uniform probability
    distribution.  Specifically, the probability of an element to be
    output is $\aprxEps 1/N$. After linear time preprocessing, the
    query time is $O\pth{ \nSets \log (\nSets/\eps)}$, in expectation,
    and the query succeeds, with probability
    $\geq 1 - 1/\nSets^{O(1)}$.
\end{lemma}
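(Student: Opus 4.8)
The plan is to keep the rejection-sampling skeleton from the proof of \lemref{almost-uniform} but to replace its degree-estimation step --- which invoked \lemref{est:set} --- by a single call to the procedure of \lemref{second}, which estimates an inverse degree \emph{in expectation} rather than multiplicatively. At query time I would first build, in $O(\nSets)$ time, an alias table over $\FamilyA$ that samples a set $\setA\in\FamilyA$ with probability $\cardin{\setA}/m$ in $O(1)$ time (equivalently: draw a uniformly random element of the $m$-element multiset obtained by concatenating the sets of $\FamilyA$). Each round then performs: (i) sample such a set $\setA$ and then a uniform $x\in\setA$; (ii) run the algorithm of \lemref{second} with the $\nSets$ ``urns'' being the sets of $\FamilyA$, an urn declared non-empty iff it contains $x$ (a constant-time membership test), and with a slack parameter $\BadProb$ to be fixed; (iii) output $x$ and halt if the returned bit equals $1$, otherwise continue to the next round. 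Writing $\degC=\degX{x}$ for the number of non-empty urns (which is $\ge 1$ since $x\in\bigcup\FamilyA$) and $\Delta=\ceil{\ln\BadProb^{-1}}+4$ as in \lemref{second}, the bit returned in step (ii) equals $1$ with probability in $\bigl[\tfrac{1}{\degX{x}\Delta}-\BadProb,\ \tfrac{1}{\degX{x}\Delta}\bigr]$.

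For the output distribution, observe that in a given round the element $x\in\bigcup\FamilyA$ is reached in step (i) with probability $\degX{x}/m$, hence it is output in that round with probability $\tfrac{\degX{x}}{m}\cdot\ProbCond{\text{bit}=1}{x}$, which lies in $\bigl[\tfrac{1}{m\Delta}-\tfrac{\degX{x}\BadProb}{m},\ \tfrac{1}{m\Delta}\bigr]$. Choosing $\BadProb:=c\eps/(\nSets\Delta)$ for a sufficiently small constant $c$ --- a consistent choice, since then $\Delta=\Theta(\log\BadProb^{-1})=\Theta(\log(\nSets/\eps))$ --- and using $\degX{x}\le\nSets$, this per-round output probability is within a factor $1+O(\eps)$ of $\tfrac{1}{m\Delta}$, uniformly over all $x$. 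Summing over the $N$ distinct elements of $\bigcup\FamilyA$, the per-round halting probability $\alpha$ is within a factor $1+O(\eps)$ of $N/(m\Delta)$, and in particular $\alpha\ge N/(2m\Delta)$; conditioning on halting in a round, the output distribution is within a factor $1+O(\eps)$ of uniform $1/N$, so rescaling $\eps$ by a constant yields the claimed $\aprxEps 1/N$. As no state is modified across queries, independence over a sequence of queries is immediate.

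For the running time I would reuse the $L_i$/$Y_{i-1}$ argument from the proof of \lemref{almost-uniform}. The expected number of rounds is $1/\alpha=O(m\Delta/N)$. The expected work in one round (excluding the halting coin) is dominated by the call to \lemref{second}, which costs $O(\nSets/\degX{x})$ in expectation; averaging over the choice of $x$ it is $W=\sum_{x\in\bigcup\FamilyA}\tfrac{\degX{x}}{m}\cdot O(\nSets/\degX{x})=O(\nSets N/m)$, which also absorbs the $O(1)$ cost of step (i). Since the work $L_i$ in round $i$ is independent of the event $Y_{i-1}$ that the algorithm survived the first $i-1$ rounds, the expected total running time is $O(W/\alpha)=O(\nSets\Delta)=O(\nSets\log(\nSets/\eps))$; here one uses $m\le\nSets N$ to see that both the $O(\nSets)$ query-time setup and the per-round $O(1)$ term (contributing $O(m\Delta/N)$ overall) are subsumed. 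For the high-probability claim I would apply the restart trick of \remref{whp:approx}: abort and restart the entire query whenever its running time exceeds, say, twice the expectation above; by Markov's inequality each attempt independently succeeds with probability $\ge 1/2$, so after $O(\log\nSets)$ attempts the query succeeds with probability $\ge 1-1/\nSets^{O(1)}$, while the expected running time grows only by a constant factor.

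The main obstacle, and the only subtle point, is the bookkeeping in step (ii): \lemref{second} provides merely an \emph{additive} slack $\BadProb$ around $1/(\degX{x}\Delta)$, so one must take $\BadProb$ small enough --- about $\eps/(\nSets\Delta)$ --- to upgrade this to a \emph{multiplicative} $(1+\eps)$ guarantee simultaneously for all elements, and then verify that the induced $\Delta=\Theta(\log\BadProb^{-1})$ remains $\Theta(\log(\nSets/\eps))$, so that the query time depends on $\eps$ only through an $O(\log(\nSets/\eps))$ factor. A minor secondary point is that the per-round overhead must be genuinely $O(1)$ --- hence the alias table --- since an $O(\log\nSets)$-time set sampler, together with the $\Theta(m\Delta/N)$ rounds, would introduce a spurious extra $\log\nSets$ factor into the query time.
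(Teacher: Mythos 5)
Your proposal is correct and follows essentially the same route as the paper's proof: the same rejection-sampling loop using steps \itemref{s:sample}--\itemref{b:sample}, the same invocation of \lemref{second} with $\BadProb = \poly(\eps/\nSets)$ to turn the additive slack into a multiplicative $(1+\eps)$ guarantee via $\degX{x}\le\nSets$, and the same $L_i$/$Y_{i-1}$ running-time argument borrowed from \lemref{almost-uniform}. Your extra care about the $O(1)$ per-round set-sampling and the restart-based high-probability bound only makes explicit what the paper leaves implicit (the latter being the content of \remref{whp:simul}).
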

\begin{proof}
    The algorithm repeatedly samples an element $x$ using steps
    \itemref{s:sample} and \itemref{b:sample} of the algorithm of
    \secref{uniform}. The algorithm returns $x$ if the algorithm of
    \lemref{second}, invoked with $\BadProb = (\eps/\nSets)^{O(1)}$
    returns $1$. We have that $\Delta = \Theta( \log(\nSets/\eps) )$.
    Let $\alpha = 1/(\degX{x}\Delta)$.  The algorithm returns $x$ in
    this iteration with probability $p$, where
    $p \in [\alpha - \BadProb, \alpha]$.  Observe that
    $\alpha \geq 1/(\nSets\Delta)$, which implies that
    $\BadProb \ll (\eps/4 )\alpha$, it follows that
    $p \aprxEps 1/(\degX{x}\Delta)$, as desired.  The expected running
    time of each round is $O(\nSets/\degX{x})$.

    Arguing as in \lemref{almost-uniform}, this implies that each
    round, in expectation takes $O\pth{ N \nSets / m }$ time, where
    $m = \Cardin{\FamilyA}$. Similarly, the expected number of rounds,
    in expectation, is $O(\Delta m/N)$. Again, arguing as in
    \lemref{almost-uniform}, implies that the expected running time is
    $O(\nSets \Delta ) =O( \nSets \log (\nSets/\eps))$.
\end{proof}%

\begin{remark}%
    \remlab{whp:simul}%
    Similar to \remref{whp:approx}, the query time of
    \lemref{almost:uniform:2} can be made to work with high
    probability with an additional logarithmic factor. Thus with high
    probability, the query time is
    $O\pth{ \nSets \log(\nSets/\eps) \log N}$.
\end{remark}

\subsection{Uniform sampling using random ranks}
\seclab{union-of-sets-independent}

In this section, we present a data structure that samples an element
uniformly at random from $\bigcup \FamilyA$ using both ideas from the
previous subsections: we assign a random rank to each object as in
\secref{union-of-sets-dependent}, and use rejection sampling
to provide independent and uniform output probabilities.  Let
$\Lambda$ be the sequence of the $n = \cardin{\bigcup \Family}$ input
elements after a random permutation; the rank of an element is its
position in $\Lambda$.  We first highlight the main idea of the query
procedure.

Let $k\geq 1$ be a suitable value that depends on the collection
$\FamilyA$ and assume that $\Lambda$ is split into $k$ segments
$\Lambda_i$, with $i\in\{0,\ldots, k-1\}$. (We assume for simplicity
that $n$ and $k$ are powers of two.)  Each segment $\Lambda_i$
contains the $n/k$ elements in $\Lambda$ with rank in
$[i\cdot n/ k, (i+1)\cdot n/ k)$.  We denote with
$\lambda_{\FamilyA,i}$ the number of elements from $\bigcup \FamilyA$
in $\Lambda_i$, and with $\lambda \geq \max_i\{\lambda_{\FamilyA,i}\}$
an upper bound on the number of these elements in each segment.  By
the initial random permutation, we have that each segment contains at
most $\lambda=\BT{(N/k)\log n}$ elements from $\bigcup \FamilyA$ with
probability at least $1-1/n^2$.  (Of course, $N$ is \emph{not} known
at query time.)

The query algorithm works in the following three steps in which all
random choices are independent.

\begin{compactenumI}
    \item \itemlab{uniform:s:sample}%
    Select uniformly at random an integer $h$ in $\{0,\ldots, k-1\}$
    (i.e., select a segment $\Lambda_h$);
        
    \item \itemlab{uniform:reject}%
    With probability $\lambda_{\FamilyA,h}/\lambda$ move to
    step~\itemref{uniform:output}, otherwise repeat
    step~\itemref{uniform:s:sample};
    
    \item \itemlab{uniform:output} Return an element uniformly sampled
    among the elements in $\bigcup \FamilyA$ in $\Lambda_h$.
    
\end{compactenumI}

Since each object in $\bigcup \FamilyA$ has a probability of
$1/(k\lambda)$ of being returned in \stepref{uniform:output},
the result is a uniform sample of $\bigcup \FamilyA$.  The algorithm
described above works for all choices of $k$, but a good choice has to
depend on $\FamilyA$ for the following reasons.  On the one hand, the
segments should be small, because otherwise
Step~\itemref{uniform:output} will take too long.  On the other hand,
they have to contain at least one element from $\bigcup \FamilyA$,
otherwise we sample many ``empty'' segments in
Step~\itemref{uniform:s:sample}.  We will see that the number $k$ of
segments should be roughly set to $N$ to balance the trade-off.
However, the number $N$ of distinct elements in $\bigcup \FamilyA$ is
not known.  Thus, we set $k=2\hat{s_{\FamilyA}}$, where
$\hat{s_\FamilyA}$ is a $1/2$-approximation of $N$.  Such an estimate
can be computed by storing a count distinct sketch for each set in
$\Family$.  To compute $\hat{s_\FamilyA}$ we merge the count distinct
sketches of all $g$ sets of $\FamilyA$.  
To compute $\lambda_{\FamilyA, h}$ efficiently, we assume that, at construction time, the elements in each set in
$\Family$ are sorted by their rank.

\begin{lemma}
    \lemlab{union-of-sets-independent}
    Let $N = \cardin{\bigcup \FamilyA}$, $g = \cardin{\FamilyA}$,
    $m = \sum_{X \in \FamilyA} \cardin{X}$, and
    $n = \cardin{\bigcup \Family}$.  With probability at least
    $1-1/n^2$, the algorithm described above returns an element
    $x \in \bigcup \FamilyA$ according to the uniform distribution.
    The algorithm has an expected running time of $O(g\log^2 n)$.
\end{lemma}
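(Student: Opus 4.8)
The plan is to establish the lemma's two conclusions---uniformity with probability $\ge 1-1/n^2$, and expected running time $O(g\log^2 n)$---through a single ``good event'' $\mathcal{E}$: that the merged distinct-elements sketch succeeds, so that $k = 2\hat{s}_\FamilyA = \Theta(N)$, and that no segment contains more than $\lambda$ points of $\bigcup\FamilyA$. First I would pin down $\mathcal{E}$. I would instantiate the sketch of \secref{sketch} with error $1/2$ and failure probability $O(1/n^3)$; then each of the $g$ sets of $\FamilyA$ carries a sketch of $\Delta=\Theta(\log n)$ lists of $O(1)$ values, the $g$ sketches merge in $O(g\log n)$ time, and the result $\hat{s}_\FamilyA$ is a $1/2$-approximation of $N$ with probability $1-O(1/n^3)$, so $k = \Theta(N)$ off a negligible event. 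Put $\lambda = c\log n$ for a suitably large constant $c$. Conditioned on $\hat{s}_\FamilyA$ (hence on $k=\Theta(N)$), and using that the permutation is independent of the sketch, the number of $\bigcup\FamilyA$-points in a fixed segment is hypergeometric with mean $N/k = \Theta(1)$, hence below $c\log n$ except with probability $\ll 1/n^3$ by a Chernoff bound; union-bounding over the $\le n$ segments and adding the sketch's failure probability gives $\Pr[\mathcal{E}]\ge 1-1/n^2$.

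Next I would verify uniformity conditioned on $\mathcal{E}$. Under $\mathcal{E}$ we have $\lambda_{\FamilyA,h}\le\lambda$, so Step~\itemref{uniform:reject} is a genuine coin flip, and in a single pass through Steps~\itemref{uniform:s:sample}--\itemref{uniform:output} a fixed $x\in\bigcup\FamilyA$ lying in segment $\Lambda_h$ is the returned element with probability $\tfrac1k\cdot\tfrac{\lambda_{\FamilyA,h}}{\lambda}\cdot\tfrac1{\lambda_{\FamilyA,h}}=\tfrac1{k\lambda}$, independent of the identity of $x$. Summing over the $N$ elements, a pass reaches Step~\itemref{uniform:output} with probability $N/(k\lambda)>0$; hence the algorithm halts almost surely, and conditioning on the pass that halts it, every $x\in\bigcup\FamilyA$ is equally likely, i.e.\ returned with probability $1/N$.

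For the running time conditioned on $\mathcal{E}$: the per-pass success probability is $N/(k\lambda)=\Theta(1/\log n)$, so the expected number of passes is $O(\log n)$. A pass picks $h$ in $O(1)$ time, then for each $X\in\FamilyA$ binary-searches the rank-sorted array of $X$ to locate the contiguous block $X\cap\Lambda_h$ in $O(\log n)$ time---$O(g\log n)$ total---and merges those $g$ blocks into a hash table in $O\bigl(\sum_{X\in\FamilyA}\cardin{X\cap\Lambda_h}\bigr)$ time, which both computes $\lambda_{\FamilyA,h}$ and, on an accepted pass, supports the uniform draw of Step~\itemref{uniform:output} in $O(\lambda_{\FamilyA,h})=O(\log n)$ further time. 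Over a uniformly random $h$, $\ExExt{h}{\textstyle\sum_{X\in\FamilyA}\cardin{X\cap\Lambda_h}}=m/k$, and since $N\ge\max_{X\in\FamilyA}\cardin{X}\ge m/g$ and $k=\Theta(N)$, this is $O(g)$; thus a pass costs $O(g\log n)$ in expectation. Finally, the choice of $h$ in a pass is independent of the event that the algorithm has not stopped before that pass---the latter depends only on the coins of earlier passes---so, exactly as in the proof of \lemref{almost-uniform}, the expected number of passes and the expected per-pass cost multiply, yielding total expected running time $O(g\log^2 n)$ (the $O(g\log n)$ sketch merge being lower order).

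I expect the per-pass cost bound to be the main obstacle: one has to see that computing the segment degree $\lambda_{\FamilyA,h}$ and sampling within the segment take only $O(g\log n)$ expected time, which rests on the identity $\ExExt{h}{\sum_{X\in\FamilyA}\cardin{X\cap\Lambda_h}}=m/k$ together with the elementary bound $m/N\le g$ (the union is at least as large as the largest member set) and $k=\Theta(N)$---so the whole time analysis hinges on the accuracy of the distinct-count sketch. A secondary subtlety, identical to one in \lemref{almost-uniform}, is the licence to multiply the two expectations despite the dependence between per-pass work and termination; and one should note that the time bound is stated conditioned on $\mathcal{E}$ and can be made unconditional, up to lower-order terms, by aborting a query after suitably many passes.
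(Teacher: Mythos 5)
Your proposal is correct and follows essentially the same route as the paper: condition on the good event that the merged count-distinct sketch gives $k=\Theta(N)$ and that every segment holds $O(\log n)$ elements of $\bigcup\FamilyA$ (union bound giving failure probability $\le 1/n^2$), derive uniformity from the per-pass acceptance probability $1/(k\lambda)$, and multiply $\Theta(\log n)$ expected passes by an $O(g\log n)$ per-pass cost. Your treatment of the per-pass enumeration cost via $\ExExt{h}{\sum_{X}\cardin{X\cap\Lambda_h}}=m/k=O(g)$ and your explicit justification for multiplying the two expectations are slightly more careful than the paper's (which simply bounds each set's contribution per segment by $O(\log n)$ and multiplies), but these are refinements of the same argument.
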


\begin{proof}
    We start by bounding the initial failure probability of the data
    structure.  By a union bound, we have that the following two
    events hold simultaneously with probability at least $1-1/n^2$:
    \begin{enumerate}
        \item Count distinct sketches provide a $1/2$-approximation of
        $N$.  By setting $\delta=1/(2n^2)$ in the count distinct
        sketch construction (see \secref{sketch}), the
        approximation guarantee holds with probability at least
        $1-1/(2n^2)$.
        \item Every segment of size $n/k$ contains no more than
        $\lambda=\BT{\log n}$ elements from $\bigcup \FamilyA$.  As
        elements are initially randomly permuted, the claim holds with
        probability at least $1-1/(2n^2)$ by suitably setting the
        constant in $\lambda=\BT{\log n}$.
    \end{enumerate}

    From now on assume that these events are true.

    Each element in $\bigcup \FamilyA$ has the same probability
    $1/(k\Lambda)$ of being returned in
    \stepref{uniform:output}, so all points are equally likely
    to be sampled.
    Note also that the guarantees are independent of the initial
    random permutation as soon as the two events above hold.  This
    means that the data structure returns a uniform sample from a
    union-of-sets.

    We now focus on the time complexity of the query algorithm.  In
    \stepref{uniform:reject}, $\lambda_{\FamilyA, h}$ is
    computed by iterating through the $g$ sets and collection points
    using a range query on segment $\Lambda_h$.  
    Since elements in each set are sorted by their rank, the range query can be carried out by searching for rank $h n/k$ using a binary search in $O(\log n)$ time, and then enumerating all elements with rank smaller than $(h + 1)n/k$.
    This takes time $O(\log n + o)$ for each set, where $o$ is the output size.
    Since
    each segment contains $O(\log n)$ elements from
    $\bigcup \FamilyA$, one iteration of
    \stepref{uniform:reject} takes time $O(g \log n)$.

    By our choice of $k$, we have that
    $\lambda_{\FamilyA,h}/\lambda = \Theta(1/\log n)$, thus we expect
    to carry out $\Theta(\log n)$ iterations before
    reaching~\itemref{uniform:output}.  Thus, we expect to spend time
    $O(g \log^2 n)$ before reaching \stepref{uniform:output}.
    \stepref{uniform:output} again takes time $O(g \log n)$, with the
    same analysis as above.
\end{proof}

\begin{remark}\remlab{whp:union:of:sets:independent}
    The query time of \lemref{union-of-sets-independent} can be
    made to work with high probability with an additional logarithmic
    factor. Thus with high probability, the query time is
    $O\pth{ \nSets \log^3(n)}$.
\end{remark}

\section{Handling outliers}
\seclab{outliers}

Imagine a situation where we have a marked set of outliers $\OL$. We
are interested in sampling from $\bigcup \FamilyA \setminus \OL$.  We
assume that the total degree of the outliers in the query is at most
$\mOL$ for some prespecified parameter $\mOL$. More precisely, we have
$\degY{\FamilyA}{\OL} = \sum_{x \in \OL} \degY{\FamilyA}{x} \leq
\mOL$.

\subsection{Sampling with Dependence}

We run a variant of the original algorithm from
\secref{union-of-sets-dependent}.  We use a priority queue
\texttt{P{}Q} to keep track of the point with smallest rank in
$\bigcup \FamilyA$.  Initially, for each $G \in \FamilyA$ we add the
pair $(x, G)$ to the priority queue, where $x$ is the element with the
smallest rank in $G$.  As long as the element with the smallest rank
in \PQ is not in $\bigcup \FamilyA \setminus \mathcal{O}$, we iterate
the following: Let $(x, G)$ be the entry extracted by an
\texttt{extract{}Min} operation on \PQ.  Let $y$ be the element in $G$
with the next largest rank.  Insert $(y, G)$ into \PQ. 

\begin{lemma}
\lemlab{outlierdep}
    The input is a family of sets $\Family$ that one can preprocess in
    linear time, and a query is a sub-family $\FamilyA \subseteq \Family$ and 
    a set of outliers $\OL$.  Let
    $N = \vert\bigcup \FamilyA \setminus \OL\vert$ and $g=\vert \FamilyA\vert$.  The above approach samples uniformly at random an element  $x \in \bigcup \FamilyA \setminus \OL$.
    The expected query time is
    $O(\log g (g + \degY{\FamilyA}{\OL} / (N + 1)))$, and it is never
    worse than $O(\log g (g + \degY{\FamilyA}{\OL}))$.
\end{lemma}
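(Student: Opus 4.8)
The plan is to establish three things: that the procedure, upon termination, returns the element of $\bigcup \FamilyA \setminus \OL$ of smallest rank; that this makes the output uniform; and that the claimed time bounds hold, the expected one following from a one-line calculation over the random permutation. The crucial observation for uniformity is that, for a fixed query $(\FamilyA,\OL)$ and a fixed permutation, the returned element is a \emph{deterministic} function of the permutation — namely the element of the fixed set $\bigcup \FamilyA \setminus \OL$ that received the smallest rank — and for a uniformly random permutation the minimum-rank element of any fixed $N$-element set is uniformly distributed over that set. (Throughout we assume $N \ge 1$, as otherwise there is nothing to sample.)

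For the correctness claim I would maintain the invariant that \PQ realizes a ``merge'' of the $g$ rank-sorted lists $\{G : G \in \FamilyA\}$: at every moment it holds one entry $(x,G)$ for each not-yet-exhausted set $G \in \FamilyA$, where $x$ is the smallest-rank element of $G$ whose occurrence has not yet been extracted. Two consequences: the ranks extracted by \texttt{extractMin} form a non-decreasing sequence, and — since the loop body runs only while the current minimum lies in $\OL$ — every extracted occurrence is that of an outlier. Write $\rho = \min\{\mathrm{rank}(y) : y \in \bigcup \FamilyA \setminus \OL\}$ and let $x^\star$ be the element attaining it, so that every element of $\bigcup\FamilyA$ of rank below $\rho$ belongs to $\OL$. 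I would then check that some occurrence of $x^\star$ (via a set $G^\star \ni x^\star$) stays in \PQ until the loop halts: to advance the pointer of $G^\star$ past $x^\star$ one would have to \texttt{extractMin} the entry $(x^\star,G^\star)$ while it is the minimum, which happens only inside the loop and would contradict $x^\star \notin \OL$. Hence the loop stops precisely when the minimum of \PQ equals $x^\star$, and $x^\star$ is returned. The only delicate point is that an element may belong to several sets, so the same rank can be extracted more than once and one must argue that every outlier occurrence $(x,G)$ with $\mathrm{rank}(x) < \rho$ does eventually reach \PQ and get extracted (advance the pointer of $G$ through its sub-$\rho$ elements, all of which are outliers, and use monotonicity of the extracted ranks) — equivalently, that the set of extracted occurrences is exactly $\{(x,G) : G \in \FamilyA,\ x \in G \cap \OL,\ \mathrm{rank}(x) < \rho\}$.

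For the running time, initialization inserts the minimum-rank element of each of the $g$ sets into \PQ, costing $O(g \log g)$, and $|\PQ| \le g$ throughout (each iteration removes one entry and inserts at most one), so every \PQ operation costs $O(\log g)$. By the characterization above, the number of loop iterations equals the number of pairs $(x,G)$ with $G \in \FamilyA$, $x \in G \cap \OL$, and $\mathrm{rank}(x) < \rho$, each iteration costing $O(\log g)$ (one \texttt{extractMin}, one pointer advance in a rank-sorted set, one \texttt{insert}); this is at most $\sum_{x \in \OL} \degY{\FamilyA}{x} = \degY{\FamilyA}{\OL}$, giving the worst-case bound $O(\log g \,(g + \degY{\FamilyA}{\OL}))$. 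For the expectation, fix one outlier occurrence $(x,G)$ with $x \in G \cap \OL$; it is processed iff $x$ precedes all $N$ elements of $\bigcup \FamilyA \setminus \OL$, and since $x \notin \bigcup \FamilyA \setminus \OL$ these together with $x$ form a fixed set of $N+1$ elements, among which $x$ is first with probability $1/(N+1)$ under the random permutation. By linearity of expectation the expected number of iterations is $\degY{\FamilyA}{\OL}/(N+1)$, and since each iteration costs $O(\log g)$ deterministically, the expected query time is $O\!\big(\log g\,(g + \degY{\FamilyA}{\OL}/(N+1))\big)$.

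I expect the main obstacle to be the bookkeeping in the correctness step: formalizing the merge invariant in the presence of elements shared among several sets — so that a rank may be extracted multiple times and a given occurrence must be shown to reach \PQ before the loop halts — and pinning down exactly which occurrences get extracted, since it is precisely the clean description ``$(x,G)$ is extracted iff $x$ is the minimum-rank element of $\{x\} \cup (\bigcup \FamilyA \setminus \OL)$'' that drives the $1/(N+1)$ bound. Everything else (the priority-queue accounting and the linearity-of-expectation step) is routine.
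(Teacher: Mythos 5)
Your proposal is correct and follows essentially the same route as the paper: the paper likewise introduces an indicator $X_{o,i}$ per outlier occurrence, observes that an outlier precedes all $N$ elements of $\bigcup \FamilyA \setminus \OL$ with probability exactly $1/(N+1)$, and concludes by linearity of expectation that the expected number of rounds is at most $g + \degY{\FamilyA}{\OL}/(N+1)$, each round costing $O(\log g)$, with the worst case bounded by $\degY{\FamilyA}{\OL}$ rounds. Your additional verification of the merge invariant and of uniformity of the output is detail the paper leaves implicit, but it does not change the argument.
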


\begin{proof}
    For each $o \in \mathcal{O}$ and each $1 \leq i \leq g$, define
    the random variable $X_{o, i}$ that is 1 if $o$ is present in the
    $i$\th collection of $\FamilyA$ and has a rank smaller than all
    elements $\bigcup \FamilyA \setminus \mathcal{O}$.  By the initial
    random permutation, the probability that an outlier
    $o \in \mathcal{O}$ has a smaller rank than the $N$ elements in
    $\bigcup \FamilyA \setminus \mathcal{O}$ is exactly $1/(N + 1)$.
    Let $R$ be the number of rounds carried out by the query
    algorithm.  By linearity of expectation, we get:
    \begin{align*}
      \Ex{R} 
      & \leq g + \Ex{\sum_{o\in \mathcal{O}} \sum_{ i=1}^{g} X_{o,i}} = g + \frac{\degY{\FamilyA}{\OL}}{N + 1}. 
    \end{align*}
    The lemma follows because each round takes time $O(\log g)$ for
    the priority queue operations.
    Since an outlier cannot be sample twice, the algorithm stops after $\degY{\FamilyA}{\OL}$ rounds in the worst case, with  query time is
    $O((g + \degY{\FamilyA}{\OL}) \log g)$.
\end{proof}

Similarly to Lemma \lemref{rep_query_dependent}, we can extend the above data structure to support output independence if the same query $\FamilyA$ is repeated several times. 
It suffices to repeat the process until a point in $\bigcup \FamilyA \setminus \OL$ is found, and to apply the swap to the returned point.
Note that to efficiently perform swaps each set in $\Family$ should store points in a priority queue with ranks as keys. 
We get the following lemma.

\begin{lemma}\lemlab{ouliersrep}
    The input is a family of sets $\Family$ that one can preprocess in
    linear time. A query is a sub-family $\FamilyA \subseteq \Family$,
    and a set of outliers $\OL$.  Let
    $n = \cardin{\bigcup \Family}$, $N = \vert\bigcup \FamilyA \setminus \OL\vert$, $g=\vert \FamilyA\vert$ and
    $\delta=\max_{x\in \bigcup \Family} |\{A \in \Family | x \in A\}|$.
    Assume to repeat $k$ times the above query procedure on the sub-collection $\FamilyA$, and let $OUT_i$ denote the output of the $i$\th
    iteration with $1 \leq i \leq k$.  Then,  we have that $OUT_i\in \bigcup \FamilyA\setminus \OL$ for any $1\leq i\leq k$ and, for any $x\in \bigcup \FamilyA \setminus \OL$, 
    $\PR{OUT_1 = x}=1/N$ and
    $\PR{OUT_i = x | OUT_{i-1}=x_{i-1}, \ldots OUT_{1}=x_{1}} =
    \PR{OUT_1 = x} = 1/N$ for $i>1$.  The expected query
    time is $\BO{d_\FamilyA(\OL) \log n +  (g+\delta) \log n}$ time.
\end{lemma}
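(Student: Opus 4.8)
The plan is to layer the re{}randomization (``random swap'') idea of \lemref{rep_query_dependent} on top of the outlier-handling algorithm of \lemref{outlierdep}. Concretely, for each of the $k$ repeated queries we run the one-shot procedure of \lemref{outlierdep} on $(\FamilyA,\OL)$, obtaining the point $x\in\bigcup\FamilyA\setminus\OL$ of minimum rank among $\bigcup\FamilyA\setminus\OL$; then, writing $r_x$ for the rank of $x$, we draw a rank $r$ uniformly at random in $\{r_x,r_x+1,\ldots,n\}$, let $y$ be the point currently at rank $r$, swap the ranks of $x$ and $y$, and update the priority queues of the (at most $2\delta$) sets of $\Family$ that contain $x$ or $y$. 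As anticipated in the text preceding the lemma, the preprocessing of \lemref{outlierdep} is augmented so that each set $G\in\Family$ is stored in a priority queue keyed by rank, the permutation $\Lambda$ is kept in an array (for $O(1)$ ``who is at rank $r$'' queries), and each element carries a list of the sets containing it; this is still linear-time preprocessing. That $OUT_i\in\bigcup\FamilyA\setminus\OL$ for every $i$ is immediate, since \lemref{outlierdep} never returns an outlier.

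For the distributional guarantee I would follow the argument of \lemref{rep_query_dependent} almost verbatim. For $i=1$ there is nothing new: \lemref{outlierdep} already returns a uniform element of $\bigcup\FamilyA\setminus\OL$, so $\PR{OUT_1=x}=1/N$. For $i>1$, let $L$ be the set of points of $\bigcup\Family$ whose rank exceeds $r_x$, where $r_x$ is the rank at which query $i-1$ found its output. The one genuinely new observation is that every element of $\bigcup\FamilyA$ with rank smaller than $r_x$ must be an outlier: otherwise the algorithm of \lemref{outlierdep} in query $i-1$ would have extracted and returned such a point rather than $x$. Hence $\bigcup\FamilyA\setminus\OL\subseteq L\cup\{x\}$. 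Conditioned on the transcript of the first $i-1$ queries, the points of $L$ are in uniformly random relative order over the ranks above $r_x$ (this is where the bookkeeping of exactly what each transcript reveals needs care, but it goes through as in \lemref{rep_query_dependent}); the swap at the end of query $i-1$ is a uniformly random transposition between rank $r_x$ and a uniform rank in $\{r_x,\ldots,n\}$, so a short computation, identical to the one in \lemref{rep_query_dependent}, shows it turns this into a uniformly random order of $L\cup\{x\}$ over the ranks $\{r_x,\ldots,n\}$, using only fresh randomness. Since the minimum-rank element of $\bigcup\FamilyA\setminus\OL$ now lies within $L\cup\{x\}$ and that block is uniformly ordered, each of the $N$ points of $\bigcup\FamilyA\setminus\OL$ is equally likely to realize that minimum, independently of the transcript; that is, $\PR{OUT_i=x \mid OUT_{i-1}=x_{i-1},\ldots,OUT_1=x_1}=1/N$.

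For the running time I would bound each query in the worst case, which a fortiori bounds the expectation. The base algorithm of \lemref{outlierdep} performs the $g$ initial insertions into the query priority queue and then one extract-min per round; every round but the last extracts a fresh outlier copy of $\bigcup\FamilyA$ (once extracted, that set's pointer advances past it), so there are at most $\degY{\FamilyA}{\OL}+1$ rounds, hence $O(g+\degY{\FamilyA}{\OL})$ priority-queue operations, each $O(\log n)$. The swap costs $O(1)$ to locate $y$ in $\Lambda$, plus $O(\delta)$ updates of per-set priority queues, each of size at most $n$, so $O(\delta\log n)$. Altogether each query runs in $\BO{\degY{\FamilyA}{\OL}\log n+(g+\delta)\log n}$ time, as claimed.

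The main obstacle is the independence/uniformity step: one must argue that the transcript of the previous queries --- which exposes the returned points, their ranks, every outlier copy extracted along the way, and all the earlier swaps --- leaves the block of elements of rank larger than $r_x$ uniformly ordered, so that the fresh swap at the end of query $i-1$ ``regenerates'' a uniform order of $L\cup\{x\}$. This is exactly the point handled, for single-element queries, in the proof of \lemref{rep_query_dependent}; the only new ingredient is the remark that all small-rank elements of $\bigcup\FamilyA$ revealed by the outlier-skipping loop are themselves outliers, hence do not disturb the conclusion. A minor secondary point: one cannot reuse the sharper expected bound $\degY{\FamilyA}{\OL}/(N+1)$ on the number of ``small'' outliers from \lemref{outlierdep}, because the swaps steadily push the elements of $\bigcup\FamilyA\setminus\OL$ toward larger ranks over repeated queries; hence the running-time bound reverts to the worst-case count $\degY{\FamilyA}{\OL}$.
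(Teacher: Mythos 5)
Your proposal is correct and follows essentially the same route as the paper: run the outlier-skipping minimum-rank query of \lemref{outlierdep} and then apply the swap of \lemref{rep_query_dependent}, with the key (and correctly identified) observation that every element of $\bigcup\FamilyA$ preceding the returned point is an outlier, so the uniformity argument for the block of larger ranks carries over unchanged. The only divergence is in the running-time accounting: the paper charges the expected $\degY{\FamilyA}{\OL}/(N+1)$ outlier rounds per query (whose justification via the initial uniform permutation is, as you note, questionable for $i>1$ since the swaps push the non-outliers toward larger ranks), whereas you retreat to the worst-case count $\degY{\FamilyA}{\OL}$ --- both yield the stated $\BO{\degY{\FamilyA}{\OL}\log n + (g+\delta)\log n}$ bound, and your accounting is the more careful of the two.
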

\begin{proof}
Initially, we need $\BO{g \log n}$ time to find the point in $\bigcup \FamilyA$ with smaller rank.
Then we need to repeat the procedure $\degY{\FamilyA}{\OL}/{(N + 1)}$ times in expectation since  the probability that an outlier
$o \in \mathcal{O}$ has a smaller rank than the $N$ elements in
 $\bigcup \FamilyA \setminus \mathcal{O}$ is  $1/(N + 1)$.
Since each repetition costs $\BO{\log n}$ and the final swap takes $\BO{\delta \log n}$ time, the expected running time follows.
The probabilities follows from \lemref{rep_query_dependent}.
\end{proof}

\subsection{Almost uniform sampling with outliers}

\begin{defn}
    \deflab{eps:uniform}%
    For a set $\PSA$, and a parameter $\eps \in [0,1)$, a sampling
    algorithm that outputs a sample $x \in \PSA$ generates
    \emphi{$\eps$-uniform distribution}, if for any $y \in \PSA$, we
    have that
    \begin{math}
        \displaystyle%
        \frac{1}{(1+\eps) \cardin{\PSA}}%
        \leq%
        \Prob{x =y}
        \leq%
        \frac{1+\eps}{\cardin{\PSA}}.
    \end{math}    
\end{defn}

\begin{lemma}
    \lemlab{outliers}%
    The input is a family of sets $\Family$ that one can preprocess in
    linear time. A query is a sub-family $\FamilyA \subseteq \Family$,
    a set of outliers $\OL$, a parameter $\mOL$, and a parameter
    $\eps \in (0,1)$.  One can either
    \begin{compactenumA}
        \item Sample an element $x \in \bigcup \FamilyA \setminus \OL$
        with $\eps$-uniform distribution.
        \item Alternatively, report that
        $\degY{\FamilyA}{\OL} > \mOL$.
    \end{compactenumA}
    The expected query time is
    $O\bigl( \mOL + \nSets \log (\nSets/\eps) \bigr)$, and the query
    succeeds, with probability $\geq 1-1/\nSets^{O(1)}$, where
    $\nSets = \cardin{\FamilyA}$.
\end{lemma}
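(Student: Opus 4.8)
The plan is to run the rejection-sampling scheme behind \lemref{almost:uniform:2} essentially verbatim on $\FamilyA$, but (i) discard any sampled element that lies in $\OL$, (ii) \emph{delete} the offending (element, set) incidence the first time such an outlier is sampled, so it is never paid for twice, and (iii) report $\degY{\FamilyA}{\OL}>\mOL$ the moment more than $\mOL$ incidences have been deleted. Concretely: after the usual $O(\nSets)$ per-query preprocessing -- a membership structure for $\FamilyA$, and a sampler over $\FamilyA$ whose weight on $\setA$ is its current \emph{effective} size $e(\setA)$, initialized to $\cardin{\setA}$, with each $\setA$ stored as an array supporting $O(1)$ swaps -- the query repeats: draw $\setA\in\FamilyA$ with probability $\propto e(\setA)$ and then $x$ uniformly among the first $e(\setA)$ entries of $\setA$'s array; if $x\in\OL$, swap $x$ to position $e(\setA)$, set $e(\setA)\gets e(\setA)-1$ (updating the sampler), increment a counter \texttt{killed}, and stop, reporting $\degY{\FamilyA}{\OL}>\mOL$, as soon as $\texttt{killed}>\mOL$; otherwise run the algorithm of \lemref{second} with $\BadProb=(\eps/\nSets)^{O(1)}$, using the sets of $\FamilyA$ as the ``urns'' and the test $x\in\setA'$ as the emptiness test, and return $x$ if it outputs $1$. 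As in \remref{whp:simul}, the loop is wrapped in a restart scheme (restart the current attempt once its running time exceeds twice the expected bound; after $\Theta(\log\nSets)$ attempts give up and report $\degY{\FamilyA}{\OL}>\mOL$) to upgrade the expected-time guarantee to a high-probability one.

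Correctness of the distribution follows the argument of \lemref{almost:uniform:2}. Only non-outliers of $\bigcup\FamilyA$ can ever be returned, and a good element $x$ keeps all of its $\degX{x}$ incidences live forever, so in any round it is drawn with probability $\degX{x}/m_{\mathrm{eff}}$, where $m_{\mathrm{eff}}=\sum_{\setA}e(\setA)$ is the current effective mass, and then accepted with probability $p_x\aprxEps 1/(\degX{x}\Delta)$, $\Delta=\Theta(\log(\nSets/\eps))$, since $1/(\degX{x}\Delta)\ge 1/(\nSets\Delta)\gg\BadProb$. Hence the per-round probability of outputting $x$ equals $\tfrac{\degX{x}}{m_{\mathrm{eff}}}p_x\aprxEps \tfrac{1}{m_{\mathrm{eff}}\Delta}$, which depends on $x$ only through the factor $(1\pm\eps)$; summing over rounds and conditioning on the round in which the algorithm halts, the common $m_{\mathrm{eff}}$ cancels, so the output is $\eps$-uniform on $B:=\bigcup\FamilyA\setminus\OL$ in the sense of \defref{eps:uniform} after absorbing constants into $\BadProb$. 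Finally, the algorithm reports $\degY{\FamilyA}{\OL}>\mOL$ only when $\texttt{killed}>\mOL$, and since the deleted incidences are distinct this certifies $\degY{\FamilyA}{\OL}\ge\texttt{killed}>\mOL$; conversely, when $\degY{\FamilyA}{\OL}\le\mOL$ the counter never reaches that threshold, so in that regime the algorithm always returns a sample.

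For the running time I split the rounds into outlier rounds and good rounds. Each incidence is deleted at most once, so there are $O(\min(\degY{\FamilyA}{\OL},\mOL))=O(\mOL)$ outlier rounds, each of cost $O(1)$ plus one sampler update. The good rounds, by the analysis of \lemref{almost:uniform:2} restricted to the $m_{\mathrm{good}}=\sum_{x\in B}\degX{x}\le N\nSets$ good incidences with $N$ distinct values, number $O(m_{\mathrm{good}}\Delta/N)=O(\nSets\Delta)$ in expectation (each good round halts the algorithm with probability $\aprxEps N/(m_{\mathrm{good}}\Delta)$), while the expected cost of the \lemref{second} subroutine in a good round, averaged over the drawn $x$, is $\sum_{x\in B}\tfrac{\degX{x}}{m_{\mathrm{good}}}\cdot O(\nSets/\degX{x})=O(N\nSets/m_{\mathrm{good}})$; the $L_i,Y_{i-1}$ bookkeeping of \lemref{almost-uniform} makes the product rigorous and yields $O(\nSets\Delta)=O(\nSets\log(\nSets/\eps))$ for the good work. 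Together with the restart wrapper (which adds only a constant expected factor and gives the $1-1/\nSets^{O(1)}$ success bound) this yields the claimed $O(\mOL+\nSets\log(\nSets/\eps))$.

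I expect the one genuinely delicate point to be maintaining the effective-size sampler under deletions. A straightforward balanced-tree implementation (\lemref{ds-tree}) costs $O(\log\nSets)$ per draw and per deletion, which would inflate both terms by $\log\nSets$; to hit the stated bound one should instead use a lazy-rebuild sampler: keep a static sampler over the weights recorded at the last rebuild, handle a draw that lands in the ``stale'' (since-deleted) part by rejection, and rebuild from scratch -- an $O(\nSets)$ operation -- once the number of deletions since the last rebuild reaches $\min(\nSets,\tfrac12\cdot(\text{live mass at the last rebuild}))$. This keeps the per-draw rejection probability below $\tfrac12$ throughout and bounds the total rebuild cost by $O(\mOL+\nSets\log\nSets)=O(\mOL+\nSets\log(\nSets/\eps))$. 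It is worth stressing why deletion is needed at all: the variant that merely \emph{skips} a sampled outlier re-draws the same dead incidences over and over and wastes $\Theta(\degY{\FamilyA}{\OL}\cdot\Delta/N)$ time on outliers, which is not $O(\mOL)$ when $N$ is small; charging each outlier incidence only once, via deletion, is precisely what makes the overhead additive and linear in $\mOL$. (A separate trivial point: if $B=\emptyset$ no sample exists; this is detected when $m_{\mathrm{eff}}$ reaches $0$, and in the intended near-neighbor application the neighborhood is known to be nonempty, so it does not arise.)
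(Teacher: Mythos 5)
Your proposal is correct and follows essentially the same route as the paper's proof: modify the rejection-sampling algorithm of \lemref{almost:uniform:2} so that a sampled outlier is deleted from the array of the set where it was found (swap to the end of the active prefix and decrement the active size), abort and report $\degY{\FamilyA}{\OL} > \mOL$ once more than $\mOL$ outlier incidences have been removed, and restore the counts afterwards. You go further than the paper in making explicit two points it elides -- that the acceptance probabilities remain $\eps$-uniform as the effective mass shrinks, and how to maintain the size-weighted sampler over $\FamilyA$ under deletions without an extra logarithmic factor -- but the underlying argument is the same.
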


\begin{proof}
    The main modification of the algorithm of
    \lemref{almost:uniform:2} is that whenever we encounter an outlier
    (the assumption is that one can check if an element is an outlier
    in constant time), then we delete it from the set $\setA$ where it
    was discovered. If we implement sets as arrays, this can be done
    by moving an outlier object to the end of the active prefix of the
    array, and decreasing the count of the active array. We also need
    to decrease the (active) size of the set. If the algorithm
    encounters more than $\mOL$ outliers then it stops and reports
    that the number of outliers is too large.
    
    Otherwise, the algorithm continues as before. The only difference
    is that once the query process is done, the active count (i.e.,
    size) of each set needs to be restored to its original size, as is
    the size of the set. This clearly can be done in time proportional
    to the query time.
\end{proof}%

\subsection{Uniform sampling with outliers}
\seclab{uniform:sampling:outliers}

We run a variant of the original algorithm from
\secref{union-of-sets-independent}.  In the same way as
before, we use the count distinct sketches to obtain an upper bound
$\hat{s_\FamilyA}$ on the number of distinct elements in $\FamilyA$.
Because of the presence of outliers, this bound will not necessarily
be close to $N$, but could be much larger.  Thus, we run the algorithm
at most $\log n$ rounds to find a suitable value of $k$.  In round
$i$, we use the value $k_i = 2\hat{s_\FamilyA}/2^i$.  Moreover, a
single round is iterated for $\Sigma = \Theta(\log^2 n)$ steps.  If
$k < 2$, we report that $\bigcup \FamilyA \setminus \mathcal{O}$ is
empty.  The precise algorithm is presented in the following. As
before, it takes an integer parameter $\mOL$ controlling the number of
outliers.

\begin{compactenumA}
    \item Merge all count distinct sketches of the $g$ sets in
    $\FamilyA$, and compute a $1/2$-approximation $\hat{s_\FamilyA}$
    of $s_\FamilyA = |\bigcup \FamilyA|$, such that
    $s_\FamilyA/2\leq \hat{s_\FamilyA} \leq 1.5 s_\FamilyA$.
    \item Set $k$ to the smallest power of two larger than or equal to
    $2 \hat{s_\FamilyA}$; let $\lambda = \BT{\log n}$,
    $\sigma_{\textnormal{fail}} = 0$ and $\Sigma=\BT{\log^2 n}$.
    \item Repeat the following steps until successful or $k<2$:
    \begin{compactenumI}
        \item Assume the input sequence $\Lambda$ to be split into $k$
        segments $\Lambda_i$ of size $n/k$, where $\Lambda_i$ contains
        the points in $\bigcup \Family \setminus \mathcal{O}$ with
        ranks in $[i \cdot n/k, (i+1)\cdot n/k)$. Denote with
        $\lambda_i$ the size of $\Lambda_i$.
        \item \itemlab{outlier:uniform:s:sample}%
        Select uniformly at random an integer $h$ in
        $\{0,\ldots, k-1\}$ (i.e., select a segment $\Lambda_h$);
        \item Increment $\sigma_{\textnormal{fail}}$.  If
        $\sigma_{\textnormal{fail}}=\Sigma$, then set $k = k/2$ and
        $\sigma_{\textnormal{fail}}=0$.
        \item \itemlab{outlier:uniform:reject}%
        Compute $\lambda_{\FamilyA, h}$ and count the number of
        outliers inspected on the way. If there are more than $\mOL$
        outliers, report that $\degY{\FamilyA}{\OL} > \mOL$.
        Otherwise, with probability $\lambda_{\FamilyA, h}/\lambda$,
        declare success.
    \end{compactenumI}
    \item If the previous loop ended with success, return an element
    uniformly sampled among the elements in
    $\bigcup \FamilyA \setminus \mathcal{O}$ in $\Lambda_h$, otherwise
    return $\perp$.
\end{compactenumA}

\begin{lemma}
    \lemlab{union-of-sets-independent:outliers}%
    The input is a family of sets $\Family$ that one can preprocess in
    linear time. A query is a sub-family $\FamilyA \subseteq \Family$,
    a set of outliers $\OL$, and a parameter $\mOL$.  With high
    probability, one can either:
    \begin{compactenumA}
        \item Sample a uniform element
        $x \in \bigcup \FamilyA \setminus \OL$, or
        \item Report that $\degY{\FamilyA}{\OL} > \mOL$.
    \end{compactenumA}
    The expected time is $O((g + \mOL)\log^4 n)$, where
    $N = \vert\bigcup \FamilyA \setminus \OL\vert$,
    $\nSets = \cardin{\FamilyA}$, and $n = \Cardin{\Family}$.
\end{lemma}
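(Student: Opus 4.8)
The plan is to read this as the data structure of \lemref{union-of-sets-independent} with two mechanisms bolted on: an exponential search that drives the number of segments $k$ down from an overestimate of $N$ to $\Theta(N)$, and a per-segment outlier budget that lets a segment scan abort and report case~(B). First I would condition on a high-probability event $\Event$ obtained by a union bound over: (i) the merged count-distinct sketch giving $s_\FamilyA/2 \le \hat s_\FamilyA \le 1.5\, s_\FamilyA$ where $s_\FamilyA = \cardin{\bigcup\FamilyA}$; and (ii) for every power of two $k$ with $2 \le k \le 2\hat s_\FamilyA$ and every one of the $k$ segments of $\Lambda$, the segment containing $O(N/k + \log n)$ elements of $\bigcup\FamilyA\setminus\OL$. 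Item~(ii) is a Chernoff bound on the restriction of the random permutation to the $N$ non-outlier elements of $\bigcup\FamilyA$, and a union bound over the $O(\log n)$ candidate values of $k$ and the $O(n)$ segments keeps $\Prob{\overline\Event} \le 1/n^{\Omega(1)}$. I would pick the constant in $\lambda = \BT{\log n}$ so that, under $\Event$, $\lambda_{\FamilyA,h} \le \lambda$ for every segment $h$ whenever $k = \Omega(N)$, which is exactly what makes the rejection probability $\lambda_{\FamilyA,h}/\lambda$ legitimate.

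Correctness of the output distribution then mirrors \lemref{union-of-sets-independent}: in a single iteration with current value $k = \Omega(N)$, an element $x \in \bigcup\FamilyA\setminus\OL$ in segment $h_x$ is returned by the final step with probability $\tfrac{1}{k}\cdot\tfrac{\lambda_{\FamilyA,h_x}}{\lambda}\cdot\tfrac{1}{\lambda_{\FamilyA,h_x}} = \tfrac{1}{k\lambda}$, which does not depend on $x$, while the iteration declares success with probability $\tfrac{N}{k\lambda}$; hence, conditioned on the algorithm ever declaring success, the returned element is uniform over $\bigcup\FamilyA\setminus\OL$, irrespective of the (random) round in which success occurs. It then remains to check that the algorithm declares success whp while $k$ is still $\Omega(N)$: since $\Event$ gives $k_0 \ge 2\hat s_\FamilyA \ge s_\FamilyA \ge N$ and $k_0 = O(s_\FamilyA) = O(n)$, the value of $k$ reaches $\Theta(N)$ within $O(\log n)$ rounds, and once $k = \Theta(N)$ each iteration succeeds with probability $N/(k\lambda) = \Omega(1/\log n)$, so a round of $\Sigma = \BT{\log^2 n}$ iterations fails with probability at most $(1 - \Omega(1/\log n))^{\BT{\log^2 n}} \le 1/n^{\Omega(1)}$. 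The boundary cases are immediate: if $N = 0$ no segment ever passes the rejection test, $k$ shrinks below $2$, and $\perp$ is returned; and case~(B) is reported only when a single segment scan inspects more than $\mOL$ outlier/set incidences, which — since each such incidence lies in a unique segment — certifies $\degY{\FamilyA}{\OL} > \mOL$.

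For the running time, the loop halves $k$ after every $\Sigma = \BT{\log^2 n}$ iterations and stops once $k < 2$, so it performs at most $O(\log n)\cdot\Sigma = O(\log^3 n)$ iterations regardless of the random choices. Under $\Event$, each iteration recomputes $\lambda_{\FamilyA,h}$ by doing, for each of the $g$ sets (stored sorted by rank at construction), an $O(\log n)$ binary search to the segment boundary followed by an enumeration of that set's elements inside $\Lambda_h$; the number of non-outlier such elements in any one set is at most $\lambda_{\FamilyA,h} = O(\log n)$, so the non-outlier work is $O(g\log n)$, while the outlier work is $O(\mOL)$ because the scan aborts (reporting (B)) after $\mOL$ outlier incidences. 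Hence each iteration costs $O(g\log n + \mOL)$ under $\Event$, giving $O(g\log^4 n + \mOL\log^3 n) = O((g+\mOL)\log^4 n)$ in total; the sketch merge, the final uniform draw among the $O(\log n)$ survivors in $\Lambda_h$, and the $\sigma_{\textnormal{fail}}$ bookkeeping are lower order. Off the event $\Event$ (probability $1/n^{\Omega(1)}$) there are still at most $O(\log^3 n)$ iterations, each crudely bounded by $O(gn)$, so with the failure probability chosen small enough this contributes a negligible additive term to the expectation.

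The step I expect to be the crux is this coupling between the exponential search and the validity of the rejection step: $\lambda = \BT{\log n}$ is only a legitimate denominator while $k = \Omega(N/\log n)$, so the argument must show a success is declared whp before $k$ has been halved that far — which is why a round is $\BT{\log^2 n}$ iterations long (enough to amplify the $\Theta(1/\log n)$ per-iteration success probability that appears as soon as $k = \Theta(N)$) and why $O(\log n)$ rounds suffice (the sketch pins $k_0$ between $N$ and $O(n)$). A secondary subtlety is the outlier cutoff, which must be both sound — a trigger has to certify $\degY{\FamilyA}{\OL} > \mOL$, hence the test is per segment scan — and cheap, contributing only $O(\mOL)$ to each iteration on top of the $O(g\log n)$ spent on binary searches.
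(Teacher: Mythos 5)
Your proposal is correct and follows essentially the same route as the paper's proof: condition on the sketch giving a constant-factor estimate of $\cardin{\bigcup\FamilyA}$ and on segments having $O(\log n)$ occupancy, argue uniformity via the per-element acceptance probability $1/(k\lambda)$, show success is declared whp during the round with $k=\Theta(N)$ since each iteration then succeeds with probability $\Omega(1/\log n)$ and $\Sigma=\Theta(\log^2 n)$, and bound the total cost by $O(\log^3 n)$ iterations of cost $O(g\log n+\mOL)$ each. Your accounting is in places slightly more careful than the paper's (the explicit union bound over all candidate $k$, and the explicit treatment of the off-event contribution to the expectation), but the substance is the same.
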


The proof will follow along the lines of the proof of
\lemref{union-of-sets-independent}.  We provide a
self-contained version for completeness and to highlight the
challenges of introducing outliers.

\begin{proof}
    We start by bounding the initial failure probability of the data
    structure.  By a union bound, we have that the following two
    events hold simultaneously with probability at least $1-1/n^2$:
    \begin{enumerate}
        \item Count distinct sketches provide a $1/2$-approximation of
        $|\bigcup \FamilyA|$.  By setting $\delta=1/(4n^2)$ in the
        count distinct sketch construction (see
        \secref{sketch}), the approximation guarantee holds
        with probability at least $1-1/(4n^2)$.
        \item When $k = 2^{\lceil{\log N}\rceil}$, every segment of
        size $n/k$ contains no more than $\lambda=\BT{\log N}$ points
        from $\bigcup \FamilyA \setminus \mathcal{O}$.  As points are
        initially randomly permuted, the claim holds with probability
        at least $1-1/(4n^2)$ by suitably setting the constant in
        $\lambda=\BT{\log n}$.
    \end{enumerate}

    From now on assume that these events are true.
    
    We will first discuss the additional failure event: $N \geq 1$,
    but the algorithm reports $\perp$.  The probability of this event
    is upper bounded by the probability $p'$ that no element is
    returned in the $\Sigma$ iterations where
    $k=2^{\lceil\log N\rceil}$ (the actual probability is even lower,
    since an element can be returned in an iteration where
    $k>2^{\lceil\log N\rceil}$).  By suitably setting constants in
    $\lambda = \BT{\log n}$ and $\Sigma=\BT{\log^2 n}$, we get:
    \begin{equation*}
        p' = \left( 1 - \frac{N}{k\lambda}\right)^{\Sigma} \leq
        e^{-\Sigma N / (k\lambda)} \leq e^{\BT{-\Sigma/\log n}} \leq \frac{1}{2n^2}.
    \end{equation*}

    By a union bound, with probability at least $1-1/n^2$ none of
    these three events are true.  To show that the returned element is
    uniformly sampled in $\bigcup \FamilyA \setminus \mathcal{O}$,
    recall that each element in
    $\bigcup \FamilyA \setminus \mathcal{O}$ has the same probability
    of $1/(k\lambda)$ of being output.

    For the running time, first focus on the round where
    $k = 2^{\lceil \log N\rceil}$.  In this round, we carry out
    $\Theta(\log^2 n)$ iterations.  In each iteration, we extract the
    points with rank in $[hn/k, (h + 1)n/k]$ from each of the $g$
    sets, counting all outlier points that we retrieve on the way.
    For each set, we expect to find $N/k = O(1)$ points in
    $\bigcup \FamilyA \setminus \mathcal{O}$.  If we retrieve more
    than $\mOL$ outliers, we report that
    $\degY{\FamilyA}{\OL} > \mOL$.  Reporting points with a given rank
    costs $O(\log n + o)$ in each bucket (where $o$ is the output
    size).  Thus, one iteration is expected to take time
    $O((g + \mOL)\log n)$.  The expected running time of all
    $\Sigma = \Theta(\log^2 n)$ iterations is bounded by
    $O((g +\mOL)\log^3 n)$.  Observe that for all the rounds carried
    out before, $k$ is only larger and thus the segments are smaller.
    This means that we may multiply our upper bound with $\log n$,
    which completes the proof.
\end{proof}

\section{In the search for a fair near neighbor}
\seclab{s:f:nn} In this section, we employ the data structures
developed in the previous sections to show the results on fair near
neighbor search listed in \secref{results}.

First, let us briefly give some preliminaries on \LSH. We refer the
reader to \cite{him-anntr-12} for further details. Throughout the
section, we assume that our metric space $(\X, \D)$ admits an \LSH
data structure.


\subsection{Background on \LSH}
\seclab{lsh}

Locality Sensitive Hashing (\LSH) is a common tool for solving the
\ANN problem and was introduced in \cite{IndykM98}.
\begin{definition}
    \deflab{lsh}
    A distribution $\mathcal H$ over maps $h\colon \X \rightarrow U$,
    for a suitable set $U$, is called
    $(r, c\cdot r, p_1, p_2)$-sensitive if the following holds for any
    $\x$, $\y\in \X$:
    \begin{itemize}
        \item if $\D(\x, \y) \leq r$, then
        $\Pr_h[h(\x) = h(\y)] \geq p_1$;
        \item if $\D(\x, \y) > c\cdot r$, then
        $\Pr_h[h(\x) = h(\y)] \leq p_2$.
    \end{itemize}
    The distribution $\mathcal H$ is called an LSH family, and has
    quality $\rho = \rho(\mathcal{H}) = \frac{\log p_1}{\log p_2}$.
\end{definition}
For the sake of simplicity, we assume that $p_2\leq 1/n$: if
$p_2>1/n$, then it suffices to create a new LSH family $\mathcal H_K$
obtained by concatenating $K=\BT{\log_{p_2}(1/n)}$ i.i.d.~hashing
functions from $\mathcal H$. The new family $\mathcal H_K$ is
$(r, cr, p_1^K, p_2^K)$-sensitive and $\rho$ does not change.

The standard approach to $(c,r)$-ANN using LSH functions is the
following.  Let $\DS$ denote the data structure constructed by \LSH,
and let $c$ denote the approximation parameter of \LSH. Each $\DS$
consists of
\begin{equation}
    \LL=n^\rho
    \eqlab{LL:value}
\end{equation}
hash functions $\ell_1,\ldots, \ell_\LL$ randomly and uniformly
selected from $\mathcal H$.  The performance of the LSH data structure
is determined by this parameter $\LL$, and one tries to minimize the
value of $\rho$ (and thus $\LL$) by picking the ``right'' hash
functions.  The data structure $\DS$ contains $\LL$ hash tables
$H_1,\ldots H_\LL$: each hash table $H_i$ contains the input set $S$
and uses the hash function $\ell_i$ to split the point set into
buckets.  For each query $\q$, we iterate over the $\LL$ hash tables:
for any hash function, compute $\ell_i(\q)$ and compute, using $H_i$,
the set
\begin{equation}
    \eqlab{bucket}
    \bucket_i(\p)=\{\p: \p\in S, \ell_i(\p)=\ell_i(\q)\}
\end{equation}
of points in $S$ with the same hash value; then, compute the distance
$\D(\q,\p)$ for each point $\p\in \bucket_i(\q)$.  The procedure stops
as soon as a $(c,r)$-near point is found. It stops and returns $\perp$
if there are no remaining points to check or if it found more than
$3L$ far points~\cite{IndykM98}.

\begin{defn}
    For a query point $\q \in \PS$, an \emphi{outlier} for an LSH data
    structure is a point $\p \in S \setminus \nbrY{\q}{c r}$, such
    that $\p \in \PS_\q = \bigcup_i \bucket_i(\q)$. An LSH
    data-structure is good if there are not too many outliers for the
    query point.  Specifically, the LSH data-structure is
    \emphi{useful} if the number of outliers for $\q$ is at most $3L$.
\end{defn}

We summarize the guarantees in the following two
lemmas~\cite{him-anntr-12}.

\begin{lemma}%
    \lemlab{nn-single:basic}%
    Consider an LSH data structure as above.  For a given query point
    $\q$, and a point $\p\in \nbrY{\q}{r}$, with probability
    $\geq 1-1/e-1/3$, we have that $\p \in \PS_\q$ and this data
    structure is useful (i.e., $\PS_\q$ contains at most $3L$ outliers
    for $\q$).
\end{lemma}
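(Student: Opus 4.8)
The plan is to decompose the claim into two bad events, bound each, and finish with a union bound. Recall that $\PS_\q = \bigcup_{i=1}^{\LL} \bucket_i(\q)$ with $\LL = n^\rho$ (this is the $L$ appearing in the statement), and that after the concatenation step described above we may assume the hash functions $\ell_1,\dots,\ell_\LL$ are drawn independently from a family whose collision probability is at least $p_1^K = n^{-\rho} = 1/\LL$ on pairs at distance at most $r$, and at most $p_2^K \le 1/n$ on pairs at distance more than $cr$.

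First I would bound the probability that the near point $\p$ is missed. Since $\D(\p,\q)\le r$, for each $i$ we have $\Pr[\ell_i(\p)\ne\ell_i(\q)]\le 1-1/\LL$, and by independence of the $\ell_i$,
\begin{equation*}
  \Pr[\p \notin \PS_\q]
  \;=\;
  \prod_{i=1}^{\LL}\Pr[\ell_i(\p)\ne\ell_i(\q)]
  \;\le\;
  \pth{1-\tfrac{1}{\LL}}^{\LL}
  \;\le\;
  \tfrac1e .
\end{equation*}
Then I would bound the number of outliers. For each $i$ let $Z_i$ be the number of points $\p'\in S$ with $\D(\p',\q)>cr$ that land in $\bucket_i(\q)$; each such far point collides with $\q$ under $\ell_i$ with probability at most $p_2^K\le 1/n$, so $\Ex{Z_i}\le n\cdot(1/n)=1$ and hence $\Ex{\sum_{i=1}^{\LL}Z_i}\le\LL$ by linearity of expectation. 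The number of outliers for $\q$ is at most $\sum_i Z_i$, since every outlier lies in at least one such bucket, so by Markov's inequality the probability that there are more than $3\LL$ outliers is at most $\LL/(3\LL)=1/3$; that is, the data structure is useful with probability at least $2/3$.

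Combining the two estimates by a union bound, the probability that $\p\notin\PS_\q$ or that there are more than $3\LL$ outliers is at most $1/e + 1/3$, so with probability at least $1-1/e-1/3$ we have simultaneously $\p\in\PS_\q$ and a useful data structure, which is the claim. I do not expect a real obstacle: the only point needing minor care is verifying that the concatenation step folds correctly into $p_1^K = 1/\LL$ and $p_2^K\le 1/n$; after that, the first event uses only independence of the $\ell_i$ and the second uses only linearity of expectation together with Markov's inequality, so no tail concentration is needed.
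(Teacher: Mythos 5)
Your proof is correct and is exactly the standard argument from the cited reference \cite{him-anntr-12} (the paper itself states this lemma without proof): the concatenation step gives $p_1^K = n^{-\rho} = 1/\LL$ and $p_2^K \le 1/n$, the miss probability is bounded by $(1-1/\LL)^{\LL} \le 1/e$, the expected number of outlier collisions is at most $\LL$ so Markov gives $1/3$, and a union bound finishes. No gaps.
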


This is not quite strong enough for our purposes. We build an LSH
data-structure that uses $O( \LL \log n)$ hash functions (instead of
$\LL$).
The probability of the query point to collide with a point of
$\nbrY{\q}{r}$ is $\geq 1 - 1/n^{O(1)}$ in the new LSH structure,
while the expected number of outliers grows linearly with the number
of hash functions. We thus have the following.

\begin{lemma}%
    \lemlab{nn-single:2}%
    Consider an LSH data structure as above, using $O( \LL \log n)$ hash
    functions.  For a given query point $\q$, we have that  (i) 
     $\nbrY{\q}{r} \subseteq \PS_\q$ with probability $\geq 0.9$, 
    (ii) $\PS_\q$ contains in expectation $O(\LL \log n)$ outliers for $\q$, and
    (iii) $\PS_\q$ contains in expectation $\BO{\left( \LL + n({\q},{cr}) - n(\q,r)\right) \log n}$ points with distance larger than $r$.
\end{lemma}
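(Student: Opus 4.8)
The plan is to read all three parts off the per-table collision probabilities of the rescaled LSH family, using only linearity of expectation and a union bound on top of the standard LSH analysis that underlies \lemref{nn-single:basic}. Let $t = O(\LL \log n)$ be the number of hash functions $\ell_1,\dots,\ell_t$ used, chosen so that $t \ge c_1 \LL \ln n$ for a constant $c_1$ to be fixed at the end. Two facts about a single hash table $H_i$ drive the argument: if $\D(\q,\p)\le r$ then $\Pr_{\ell_i}[\ell_i(\p)=\ell_i(\q)]\ge p_1^K = n^{-\rho} = 1/\LL$ (this is exactly the per-table success probability producing the factor $1-1/e$ in \lemref{nn-single:basic}), and if $\D(\q,\p) > cr$ then $\Pr_{\ell_i}[\ell_i(\p)=\ell_i(\q)]\le p_2\le 1/n$ by the normalization of the family.

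For part (i) I would fix a point $\p\in\nbrY{\q}{r}$ and, using independence of the $t$ tables, bound the probability that $\p$ lies in none of the buckets $\bucket_i(\q)$ by $(1-1/\LL)^{t}\le e^{-t/\LL}\le n^{-c_1}$. A union bound over the at most $n$ points of $\nbrY{\q}{r}$ then gives $\Pr[\nbrY{\q}{r}\not\subseteq\PS_\q]\le n^{1-c_1}$, which is below $0.1$ once $c_1$ is a sufficiently large absolute constant. (This argument in fact gives failure probability $1/n^{O(1)}$; I state only $0.9$ since that is all that is needed in later sections.)

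For parts (ii) and (iii) I would use first moments. Let $F=\PS\setminus\nbrY{\q}{cr}$ be the set of outliers and $M=\nbrY{\q}{cr}\setminus\nbrY{\q}{r}$ the set of points at distance in $(r,cr]$, so $\cardin M = n(\q,cr)-n(\q,r)$ and $\cardin F\le n$. A fixed far point collides with $\q$ in a fixed table with probability at most $p_2\le 1/n$, so by linearity of expectation the expected number of (far point, table) collision events is at most $\cardin F\cdot t/n\le t$; since this quantity dominates $\cardin{\PS_\q\cap F}$, we get $\Ex{\cardin{\PS_\q\cap F}}=O(\LL\log n)$, which is (ii). For (iii), observe that $\PS_\q\setminus\nbrY{\q}{r}\subseteq M\cup(\PS_\q\cap F)$ and that $\cardin M$ is a deterministic bound, so taking expectations gives $\Ex{\cardin{\PS_\q\setminus\nbrY{\q}{r}}}\le \cardin M + O(\LL\log n) = O\bigl((\LL + n(\q,cr)-n(\q,r))\log n\bigr)$.

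The only step requiring any care is (i): one must invoke the standard fact that a near point collides with the query in a single table with probability at least $1/\LL = n^{-\rho}$ (equivalently, $p_1^K$ for the chosen $K$ with $p_2^K=1/n$), and then pick the constant hidden in the bound $O(\LL\log n)$ on the number of hash functions large enough that the per-point miss probability $n^{-c_1}$ survives a union bound over all (up to $n$) near points. Parts (ii) and (iii) are then immediate from linearity of expectation together with the crude deterministic bound $\cardin M = n(\q,cr)-n(\q,r)$ on the points at distance in $(r,cr]$.
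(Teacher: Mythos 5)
Your parts (i) and (ii) are correct and follow exactly the (very terse) justification the paper itself gives: per-table collision probability $p_1^K = n^{-\rho} = 1/\LL$ for near points amplified over $\Theta(\LL\log n)$ tables plus a union bound, and a first-moment count of far-point collisions using $p_2^K \le 1/n$.

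Part (iii) is where you diverge from what the paper actually needs, and I think there is a real gap. You bound the contribution of the points at distance in $(r,cr]$ by the \emph{deterministic} count $\cardin{M} = n(\q,cr)-n(\q,r)$ of distinct such points. That does make the inequality as literally written true (the stated bound even has a spare $\log n$ of slack), but it is not the quantity the lemma is used for. In \lemref{fair:nn:dependent} and \thmref{fair:nn:dependent} the relevant quantity is the total degree $\degY{\FamilyA}{\OL}$, i.e., collisions counted \emph{with multiplicity} over the $O(\LL\log n)$ buckets, and for that your deterministic bound fails by a factor of up to $\LL\log n$: a single point at distance slightly more than $r$ could a priori collide with the query in every table. \defref{lsh} gives no control whatsoever over collision probabilities at distances in $(r,cr]$, which is precisely why the paper introduces the extra assumption at the end of \secref{lsh} that such points collide per table with probability $O(p_1^K) = O(1/\LL)$; under that assumption each of the $n(\q,cr)-n(\q,r)$ intermediate points contributes $O(\log n)$ expected collisions across the $O(\LL\log n)$ tables, which is where the $\pth{n(\q,cr)-n(\q,r)}\log n$ term really comes from. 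Your proof should either invoke that assumption explicitly for (iii), or make clear that you are only bounding the cardinality of the set $\PS_\q\setminus\nbrY{\q}{r}$ and not the degree, in which case the statement you prove is too weak to support the later running-time analyses.
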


The main technical issue is that before scanning $\PS_\q$, during the
query process, one can not tell whether the set $\PS_\q$ is large
because there are many points in $\nbrY{\q}{r}$, or because the data
structure contains many outliers. However, the algorithm can detect if
the LSH data structure is useless if the query process encounters too
many outliers. The following is the ``standard'' algorithm to perform
a \NN query using LSH (here the answer returned is not sampled, so
there is no fairness guarantee).

\begin{lemma}
    \lemlab{bcr:high:prob}%
    Let the query point be $\q$. Let $\DS_1, \ldots, \DS_t$ be
    $t=\Theta(\log n)$ independent LSH data structures of
    \lemref{nn-single:2}.  Then, with high probability, for a constant
    fraction of indices $j \in \IRX{t} = \{1, \ldots, t\}$, we have
    that (i) $B(\q, r) \subseteq S_{j, \q} = \bigcup_i \bucket^j_i(q)$
    and (ii) the number of outliers is
    $\cardin{S_{j,\q} \setminus \nbrY{\q}{cr}} = O( \LL \log n)$,
    where $S_{j,\q}$ is the set of all points in buckets that collide
    with $\q$.  The space used is $\dsS(n, c) = O( n \LL \log^2 n )$,
    and the expected query time is $\dsQ(n, c) =O( \LL \log n)$.
\end{lemma}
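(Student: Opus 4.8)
The plan is to treat each of the $t$ independent data structures $\DS_1,\dots,\DS_t$ as an independent Bernoulli trial for being ``good'' (meaning that both (i) and (ii) hold for it), show that each trial succeeds with a constant probability bounded away from $0$, and then apply a Chernoff bound over the $t=\Theta(\log n)$ trials to conclude that a constant fraction of them are good with high probability. The space and query-time bounds will then follow by directly accounting for the $t$ copies of the structure from \lemref{nn-single:2}.

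Concretely, I would fix the query $\q$ and, for each index $j$, let $A_j$ be the event $\nbrY{\q}{r}\subseteq S_{j,\q}$ and let $N_j = \cardin{S_{j,\q}\setminus \nbrY{\q}{cr}}$ be the number of outliers in $\DS_j$. By \lemref{nn-single:2}(i) we have $\Prob{A_j}\geq 0.9$, and by \lemref{nn-single:2}(ii) there is a constant $c_0$ with $\Ex{N_j}\leq c_0\LL\log n$. Markov's inequality then gives $\Prob{N_j > 20c_0\LL\log n}\leq 1/20$; letting $B_j$ be the complementary event, a union bound on complements yields $\Prob{A_j\cap B_j}\geq 1-1/10-1/20 = 17/20$. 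It is worth stressing here that one cannot hope to make $A_j$ or $B_j$ hold with \emph{high} probability on a single data structure: amplifying the collision probability of near points requires more hash functions, which simultaneously inflates the expected outlier count, so the constant-probability barrier for one structure is intrinsic --- this is precisely why independent repetitions are needed.

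Next, since the $\DS_j$ use independent random hash functions, the indicators $I_j = \mathbf{1}[A_j\cap B_j]$ are mutually independent with $\Ex{I_j}\geq 17/20$. Writing $Z=\sum_{j=1}^t I_j$, we have $\Ex{Z}\geq (17/20)t$, and a standard Chernoff bound gives $\Prob{Z \leq (3/4)t} \leq \exp(-\Omega(t))$. Choosing the hidden constant in $t=\Theta(\log n)$ large enough makes this at most $1/n^{\Omega(1)}$, so with high probability at least a $3/4$-fraction of the indices $j\in\IRX{t}$ satisfy both (i) and (ii), which is the asserted constant fraction.

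For the resource bounds: each $\DS_j$ is the structure of \lemref{nn-single:2}, which maintains $O(\LL\log n)$ hash tables, each of size $O(n)$, for $O(n\LL\log n)$ space per structure; summing over the $t=\Theta(\log n)$ structures gives $\dsS(n,c)=O(n\LL\log^2 n)$. The query time of the standard LSH routine on a single structure is dominated by evaluating its $O(\LL\log n)$ hash functions on $\q$, fetching the corresponding buckets, and scanning points until a $(c,r)$-near point is found or the routine aborts after encountering a number of far points proportional to the number of hash functions; all of this is $O(\LL\log n)$, i.e. $\dsQ(n,c)=O(\LL\log n)$. The main obstacle is really just the conceptual point already noted: the per-structure guarantees are only constant-probability and cannot be boosted, so the high-probability conclusion must come entirely from the Chernoff concentration across the $\Theta(\log n)$ independent copies rather than from strengthening any one of them.
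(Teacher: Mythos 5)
Your proof is correct and follows essentially the same route as the paper's: a constant per-structure success probability from \lemref{nn-single:2}, followed by a Chernoff bound over the $t=\Theta(\log n)$ independent copies, and direct accounting for space and query time. In fact you are slightly more careful than the paper, which asserts the $0.9$ per-structure bound for both properties at once, whereas you make explicit the Markov step needed to turn the expected outlier count of \lemref{nn-single:2}(ii) into a constant-probability event before taking the union bound.
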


\begin{proof}
    For a random $i \in \IRX{t}$, the data structure $\DS_i$ has the
    desired properties with probability $\geq 0.9$, by
    \lemref{nn-single:2}. By Chernoff's inequality, as $t =
    \Theta(\log n)$, at least a constant fraction of these
    data-structures have the desired property.
    
    As for the query time, given a query, the data structure starts
    with $i=1$. In the $i$\th iteration, the algorithm uses from
    $\DS_i$, and computes the $O( \LL \log n)$ lists that contains the
    elements of $S_{j,\q}$. The algorithm scans these lists -- if it
    encounters more than $O( \LL \log n)$ outliers, it increases $i$
    and move on to the next data-structure. As soon as the algorithm
    encounters a near point, in these lists, it stops and returns it.
\end{proof}

\begin{remark}
    In the above, we ignored the dependency on the dimension $d$. In
    particular, the $O(\cdot)$ in hides a factor of $d$.
\end{remark}

In the following, we present data structures that solve the problems defined in \secref{problems}. 
For most of the problem variants (all except \lemref{eps:uniform:nn}) that require to return an $r$-near neighbor (but not a $cr$-near neighbor), we require that the \LSH data structure behaves well on points in $B(\q, cr) \setminus B(\q, r)$. 
Note that \defref{lsh} does not specify the behavior of the \LSH on such points.
In particular, we assume that the collision probability function of the \LSH is monotonically decreasing and that points at distance at least $r$ collide with probability $O(p_1^k)$. 
Such an \LSH has the property that the query is expected to collide with $O((n(\q, cr) - n(\q, r))\log n)$ points within distance $r$ to $cr$. 
We note that most \LSH data structures have this property naturally by providing a closed formula for the CPF of the \LSH.

\subsection{Exact Neighborhood with Dependence}
\seclab{fair:nn:dependent}

\begin{lemma}
    \lemlab{fair:nn:dependent}
    Given a set $\PS$ of $n$ points and a parameter $r$, we can
    preprocess it such that given a query $q$, one can report a point
    $p\in \PS$ with probability $1/ n(\q, r)$ 
    (points returned by subsequent queries might not be independent).
    The algorithm uses space
    $O(n\LL \log n)$ and has expected query time
    $O\left( \left(\LL + \frac{n(\q, cr)}{n(\q, r)}\right)\log^2 n
        \right)$.
\end{lemma}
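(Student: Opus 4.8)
The plan is to build the data structure as a single \LSH structure on top of which we run the union-of-sets-with-outliers sampler of \secref{outliers}, where the family consists of the \LSH buckets and the ``outliers'' are \emph{all} input points at distance more than $r$ from the query (so this outlier set contains not only the genuine \LSH outliers at distance $>cr$, but also every point of $\nbrY{\q}{cr}\setminus\nbrY{\q}{r}$). Concretely, the preprocessing builds one \LSH data structure as in \lemref{nn-single:2} with $L=\Theta(\LL\log n)$ hash functions $\ell_1,\dots,\ell_L$ (the constant large enough that the failure event below has probability at most $\delta$), draws a uniformly random permutation of $\PS$ assigning each point a rank, and stores each bucket of each hash table as a set in the representation of \secref{prelims} with its points kept in rank order, together with, for each point, pointers to all buckets containing it. This uses space $O(nL)=O(n\LL\log n)$ and is exactly the input format of \lemref{outlierdep}, with $\Family$ the collection of all buckets (so $\bigcup\Family=\PS$ and $\cardin{\bigcup\Family}=n$).

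On a query $\q$ I would set $\FamilyA=\{\bucket_1(\q),\dots,\bucket_L(\q)\}$, so $g=\cardin{\FamilyA}=L$ and $\bigcup\FamilyA=\PS_\q$, let $\OL=\{\p\in\PS:\D(\q,\p)>r\}$ (membership decidable in $O(1)$ time, ignoring the $d$-factor, by evaluating $\D(\q,\cdot)$), and run the sampling-with-dependence algorithm of \lemref{outlierdep} on $(\FamilyA,\OL)$. For correctness: since the structure guarantees that each point of $\nbrY{\q}{r}$ collides with $\q$ with probability $\geq 1-1/n^{O(1)}$ (the remark preceding \lemref{nn-single:2}), a union bound over the at most $n$ such points gives $\nbrY{\q}{r}\subseteq\PS_\q$ with probability $\geq 1-\delta$; on this event $\bigcup\FamilyA\setminus\OL=\PS_\q\cap\nbrY{\q}{r}=\nbrY{\q}{r}$, so \lemref{outlierdep} returns a uniformly random element of $\nbrY{\q}{r}$, i.e.\ each $\p\in\nbrY{\q}{r}$ with probability $1/n(\q,r)$. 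Because the ranks are fixed at preprocessing time, repeating the query returns the same point, which is exactly why only dependence (and not independence) is claimed, as in \lemref{set:dependent}.

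For the running time, \lemref{outlierdep} gives expected time $O\bigl(\log g\,(g+\degY{\FamilyA}{\OL}/(N+1))\bigr)$ with $N=n(\q,r)$, so the only quantity to control is $\Ex[\degY{\FamilyA}{\OL}]=\sum_{\p:\D(\q,\p)>r}\Ex\bigl[\cardin{\{i:\ell_i(\p)=\ell_i(\q)\}}\bigr]$. A point at distance $>cr$ collides with $\q$ in a fixed table with probability at most $p_2^K\leq 1/n$, contributing $\leq L/n$ in expectation, hence $O(L)=O(\LL\log n)$ over the at most $n$ such points; a point at distance in $(r,cr]$ collides with probability $O(p_1^K)=O(1/\LL)$ in a fixed table, by the monotone-collision-probability assumption recorded at the end of \secref{lsh}, contributing $O(L/\LL)=O(\log n)$ in expectation, hence $O\bigl((n(\q,cr)-n(\q,r))\log n\bigr)$ in total. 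Thus $\Ex[\degY{\FamilyA}{\OL}]=O\bigl((\LL+n(\q,cr)-n(\q,r))\log n\bigr)$; since $g=L=O(\LL\log n)$ and $\log g=O(\log n)$, and conditioning on the (probability $\geq 1/2$) event $\nbrY{\q}{r}\subseteq\PS_\q$ changes these expectations by at most a constant factor, the expected query time is
\[
  O\!\left(\log n\cdot\Bigl(\LL\log n+\tfrac{(\LL+n(\q,cr)-n(\q,r))\log n}{n(\q,r)+1}\Bigr)\right)
  = O\!\left(\Bigl(\LL+\tfrac{n(\q,cr)}{n(\q,r)}\Bigr)\log^2 n\right),
\]
using $\LL/(n(\q,r)+1)\leq\LL$ and $(n(\q,cr)-n(\q,r))/(n(\q,r)+1)\leq n(\q,cr)/n(\q,r)$. (If $n(\q,r)=0$ the algorithm instead exhausts all $g$ buckets, by the ``never worse than'' bound of \lemref{outlierdep}, in expected time $O\bigl((\LL+n(\q,cr))\log^2 n\bigr)$, and reports that $\nbrY{\q}{r}=\emptyset$.)

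I expect the only nontrivial step to be the outlier-degree estimate: everything else is reduction and bookkeeping, but \defref{lsh} says nothing about the collision probability of points at distance strictly between $r$ and $cr$, so one must invoke the additional hypothesis (monotone CPF, far-point collision probability $O(p_1^K)$) stated at the end of \secref{lsh} to bound their contribution, and separately use $p_2\leq 1/n$ to bound the contribution of the genuinely far points; the subsequent simplification of \lemref{outlierdep}'s bound into the claimed form is then a two-line calculation.
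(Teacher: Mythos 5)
Your proposal is correct and follows essentially the same route as the paper: build the \LSH structure of \lemref{nn-single:2}, take $\FamilyA$ to be the buckets colliding with $\q$ and $\OL$ the points at distance more than $r$, invoke \lemref{outlierdep}, and bound $\Ex[\degY{\FamilyA}{\OL}]$ by $O((\LL + n(\q,cr)-n(\q,r))\log n)$ using the collision-probability assumptions from \secref{lsh}. You merely spell out details the paper leaves implicit (the union bound for $\nbrY{\q}{r}\subseteq\PS_\q$, the per-point collision calculation, and the $n(\q,r)=0$ case).
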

\begin{proof}
      Let $\DS$ be a data structure
    constructed of \lemref{nn-single:2}.  
    Let $\Family$ be the set of all buckets in the data structure.  
    For a query point $\q$, consider the family $\FamilyA$ of all buckets containing the query, and thus $g = \cardin{\FamilyA} = O(\LL \log n)$. 
    We consider as outliers $\OL$ the set of points that are farther than $r$
    from $q$. 
    By \lemref{outlierdep}, we have that a query requires 
     $O( (\LL + \degY{\FamilyA}{\OL} / n(\q, r)) \log \LL)$ expected time (the expectation is over the random choices of the sampling data structure).
    The expected number of outliers is $\BO{(\LL +n(\q, cr)-n(\q,r)) \log n}$ (the expectation is over the random choices of LSH), since by \lemref{nn-single:2} there are 
    at most $O(\LL \log n)$ points at
    distance at least $cr$ and $\BO{(n(\q, cr)-n(\q,r)) \log n}$ points at distance between $r$ and $cr$. 
\end{proof}

The above results will always return the same point if we repeat the same query.
By using the technique in \lemref{ouliersrep}, we get the following result if the same query is repeated:

\begin{theorem}
    \thmlab{fair:nn:dependent}%
     Given a set $\PS$ of $n$ points and a parameter $r$, we can
    preprocess it such that by repeating $k$ times a query $\q$, we have with high probability $1-1/n$:
  
    \begin{enumerate}
        \item $\p$ is returned as near neighbor of $\q$ with
        probability $1/n(\q,r)$.
        \item
        $\PR{OUT_i = \p | OUT_{i-1}=\p_{i-1}, \ldots OUT_{1}=\p_{1}} =
        {1}/{n(\q,r)}$ for each $1<i\leq n$.
    \end{enumerate}
   where $OUT_i$ denotes the output of the $i$\th
    iteration with $1 \leq i \leq k$.  
    The data structure requires $\BO{n\LL \log n}$ space and
    the expected query time is
    $\BO{\left(\LL + n(\q,cr)-n(\q,r)\right) \log^2 n}$.
\end{theorem}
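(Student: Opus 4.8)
The plan is to keep the construction behind \lemref{fair:nn:dependent} essentially unchanged, and only replace the one-shot outlier sampler of \lemref{outlierdep} by its independence-preserving variant \lemref{ouliersrep}. Concretely, I would build a single LSH data structure $\DS$ as in \lemref{nn-single:2} with $\Theta(\LL\log n)$ hash functions, let $\Family$ be the collection of all buckets across all hash tables of $\DS$, draw one random permutation of the $n$ input points at preprocessing time to fix their ranks, store every bucket as a priority queue keyed by rank, and store with every input point the list of buckets that contain it (this is exactly what the rank-swap step of \lemref{rep_query_dependent}, invoked inside \lemref{ouliersrep}, needs; these additions are linear in the total bucket size $O(n\LL\log n)$). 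On a query $\q$, take $\FamilyA$ to be the set of buckets that contain $\q$, so that $g=\cardin{\FamilyA}=O(\LL\log n)$, and take the outlier set $\OL$ to be all input points at distance greater than $r$ from $\q$; membership in $\OL$ is decided in $O(1)$ time by evaluating $\D(\cdot,\q)$. With these choices, $\bigcup\FamilyA\setminus\OL=\PS_{\q}\cap\nbrY{\q}{r}$.

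For the two probability guarantees I would first argue that the LSH layer behaves well with probability $\ge 1-1/n$: by the per-point collision bound recalled just before \lemref{nn-single:2}, each point of $\nbrY{\q}{r}$ lies in $\PS_{\q}$ with probability $\ge 1-1/n^{O(1)}$, so, taking the constant in $\Theta(\LL\log n)$ large enough, a union bound over the at most $n$ points of $\nbrY{\q}{r}$ yields that the event $\sE\colon\ \nbrY{\q}{r}\subseteq\PS_{\q}$ holds with probability $\ge 1-1/n$; inflating the number of hash functions by a constant does not change the expected-outlier bounds of \lemref{nn-single:2}(ii)--(iii) up to constant factors, since those are linear in the number of hash functions. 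Conditioning on $\sE$ makes $\bigcup\FamilyA\setminus\OL$ equal to $\nbrY{\q}{r}$ exactly, so $N=\cardin{\bigcup\FamilyA\setminus\OL}=n(\q,r)$, and since all randomness used inside \lemref{ouliersrep} is internal to the sampler, its conclusions apply verbatim to this now-fixed instance: across $k$ repetitions of $\q$, every output lies in $\nbrY{\q}{r}$, $\Pr[OUT_1=\p]=1/N=1/n(\q,r)$, and $\Pr[OUT_i=\p\mid OUT_{i-1}=\p_{i-1},\dots,OUT_1=\p_1]=1/N=1/n(\q,r)$ for $i>1$. This establishes items (1) and (2), conditioned on $\sE$, hence with probability $\ge 1-1/n$.

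For the resource bounds I would invoke the time bound of \lemref{ouliersrep}: its expected query time, over the sampler's coins, is $O\pth{\degY{\FamilyA}{\OL}\log n+(g+D)\log n}$, where $D$ is the maximum number of buckets containing a single point --- that is, the number of hash tables, which is $O(\LL\log n)$ --- so the second term is a lower-order $O(\LL\log^2 n)$. Taking a further expectation over the construction of $\DS$ and using \lemref{nn-single:2}(ii)--(iii) (at most $O(\LL\log n)$ expected collisions with points at distance greater than $cr$, and $O\pth{(n(\q,cr)-n(\q,r))\log n}$ expected collisions with points at distance in $(r,cr]$), I obtain $\Ex{\degY{\FamilyA}{\OL}}=O\pth{(\LL+n(\q,cr)-n(\q,r))\log n}$ --- and since $g$ and $D$ are deterministic, the overall expected query time is $O\pth{(\LL+n(\q,cr)-n(\q,r))\log^2 n}$, with no conditioning needed. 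The space is dominated by $\DS$, namely $O(n\LL\log n)$; the per-bucket priority queues and per-point bucket lists add another $O(n\LL\log n)$ and do not affect the bound.

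The main thing to get right is the interaction of the two layers of randomness. Because \lemref{ouliersrep} produces \emph{exact} output probabilities --- all its randomness being internal --- conditioning on the LSH event $\sE$ does not perturb those probabilities, which is exactly what lets us report the clean statement ``with probability $\ge 1-1/n$ the output distribution is exactly $1/n(\q,r)$'' rather than a product of two approximations. A secondary subtlety in the time analysis is that one must use the additive form $\degY{\FamilyA}{\OL}\log n$ of \lemref{ouliersrep}'s bound, since taking the LSH-expectation of the sharper $\degY{\FamilyA}{\OL}/(N+1)$ form does not factor through the ratio. Finally, as is already the case for \lemref{fair:nn:dependent} and \lemref{ouliersrep}, this argument only delivers independence across repetitions of the \emph{same} query; correlated distinct queries require the stronger data structure of \secref{independent:fair:nn}.
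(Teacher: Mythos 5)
Your proposal is correct and follows essentially the same route as the paper: build the LSH structure of \lemref{nn-single:2}, take the buckets containing the query as $\FamilyA$ and the points beyond distance $r$ as outliers, and invoke \lemref{ouliersrep} with $\degY{\FamilyA}{\OL}=O((\LL+n(\q,cr)-n(\q,r))\log n)$ and $g,\delta=O(\LL\log n)$ to get both the exact output probabilities across repetitions and the stated query time. The paper's own proof is a one-line citation of \lemref{ouliersrep}; your added care about the union bound for $\nbrY{\q}{r}\subseteq\PS_\q$ and about why conditioning on that LSH event leaves the sampler's exact probabilities intact is a welcome filling-in of detail, not a different argument.
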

\begin{proof}
The claim follows by using the data structure in \lemref{ouliersrep}, where each query requires expected time  
$\BO{d_\FamilyA(\OL) \log n + (g + \delta) \log n}$,
where the $d_\FamilyA(\OL)=\BO{(\LL + n(\q,cr)-n(\q,r))\log n}$, $n$ is the total number of points,  and $\delta=g=\LL$.
\end{proof}

\subsection{Approximately Fair \ANN}
\seclab{approx:fair:nn}

\begin{theorem}%
    \thmlab{approx-neighborhood}%
    Given a set $\PS$ of $n$ points, and a parameter $r$, we can
    preprocess it in time and space
    $\dsS(n, c) = O( n \LL \log^2 n )$, see \Eqref{LL:value}. Here,
    given a query $\q$, one can report a point $\p \in \PSA$, with
    $\eps$-uniform distribution (see \defref{eps:uniform}), where
    $\PSA$ is a set that depends on the query point $\q$, and
    $\nbrY{\q}{r} \subseteq \PSA \subseteq \nbrY{\q}{cr}$.  The
    expected query time is $O( \LL \log n \log (n/\eps))$. The query
    time is $O( \LL \log^2 n \log (n/\eps))$ with high
    probability. The above guarantees hold with high probability.
\end{theorem}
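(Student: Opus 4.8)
The plan is to combine the multi-table \LSH structure of \lemref{bcr:high:prob} with the outlier-aware sampler of \lemref{outliers}, treating the points at distance more than $cr$ from the query as the outlier set. First I would build $t=\Theta(\log n)$ independent \LSH data structures $\DS_1,\dots,\DS_t$ as in \lemref{bcr:high:prob} (each with $O(\LL\log n)$ hash functions as in \lemref{nn-single:2}); for each $\DS_j$ I let $\Family_j$ be its collection of buckets and build on $\Family_j$ the union-of-sets-with-outliers data structure of \secref{outliers}, which only costs linear preprocessing. Since $\sum_{X\in\Family_j}\cardin{X}=O(n\LL\log n)$ and there are $\Theta(\log n)$ tables, the total space and preprocessing stay $O(n\LL\log^2 n)=\dsS(n,c)$. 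At query time, given $\q$, I fix a budget $\mOL=\Theta(\LL\log n)$ (a large enough constant times $\LL\log n$) and iterate over $j=1,2,\dots$: let $\FamilyA_j\subseteq\Family_j$ be the buckets of $\DS_j$ containing $\q$, so $g:=\cardin{\FamilyA_j}=O(\LL\log n)$, let $\OL$ be the points of $\bigcup\FamilyA_j$ at distance more than $cr$ from $\q$ (testable in $O(1)$ time), and run the algorithm of \lemref{outliers} on $(\Family_j,\FamilyA_j,\OL,\mOL,\eps)$; if it reports $\degY{\FamilyA_j}{\OL}>\mOL$ I move to the next table, otherwise I return its sample, so that $\PSA:=\bigcup\FamilyA_j\setminus\OL=\PS_{j,\q}\cap\nbrY{\q}{cr}$ for the table that produced the answer.

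For correctness I would isolate a ``good event'' that holds with high probability and argue the query behaves as promised on it. The first part, \textbf{(A)}, is that \emph{every} table satisfies $\nbrY{\q}{r}\subseteq\PS_{j,\q}$: in one table each point of $\nbrY{\q}{r}$ collides with $\q$ with probability $\geq 1-1/n^{O(1)}$, so a union bound over the at most $n$ points of $\nbrY{\q}{r}$ gives $\nbrY{\q}{r}\subseteq\PS_{j,\q}$ with probability $\geq 1-1/n^{O(1)}$, and a further union bound over the $\Theta(\log n)$ tables gives (A). The second part, \textbf{(B)}, is that a constant fraction of the tables have $\degY{\FamilyA_j}{\OL}\leq\mOL$: the expected total degree of the far points is $O(\LL\log n)$ in each table (the standard \LSH bound, since with $p_2\leq 1/n$ each of the $\leq n$ far points collides with $\q$ under $O(\LL\log n)\cdot p_2$ hash functions in expectation; cf.\ \lemref{nn-single:2}), so by Markov's inequality each table has $\degY{\FamilyA_j}{\OL}\leq\mOL$ with probability at least $1/2$ once the constant in $\mOL$ is chosen, and a Chernoff bound over the $t$ independent tables yields (B). On $(A)\cap(B)$, \lemref{outliers} produces a sample (rather than reporting too many outliers) on every table counted in (B), and --- again with high probability, by \lemref{outliers} --- does so with the correct $\eps$-uniform distribution; hence the loop terminates at some table $j$ with an $\eps$-uniform sample from $\PSA=\PS_{j,\q}\setminus\OL$. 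By construction $\PSA\subseteq\nbrY{\q}{cr}$, and by (A) together with $\nbrY{\q}{r}\cap\OL=\emptyset$ we get $\nbrY{\q}{r}\subseteq\PSA$, which is exactly the required sandwich.

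For the running time, each call to \lemref{outliers} costs $O(\mOL+g\log(g/\eps))$ in expectation; since $g,\mOL=O(\LL\log n)$ are polynomial in $n$ we have $\log(g/\eps)=O(\log(n/\eps))$, so a single call takes $O(\LL\log n\,\log(n/\eps))$ expected time. The number of tables inspected is dominated by a geometric random variable --- a constant fraction of tables end the loop --- hence $O(1)$ in expectation; writing the total work as $\sum_{j\geq 1}(\text{work on table }j)\cdot\mathbf{1}[\text{tables }1,\dots,j-1\text{ all reported too many outliers}]$ and noting that the indicator depends only on tables $1,\dots,j-1$ while the work on table $j$ depends only on table $j$, the two factors are independent and linearity of expectation gives expected query time $O(\LL\log n\,\log(n/\eps))$. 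For the high-probability bound I would use the standard restart trick (as in \remref{whp:simul}): abort and restart the query whenever it exceeds twice its expected time; by Markov's inequality it restarts $O(\log n)$ times with high probability, giving query time $O(\LL\log^2 n\,\log(n/\eps))$.

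The step I expect to be the main obstacle is pinning down (A). Because the query cannot tell in advance which table it will end up sampling from, it is not enough that a constant fraction of the tables contain all of $\nbrY{\q}{r}$ --- the table that ends the loop could be one of the bad ones --- so I need the per-point, per-table collision guarantee to be strong enough ($1-1/n^{O(1)}$) that a single union bound makes $\nbrY{\q}{r}\subseteq\PS_{j,\q}$ hold simultaneously over all tables. A lesser subtlety, already handled above, is the coupling between the query's internal randomness and the set of tables it inspects, which is why the running-time estimate is phrased via the independence of ``work on table $j$'' and ``the first $j-1$ tables all failed''.
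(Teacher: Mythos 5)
Your proposal is correct and follows essentially the same route as the paper: build $\Theta(\log n)$ independent \LSH tables as in \lemref{bcr:high:prob}, treat points outside $\nbrY{\q}{cr}$ as outliers with budget $\mOL = O(\LL\log n)$, run the sampler of \lemref{outliers} on the buckets of each table in turn until one succeeds, and observe that a constant fraction of tables succeed so the loop costs $O(1)$ rounds in expectation and $O(\log n)$ rounds with high probability. Your treatment is in fact somewhat more careful than the paper's (the explicit union bound for $\nbrY{\q}{r}\subseteq\PS_{j,\q}$ over all tables, and the independence argument decoupling the work on table $j$ from the failure of tables $1,\dots,j-1$); the only cosmetic difference is that you get the high-probability time bound via the restart trick of \remref{whp:simul} while the paper gets it directly from the geometric number of rounds.
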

\begin{proof}
    We construct the data-structures $\DS_1, \ldots, \DS_t$ of
    \lemref{bcr:high:prob}, with $t = O( \log n)$. Here, all the
    values that are mapped to a single bucket by a specific hash
    function are stored in its own set data-structure (i.e., an
    array), that supports membership queries in $O(1)$ time (by using
    hashing).

    Starting with $i=1$, the algorithm looks up the
    $\nSets = O( \LL \log n)$ buckets of the points colliding with the
    query point in the LSH data structure $\DS_i$, and let $\FamilyA$
    be this set of buckets. Let $\PSA$ be the union of the points
    stored in the buckets of $\FamilyA$. We have with constant
    probability that
    $\Cardin{\FamilyA \setminus \nbrY{\q}{cr}} \leq \mOL$, where
    $\mOL = O( \LL \log n)$. Thus, we deploy the algorithm of
    \lemref{outliers} to the sets of $\FamilyA$. With high probability
    $\nbrY{\q}{r} \subseteq \PSA$, which implies that with constant
    probability (close to $0.9$), we sample the desired point, with
    the $\eps$-uniform guarantee.  If the algorithm of
    \lemref{outliers} returns that there were too many outliers, the
    algorithm increases $i$, and try again with the next data
    structure, till success.  In expectation, the algorithm would only
    need to increase $i$ a constant number of times till success,
    implying the expected bound on the query time. With high
    probability, the number of rounds till success is $O( \log n)$.

    Since $\nSets \approx \LL = n^{\Omega(1)}$, all the high
    probability bounds here hold with probability
    $\geq 1 - 1/n^{O(1)}$.
\end{proof}

\begin{remark}
    The best value of $\,\LL$ that can be used depends on the
    underlying metric.  For the $L_1$ distance, the runtime of our
    algorithm is $\tldO(n^{1/c+o(1)})$ and for the $L_2$ distance, the
    runtime of our algorithm is $\tldO(n^{1/c^2 + o(1)})$. These match
    the runtime of the standard \LSH-based near neighbor algorithms up
    to polylog factors.
\end{remark}

\textbf{Exact neighborhood.} One can even sample $\eps$-uniformly from the $r$-near-neighbors of
the query point. Two such data structures are given in the following \lemref{eps:uniform:nn} and \remref{rem:exactANN-good-lsh} .The query time guarantee is somewhat worse.

\begin{lemma}%
    \lemlab{eps:uniform:nn}%
    Given a set $\PS$ of $n$ points, and a parameter $r$, we can
    preprocess it in time and space
    $\dsS(n, c) = O( n \LL \log^2 n )$, see \Eqref{LL:value}. Here,
    given a query $\q$, one can report a point $\p \in \nbrY{\q}{r}$,
    with $\eps$-uniform distribution (see \defref{eps:uniform}).  The
    expected query time is
    \begin{math}
        O\pth{ \LL \frac{\nNY{\q}{cr}}{\nNY{\q}{r}} \log n \log
           \frac{n}{\eps} }.
    \end{math}
    The query time is
    \begin{math}
        O\pth{ \LL \frac{\nNY{\q}{cr}}{\nNY{\q}{r}} \log^2 n \log
           \frac{n}{\eps} }
    \end{math}
    with high probability.
\end{lemma}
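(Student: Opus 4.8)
The plan is to reduce to the approximate-neighborhood sampler of \thmref{approx-neighborhood} via rejection sampling. First I would build exactly the data structure of \thmref{approx-neighborhood}, but with its accuracy parameter set to a small constant fraction $\eps' = \eps/3$ of $\eps$; this costs space and preprocessing time $\dsS(n,c) = O(n\LL\log^2 n)$, as claimed. Recall that a single query to that structure returns, with high probability, a point $\p$ drawn from an $\eps'$-uniform distribution over a set $\PSA$ satisfying $\nbrY{\q}{r}\subseteq\PSA\subseteq\nbrY{\q}{cr}$ --- where $\PSA$ itself depends on the query point and on the internal randomness --- and that one such query costs $O(\LL\log n\log(n/\eps))$ time in expectation and $O(\LL\log^2 n\log(n/\eps))$ time with high probability.

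The query algorithm for the lemma is then simply: repeatedly invoke the \thmref{approx-neighborhood} sampler with fresh, independent randomness, stopping and outputting the first returned point $\p$ with $\D(\q,\p)\leq r$. For correctness I would work with the marginal output distribution of one invocation. Since $\nbrY{\q}{r}\subseteq\PSA$, every $y\in\nbrY{\q}{r}$ is a valid output of each invocation, and, averaging over the randomness that fixes $\PSA$, the probability that one invocation returns a given $y\in\nbrY{\q}{r}$ lies in the interval $[\tfrac{1}{1+\eps'},\,1+\eps']\cdot\ExExt{\PSA}{1/|\PSA|}$ --- the \emph{same} interval for every near point $y$. Conditioning on the rejection event ``the returned point lies in $\nbrY{\q}{r}$'' --- which is exactly what the loop does --- therefore yields, for each $y\in\nbrY{\q}{r}$, an output probability within a factor $(1+\eps')^2\leq 1+\eps$ of $1/\nNY{\q}{r}$, i.e.\ an $\eps$-uniform distribution on $\nbrY{\q}{r}$ in the sense of \defref{eps:uniform}. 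The rare event $\nbrY{\q}{r}\not\subseteq\PSA$ contributes only negligible point mass and is absorbed into the global failure probability.

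For the running time, a single invocation returns a near point with probability $\sum_{y\in\nbrY{\q}{r}}\Prob{\p=y}\geq\tfrac{1}{1+\eps'}\,\nNY{\q}{r}\,\ExExt{\PSA}{1/|\PSA|}\geq\nNY{\q}{r}/(2\,\nNY{\q}{cr})$, using $|\PSA|\leq\nNY{\q}{cr}$. Hence the number of invocations is stochastically dominated by a geometric random variable with success probability $\geq\nNY{\q}{r}/(2\,\nNY{\q}{cr})$, so $O(\nNY{\q}{cr}/\nNY{\q}{r})$ invocations suffice in expectation; as the invocations are mutually independent, the expected total time is the product of the expected number of invocations and the expected cost of one invocation, namely $O\!\pth{\LL\,\tfrac{\nNY{\q}{cr}}{\nNY{\q}{r}}\log n\,\log\tfrac{n}{\eps}}$. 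The high-probability bound follows by combining a tail bound on the (geometric) number of invocations with the per-invocation high-probability time bound of \thmref{approx-neighborhood}, using the standard expected-to-high-probability conversion of \remref{whp:approx}.

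The step that needs genuine care --- and thus the main obstacle --- is the correctness argument, precisely because the reference set $\PSA$ is random and is \emph{not} known to the algorithm: one cannot condition on a fixed $\PSA$, but must instead reason about the averaged-out output distribution of an invocation, verify that all near points remain inside a common multiplicative window both before and after the rejection step, and check that the mass lost to the event $\nbrY{\q}{r}\not\subseteq\PSA$ is negligible. The remaining ingredients --- the product-of-expectations bound for the independent geometric trials, the choice $\eps' = \eps/3$, and the polylogarithmic bookkeeping for the high-probability statement --- are routine.
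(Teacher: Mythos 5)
Your proposal is correct and follows essentially the same route as the paper: build the data structure of \thmref{approx-neighborhood} and repeatedly invoke its query, accepting the first returned point within distance $r$, with the expected number of rounds bounded by $\nNY{\q}{cr}/\nNY{\q}{r}$. Your extra care with the conditioning step (rescaling to $\eps'=\eps/3$ so that the $(1+\eps')^2$ window after rejection still fits within $1+\eps$) is a refinement the paper's own proof glosses over, not a different argument.
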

\begin{proof}
    Construct and use the data structure of
    \thmref{approx-neighborhood}. Given a query point, the algorithm
    repeatedly get a neighbor $\p_i$, by calling the query
    procedure. This query has a $\eps$-uniform distribution on some
    set $\PSA_i$ such that $\nbrY{\q}{r} \subseteq \PSA_i \subseteq
    \nbrY{q}{cr}$. As such, if the distance of $\p_i$ from $\q$ is
    at most $r$, then the algorithm returns it as the desired
    answer. Otherwise, the algorithm increases $i$, and continues to
    the next round.

    The probability that the algorithm succeeds in a round is
    $\rho = {\nNY{\q}{r}} / {\nNY{\q}{cr}}$, and as such the expected
    number of rounds is $1/\rho$, which readily implies the result.
\end{proof}

\begin{remark}\remlab{rem:exactANN-good-lsh}
    We remark that the properties in \lemref{eps:uniform:nn} are independent of the behavior of the \LSH with regard to points in $B(\q, cr) \setminus B(\q, r)$. 
    If the \LSH behaves well on these points as discussed in \secref{lsh}, 
    we can build the same data structure as in \thmref{approx-neighborhood} but regard all points outside of $B(\q, r)$ as outliers. 
    This results in an expected query time of 
    \begin{math}
        O\pth{ (\LL + \nNY{\q}{cr} - \nNY{\q}{r}) \log n \log
           \frac{n}{\eps} }.
    \end{math}
    With high probability the query time is 
    \begin{math}
        O\pth{ (\LL + \nNY{\q}{cr} - \nNY{\q}{r}) \log^2 n \log
           \frac{n}{\eps} }.
    \end{math}
\end{remark}

\subsection{Exact Neighborhood (Fair \NN)}
\seclab{independent:fair:nn}

We use the algorithm described in
\secref{uniform:sampling:outliers} with all points at
distance more than $r$ from the query marked as outliers.

\begin{theorem}
    \thmlab{fair:nn:independent}
    Given a set $\PS$ of $n$ points and a parameter $r$, we can
    preprocess it such that given a query $q$, one can report a point
    $p\in \PS$ with probability $1/ n(\q, r)$.  The algorithm uses space
    $O(\LL \log n)$ and has expected query time
    $O\left( \left(\LL + \frac{n(\q, cr)}{n(\q, r)}\right)
        \log^5 n\right)$.
\end{theorem}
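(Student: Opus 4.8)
The plan is to run the outlier‑aware, independent, \emph{exact} sampler of \secref{uniform:sampling:outliers} (that is, \lemref{union-of-sets-independent:outliers}) on top of the boosted \LSH construction of \lemref{bcr:high:prob}. First I would build $t=\Theta(\log n)$ independent \LSH copies $\DS_1,\dots,\DS_t$ as in \lemref{bcr:high:prob}; on top of these I would fix one global random permutation of $\PS$ that assigns each point a rank, keep the points inside every bucket sorted by this rank, and attach a count‑distinct sketch (\secref{sketch}) to each bucket. These augmentations are exactly the input format required by \lemref{union-of-sets-independent:outliers} and cost only lower‑order space and preprocessing, so the space is dominated by the $t$ \LSH copies.

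For a query $\q$ I would scan $\DS_1,\DS_2,\dots$ in order. For $\DS_j$, let $\FamilyA$ be the family of its buckets colliding with $\q$, so $g=\cardin{\FamilyA}=O(\LL\log n)$. Marking every point of $\bigcup\FamilyA$ farther than $r$ from $\q$ as an outlier and invoking \lemref{union-of-sets-independent:outliers} is correct but sweeps over all the distance‑in‑$(r,cr]$ points, yielding only a $\nNY{\q}{cr}-\nNY{\q}{r}$ dependence; to instead get the ratio $\nNY{\q}{cr}/\nNY{\q}{r}$, I would pass to \lemref{union-of-sets-independent:outliers} only the distance‑$\ge cr$ points as the outlier set $\OL$, with cutoff $\mOL=\Theta(\LL\log n)$. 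If it reports $\degY{\FamilyA}{\OL}>\mOL$, the copy $\DS_j$ is useless and I move to $\DS_{j+1}$; otherwise it returns a uniform element of $S':=\bigcup\FamilyA\setminus\OL$, which satisfies $\nbrY{\q}{r}\subseteq S'\subseteq\nbrY{\q}{cr}$. I then test the returned point: if it lies in $\nbrY{\q}{r}$ I output it, otherwise I redraw from $S'$ — reusing the count‑distinct estimate of $\cardin{S'}$ and the window count already computed — until a point within distance $r$ appears.

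For correctness, by \lemref{bcr:high:prob} a constant fraction of the indices $j$ are useful, so that $\nbrY{\q}{r}\subseteq S_{j,\q}$ and, by \lemref{nn-single:2}, $\degY{\FamilyA}{\OL}=O(\LL\log n)\le\mOL$; on such a copy \lemref{union-of-sets-independent:outliers} does not abort and, with high probability, returns a uniform element of $S'$, so conditioning on the (nonempty) event that it lies in $\nbrY{\q}{r}$ gives the uniform distribution on $\nbrY{\q}{r}$, i.e.\ the $1/\nNY{\q}{r}$ guarantee. Since every random choice made during a query — the copy, the window selections, the rejections — is freshly drawn and the query never mutates the data structure, the outputs of distinct queries are mutually independent (usefulness of a copy is a high‑probability property of the preprocessing, so conditioning on it does not correlate the $\le n$ queries). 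The overall failure probability $1/n$ comes from a union bound over the preprocessing of the $t$ copies, the count‑distinct sketches, and the per‑query guarantee of \lemref{union-of-sets-independent:outliers}.

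For the running time, locating a useful copy needs $O(1)$ probes in expectation — and $O(\log n)$ with high probability, there being $t=\Theta(\log n)$ of them with a constant fraction useful — and each probe costs $O((g+\mOL)\log^4 n)=O(\LL\log^5 n)$ by \lemref{union-of-sets-independent:outliers} with $g=\mOL=O(\LL\log n)$. Each redraw from $S'$ on the useful copy lands in $\nbrY{\q}{r}$ with probability $\nNY{\q}{r}/\cardin{S'}\ge\nNY{\q}{r}/\nNY{\q}{cr}$, so $O(\nNY{\q}{cr}/\nNY{\q}{r})$ redraws suffice in expectation, and the step I expect to need the most care is to charge these redraws — and the distance‑in‑$(r,cr]$ points they sweep through inside their windows — so that the neighborhood term $\nNY{\q}{cr}/\nNY{\q}{r}$ enters the bound additively, giving $O\!\pth{\pth{\LL+\nNY{\q}{cr}/\nNY{\q}{r}}\log^5 n}$ rather than a product of the two terms. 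The underlying conceptual obstacle, already present in the dependent case of \lemref{fair:nn:dependent}, is that at query time one cannot tell whether a large $\bigcup\FamilyA$ reflects many genuine near neighbors or an unlucky \LSH copy that dumped many far points into the probed buckets; it is exactly the $\mOL$ threshold together with the $\Theta(\log n)$ independent copies that disentangles the two.
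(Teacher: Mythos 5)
There is a genuine gap in the running-time argument, and it sits exactly at the step you flag as needing ``the most care.'' By declaring only the distance-$\ge cr$ points to be outliers and handing $S'=\bigcup\FamilyA\setminus\OL$ to \lemref{union-of-sets-independent:outliers}, you make the permutation-window size $k$ calibrate to $|S'|=\Theta(\nNY{\q}{cr})$ rather than to $\nNY{\q}{r}$. Every \emph{independent} uniform draw from $S'$ then costs $\Omega(\LL\log^2 n)$: you must pick a fresh segment $h$ uniformly at random and compute $\lambda_{\FamilyA,h}$ by range-querying all $g=O(\LL\log n)$ buckets, and you need $\Theta(\log n)$ such segment probes in expectation before one is accepted. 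Multiplying by the $O(\nNY{\q}{cr}/\nNY{\q}{r})$ rejections of points in the annulus gives a \emph{product} $\tilde{O}\pth{\LL\cdot\nNY{\q}{cr}/\nNY{\q}{r}}$ --- i.e.\ essentially the bound of \lemref{eps:uniform:nn} --- not the additive bound claimed in the theorem. The proposed shortcut of ``reusing the window count already computed'' does not rescue this: if you redraw from within the same accepted segment $\Lambda_h$, you are sampling uniformly from $S'\cap\Lambda_h$, and after conditioning on landing in $\nbrY{\q}{r}$ the point $\p$ is returned with probability proportional to $\lambda_{\FamilyA,h}/|\nbrY{\q}{r}\cap\Lambda_h|$ for its segment, which is not uniform over $\nbrY{\q}{r}$ unless near points and annulus points are identically distributed across segments. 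Reusing only the count-distinct estimate of $k$ preserves uniformity but leaves each redraw at cost $\Omega(\LL\log^2 n)$, so the product remains.

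The paper's proof avoids rejection sampling on top of $S'$ altogether: it marks \emph{all} points at distance greater than $r$ as outliers, so the window size adapts to $k=\Theta(\nNY{\q}{r})$ and the returned point is already uniform on $\nbrY{\q}{r}$. The ratio then enters \emph{additively} through the cost of a single window probe: since the \LSH is tuned so that each annulus point collides with the query $O(1)$ times in expectation over all $g$ buckets, and the random ranks spread these collisions over the $k$ segments, a probe of segment $h$ encounters only $O(\nNY{\q}{cr}/\nNY{\q}{r})$ annulus points in expectation, on top of the unavoidable $O(g\log n)$ bucket-scanning cost. Summed over the $O(\log^3 n)$ probes of \lemref{union-of-sets-independent:outliers} this yields $O\pth{\pth{\LL+\nNY{\q}{cr}/\nNY{\q}{r}}\log^5 n}$. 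Your construction, preprocessing, and correctness/independence discussion otherwise match the paper, but to prove the stated bound you must move the distance-$(r,cr]$ points into the outlier set rather than into the sample space.
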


\begin{proof}
    For $t=O(\log n)$, let $\DS_1,\ldots,\DS_t$ be data structures
    constructed by \LSH.  Let $\Family$ be the set of all buckets in
    all data structures.  For a query point $\q$, consider the family
    $\FamilyA$ of all buckets containing the query, and thus
    $g = \cardin{\FamilyA} = O(\LL \log n)$. Moreover, we let $\OL$ to
    be the set of outliers, i.e., the points that are farther than $r$
    from $\q$.  We proceed to bound the number of outliers that we
    expect to see in Step~\ref{item:outlier:uniform:reject} of the algorithm
    described in \secref{uniform:sampling:outliers}.
  
    By \lemref{nn-single:2}, we expect at most $O(g)$ points at
    distance at least $cr$.  Moreover, we set up the LSH such that the
    probability of colliding with a $cr$-near point is at most $1/g$
    in each bucket.  By the random ranks that we assign to each point,
    we expect to see $n(\q, cr)/k$ $cr$-near points in segment $h$.
    Since, $k = \Theta(n(\q, r))$, we expect to retrieve
    $O\left(g \frac{n(\q, cr)}{g n(\q, r)}\right) = O\left(\frac{n(\q,
           cr)}{n(\q, r)}\right)$ $cr$-near points in one iteration.
    With the same line of reasoning as in
    \lemref{union-of-sets-independent:outliers}, we can bound
    the expected running time of the algorithm by
    $O\left( \left(\LL + \frac{n(\q, cr)}{n(\q, r)}\right)
        \log^5 n\right)$.
\end{proof}

\begin{remark}
    With the same argument as above, we can solve Fair \NN with an
    approximate neighborhood (in which we are allowed to return points
    within distance $cr$) in expected time
    $O\bigl( \LL \log^5 n)$.
\end{remark}

We now turn our focus on designing an algorithm with a high
probability bound on its running time.  We note that we cannot use the
algorithm from above directly because with constant probability more
than $B(\q, cr) + \LL$ points are outliers.  
The following lemma, similar in nature to Lemma~\lemref{bcr:high:prob}, 
proves that by considering independent
repetitions, we can guarantee that there exists an LSH data structure
with a small number of non-near neighbors colliding with the query.

\begin{lemma}
    \lemlab{bcr:high:prob2}
    Let the query point be $\q$. Let $\DS_1, \ldots, \DS_t$ be
    $t=\Theta(\log n)$ independent LSH data structures, each
    consisting of $\Theta(\LL\log n)$ hash functions.  Then, with high
    probability, there exists $j \in \{1, \ldots, t\}$ such that (i)
    $B(\q, r) \subseteq S_{j, \q} = \bigcup_i \bucket^j_i(\q)$ and (ii)
    the number of non-near points colliding with the query (with
    duplicates) in all $\Theta(\LL \log n)$ repetitions is
    $O((n(\q, cr) + \LL) \log n)$.
\end{lemma}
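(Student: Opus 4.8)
The plan is to follow the template of \lemref{bcr:high:prob}, with one essential change. There, property~(ii) asked only for a bound on the number of \emph{outliers} (points at distance more than $cr$), which holds with constant probability per data structure and hence for a constant fraction of the $t$ structures; here property~(ii) concerns \emph{all} non-near points, whose collision count a single LSH data structure bounds only \emph{in expectation} (via \lemref{nn-single:2}(iii)), i.e.\ with constant probability. We therefore will not hope for a constant fraction of good indices but only for the existence of one, obtained by amplifying over the $t=\Theta(\log n)$ independent structures $\DS_1,\dots,\DS_t$ — each of which, having $\Theta(\LL\log n)$ hash functions, is an instance of the data structure of \lemref{nn-single:2}.

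First I would analyze a single structure $\DS_j$. Let $A_j$ denote the event $B(\q,r)\subseteq S_{j,\q}$, and let $B_j$ denote the event that the number of points at distance more than $r$ from $\q$ that collide with $\q$ in $\DS_j$, counted with multiplicity over its $\Theta(\LL\log n)$ hash tables, is at most $C\,(n(\q,cr)+\LL)\log n$ for a suitably large absolute constant $C$. Part~(i) of \lemref{nn-single:2} gives $\Prob{A_j}\ge 0.9$. By part~(iii) of the same lemma (bounding the number of such collisions counted with multiplicity), the expectation of the count appearing in $B_j$ is $O\bigl((\LL+n(\q,cr)-n(\q,r))\log n\bigr)=O\bigl((\LL+n(\q,cr))\log n\bigr)$, so Markov's inequality, with $C$ chosen to absorb the hidden constant, yields $\Prob{\neg B_j}\le 1/6$. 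A union bound then gives $\Prob{A_j\cap B_j}\ge 1-1/10-1/6>1/2$; the point is that $A_j$ holds with a \emph{large} constant probability, which is what keeps the intersection bounded below by a constant.

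Finally, since $\DS_1,\dots,\DS_t$ are independent, so are the events $A_j\cap B_j$, and therefore $\Prob{\forall\, j\colon \neg(A_j\cap B_j)}\le (1/2)^t$. Taking the constant in $t=\Theta(\log n)$ large enough makes this at most $1/n^{O(1)}$, which is exactly the claim: with high probability there is an index $j\in\{1,\dots,t\}$ for which both (i) $B(\q,r)\subseteq S_{j,\q}$ and (ii) the number of non-near points colliding with $\q$ in $\DS_j$ (with duplicates) is $O\bigl((n(\q,cr)+\LL)\log n\bigr)$. I expect the only delicate point to be exactly the one already highlighted: the bare \LSH definition controls only marginal collision probabilities, so no Chernoff concentration is available for the non-near count within a single structure, the per-structure guarantee is merely Markov-type, and the amplification to high probability must be carried out while still pinning conditions (i) and (ii) to a common index $j$ — which succeeds only because (i) is itself a high-probability event.
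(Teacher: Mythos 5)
Your proposal is correct and follows essentially the same route as the paper: bound the expected non-near collision count per structure, apply Markov to get a constant failure probability, and amplify over the $t=\Theta(\log n)$ independent structures. The only (immaterial) organizational difference is that the paper arranges for property~(i) to hold for \emph{all} $t$ structures simultaneously (by boosting the per-point collision probability to $1-1/(n^2\log n)$ and union bounding), whereas you intersect the two events per structure at constant probability $>1/2$ before amplifying --- both yield the claimed high-probability existence of a good index $j$.
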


\begin{proof}
    Property (i) can be achieved for all $t$ data structures
    simultaneously by setting the constant in $\Theta(\LL \log n)$ such
    that each near point has a probability of at least
    $1 - 1/(n^2\log n)$ to collide with the query.  A union bound over
    the $\log n$ independent data structure and the $n(\q, r) \leq n$
    near points then results in (i).  To see that (ii) is true,
    observe that in a single data structure, we expect not more than
    $\LL \log n$ far points and $n(\q, cr) \log n$ $cr$-near points to collide with
    the query. (Recall that each $cr$-near point is expected to collide at most once with the query.) By Markov's inequality, with probability at most $1/3$
    we see more than $3(\LL + n(\q, cr))\log n$ such points.  Using
    $t = \Theta(\log n)$ independent data structures, with high
    probability, there exists a data structure that has at most
    $3(\LL + n(\q, cr)) \log n$ points at distance larger than $r$ colliding
    with the query.
\end{proof}

\begin{lemma}
    \lemlab{fair:nn:independent:high:probability}
    Given a set $\PS$ of $n$ points and a parameter $r$, we can
    preprocess it such that given a query $q$, one can report a point
    $p\in \PS$ with probability $1/ n(\q, r)$.  The algorithm uses space
    $O(\LL \log^2 n)$ and has query time
    $O\bigl( (\LL + n(\q, cr)) \log^6 n\bigr)$ with high
    probability.
\end{lemma}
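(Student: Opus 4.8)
The plan is to upgrade the proof of \thmref{fair:nn:independent} in two ways: replace its expectation-only LSH guarantee by the robust structure of \lemref{bcr:high:prob2}, and wrap the outlier-aware sampler of \secref{uniform:sampling:outliers} in a doubling search over the (unknown) outlier budget $\mOL$. Concretely, I would build $t=\Theta(\log n)$ independent LSH data structures $\DS_1,\dots,\DS_t$, each with $\Theta(\LL\log n)$ hash functions as in \lemref{bcr:high:prob2}; a single global random permutation $\Lambda$ of $\PS$ supplies the ranks, and every bucket is augmented with the rank-sorted array and the count-distinct sketch used by the algorithm of \secref{uniform:sampling:outliers}. The space is $O(\LL\log^2 n)$ by the same accounting as in \thmref{fair:nn:independent}, with one extra $\log n$ factor because each $\DS_j$ uses $\Theta(\LL\log n)$ rather than $\LL$ hash functions. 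For a query $\q$, in each $\DS_j$ let $\FamilyA_j$ be the family of the $g=O(\LL\log n)$ buckets that collide with $\q$, and mark as outliers $\OL$ all points at distance more than $r$ from $\q$ (membership in $\OL$ is an $O(1)$-time distance computation).

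The query algorithm I have in mind proceeds in doubling rounds: for $i=0,1,2,\dots$ with $\mOL_i=2^i\LL$, and for each $j=1,\dots,t$ in turn, run the algorithm of \lemref{union-of-sets-independent:outliers} on $(\FamilyA_j,\OL,\mOL_i)$; output the first sample it returns, skip to the next $j$ whenever it reports $\degY{\FamilyA_j}{\OL}>\mOL_i$, and increment $i$ once all $t$ data structures have been tried. (If every invocation returns $\perp$, declare $\nbrY{\q}{r}=\emptyset$; the chance this conclusion is wrong is folded into the failure bound below.) Note that \lemref{union-of-sets-independent:outliers} already copes internally with the other unknown quantity, $N=|\nbrY{\q}{r}|$, via its own doubling over the segment count $k$, so the only thing the outer loop must search for is $\mOL$.

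For correctness I would invoke part~(i) of \lemref{bcr:high:prob2}: with high probability $\nbrY{\q}{r}\subseteq S_{j,\q}=\bigcup \FamilyA_j$ holds for \emph{every} $j$ simultaneously, so, since $\OL$ is exactly the complement of $\nbrY{\q}{r}$, we get $\bigcup \FamilyA_j\setminus\OL=\nbrY{\q}{r}$ for all $j$. Hence whichever pair $(i,j)$ produces the output, \lemref{union-of-sets-independent:outliers} certifies it to be a uniform sample of $\nbrY{\q}{r}$; an ``unlucky'' choice of $j$ can only trigger the ``too many outliers'' branch, never a biased sample. All randomness used at query time is fresh, so successive queries are independent, yielding the stated per-query probability $1/n(\q,r)$. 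The overall failure probability is at most that of \lemref{bcr:high:prob2} plus the sum of the $O(1/n^2)$ failure probabilities over the $O(\log^2 n)$ individual invocations of \lemref{union-of-sets-independent:outliers}, hence $1-1/n^{\Omega(1)}$.

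For the running time I would use part~(ii) of \lemref{bcr:high:prob2}: with high probability there is an index $j^\star$ with $\degY{\FamilyA_{j^\star}}{\OL}=O\bigl((n(\q,cr)+\LL)\log n\bigr)$, so the doubling search terminates no later than the level with $\mOL_i=\Theta\bigl((n(\q,cr)+\LL)\log n\bigr)$, at which point the call on $\DS_{j^\star}$ succeeds with high probability. A single call with budget $\mOL_i$ costs $O\bigl((g+\mOL_i)\log^4 n\bigr)=O\bigl((\LL\log n+\mOL_i)\log^4 n\bigr)$ by \lemref{union-of-sets-independent:outliers}; summing over the $t=\Theta(\log n)$ data structures at the terminating level, and over the geometric sequence of levels (the last one dominating), gives
\[
O\!\Bigl(\log n\cdot\bigl(\LL\log n+(n(\q,cr)+\LL)\log n\bigr)\log^4 n\Bigr)=O\bigl((\LL+n(\q,cr))\log^6 n\bigr).
\]
The hard part, and the only genuine subtlety, is that $\mOL$ (equivalently $n(\q,cr)$) is unavailable at query time: the doubling search works only because \lemref{bcr:high:prob2} simultaneously guarantees one data structure with a bounded outlier count (which forces termination) and leaves the output distribution correct on all the data structures probed along the way; a secondary point to verify is that the union bound over the $O(\log^2 n)$ internal samplers still keeps the total success probability at $1-1/n^{\Omega(1)}$.
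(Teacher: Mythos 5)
Your overall architecture is genuinely different from the paper's. The paper does not search over the outlier budget at all: it starts the internal segment search at $k=n$ and, in \emph{every} sampling iteration, first measures the work needed to process the chosen segment in each of the $t=\Theta(\log n)$ data structures and then executes the iteration only in the cheapest one. The per-iteration cost is therefore always dominated by the single ``good'' structure guaranteed by \lemref{bcr:high:prob2}, even though the algorithm never identifies which structure that is. Your scheme instead commits to one data structure at a time and runs a doubling search over $\mOL$ to discover the right budget.

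The gap is in the step ``a single call with budget $\mOL_i$ costs $O((g+\mOL_i)\log^4 n)$ by \lemref{union-of-sets-independent:outliers}.'' That lemma bounds the query time only \emph{in expectation}, and converting expected bounds into high-probability ones is exactly what this lemma is supposed to accomplish; you cannot import the expected bound as if it were a worst-case or whp per-call bound. Concretely, at the terminating budget level your loop may first probe a structure $\DS_j$ with $j<j^\star$ whose number of colliding non-near points is $\omega((n(\q,cr)+\LL)\log n)$ --- by the Markov argument inside \lemref{bcr:high:prob2} this happens with constant probability for each individual structure, and there is no a priori bound on the overshoot. Such a call need not abort quickly: the ``too many outliers'' test fires only when a \emph{single segment} yields more than $\mOL_i$ outliers, so if the excess collisions are spread across segments the call can descend through all of its internal $k$-levels, enumerating segments whose total size is governed by the (uncontrolled) collision count of that particular bad $\DS_j$, and may even return a sample without ever aborting. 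Bounding the cost of these calls with high probability requires an argument you have not supplied, since the sizes of the randomly chosen segments concentrate only in expectation. The paper's per-iteration minimal-work selection is precisely the device that removes the dependence on the bad structures; without it, or an explicit high-probability per-call cost analysis, the stated running-time bound does not follow.
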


\begin{proof}
    We build $\Theta(\log n)$ independent LSH data structures, each
    containing $\Theta(\LL \log n)$ repetitions.  For a query point
    $\q$, consider the family $\FamilyA_i$ of all buckets containing
    the query in data structure $\DS_i$.  By
    \lemref{bcr:high:prob2}, in each data structure, all points
    in $B(\q, r)$ collide with the query with high probability.  We
    assume this is true for all data structures.  We change the
    algorithm presented in \secref{uniform:sampling:outliers}
    as follows:
    \begin{compactenumI}
        \item We start the algorithm by setting $k = n$.
        \item Before carrying out
        Step~\itemref{outlier:uniform:reject}, we calculate the work
        necessary to compute $\lambda_{\FamilyA_i, h}$ in each data
        structure $i \in \{1, \ldots, t\}$.  This is done by carrying
        out range queries for the ranks in segment $h$ in each bucket
        to obtain the number of points that we have to inspect.  We
        use the data structure with minimal work to carry out
        Step~\itemref{outlier:uniform:reject}.
    \end{compactenumI}

    Since all points in $B(\q, r)$ collide with the query in each data
    structure, the correctness of the algorithm follows in the same
    way as in the proof of
    \lemref{union-of-sets-independent:outliers}.

    To bound the running time, we concentrate on the iteration in
    which $k=2^{\lceil \log n(\q, r)\rceil}$, e.g.,
    $k = \Theta(n(\q,r))$.  As in the proof of
    \lemref{union-of-sets-independent:outliers}, we have to
    focus on the time it takes to compute $\lambda_{\FamilyA, h}$, the
    number of near points in all $\Theta(\LL \log n)$ repetitions in
    segment $h$ of the chosen data structure.  By
    \lemref{bcr:high:prob}, with high probability there exists
    a partition in which the number of outlier points is
    $O((n(\q, cr) + \LL) \log n)$.  Assume from here on that it exists.
    We picked the data structure with the least amount of work in
    segment $h$.  The data structure with $O((n(\q, cr) + \LL) \log n)$
    outlier points was a possible candidate, and with high probability
    there were $O((n(\q, cr) + \LL) \log^2 n)$ collisions in this data
    structure for the chosen segment.  (There could be
    $\LL\cdot n(\q, cr) \log n$ colliding near points, but with high
    probability we see at most a fraction of $\Theta(\log n/n(\q, r))$
    of those in segment $h$ because of the random ranks.  On the other
    hand, we cannot say anything about how many \emph{distinct}
    non-near points collide with the query.)  In total, we spend time
    $O((n(\q, cr) + \LL) \log^2 n)$ to compute $\lambda_{\FamilyA, h}$
    in the chosen data structure, and time $O(\LL \log^3 n)$ to find out
    which data structure has the least amount of work.  This means
    that we can bound the running time of a single iteration by
    $O((n(\q, cr) + \LL) \log^3 n)$ with high probability.  The lemma
    statement follows by observing that the segment size $k$ is
    adapted at most $\log n$ times, and for each $k$ we carry out
    $O(\log^2 n)$ rounds.
\end{proof}

\section{Fair NN using nearly-linear space LSF}
\seclab{tableau}
In this section, we describe a data structure that solves the Fair \NN
problem using the exact degree computation algorithm from
\secref{uniform}.  The bottleneck of that algorithm was
the computation of the degree of an element which takes time $O(g)$.
However, if we are to use a data structure that has at most
$O(\log n)$ repetitions, this bottleneck will be alleviated.

The following approach uses only the basic filtering approach
described in~\cite{christiani2017framework}, but no other data
structures as was necessary for solving uniform sampling with exact
probabilities in the previous sections.  It can be seen as a
particular instantiation of the more general space-time trade-off data
structures that were described
in~\cite{AndoniLRW17,christiani2017framework}.  It can also be seen as
a variant of the empirical approach discussed
in~\cite{eshghi2008locality} with a theoretical analysis.  Compared
to~\cite{AndoniLRW17,christiani2017framework}, it provides much easier
parameterization and a simpler way to make it efficient.

In this section it will be easier to state bounds on the running time
with respect to inner product similarity on unit length vectors in
$\mathbb{R}^d$.  We define the $(\alpha, \beta)$-NN problem
analogously to $(c, r)$-NN, replacing the distance thresholds $r$ and
$cr$ with $\alpha$ and $\beta$ such that $-1 < \beta < \alpha < 1$.
This means that the algorithm guarantees that if there exists a point
$\p$ with inner product at least $\alpha$ with the query point, the
data structure returns a point $\p^\ast$ with inner product at least
$\beta$ with constant probability.  The reader is reminded that for
unit length vectors we have the relation
$\norm{\p - \q}^2_2 = 2 - 2 \ip{\p}{\q}$. We will use the notation
$B_S(\q, \alpha) = \{\p \in S \mid \ip{\p}{\q} \geq \alpha\}$ and
$b_S(\q, \alpha) = |B_S(\q, \alpha)|$. We define the $\alpha$-\NNIS
problem analogously to $r$-\NNIS with respect to inner product
similarity.

We start with a succinct description of the linear space near-neighbor
data structure.  Next, we will show how to use this data structure to
solve the Fair NN problem under inner product similarity.

\subsection{Description of the data structure}

\paragraph{Construction}
Given $m \geq 1$ and $\alpha < 1$, let
$t = \lceil 1/(1 - \alpha^2)\rceil$ and assume that $m^{1/t}$ is an
integer.  First, choose $tm^{1/t}$ random vectors $\a_{i, j}$, for
$i \in [t], j \in [m^{1/t}]$, where each
$\a_{i, j} = (a_1, \ldots, a_d) \sim \mathcal{N}(0, 1)^d$ is a vector
of $d$ independent and identically distributed standard normal
Gaussians.%
\footnote{As tradition in the literature, we assume that a memory word
   suffices for reaching the desired precision. See~\cite{Charikar02}
   for a discussion.}  Next, consider a point $\p \in \PS$.  For
$i\in [t]$, let $j_i$ denote the index maximizing
$\ip{\p}{\a_{i,j_i}}$.  Then we map the index of $\p$ in $S$ to the
bucket $(j_1,\ldots,j_t) \in [m^{1/t}]^t$, and use a hash table to
keep track of all non-empty buckets.  Since a reference to each data
point in $S$ is stored exactly once, the space usage can be bounded by
$O(n + tm^{1/t})$.

\paragraph{Query} Given the query point $\q$, evaluate the dot
products with all $t m^{1/t}$ vectors $\a_{i, j}$.  For
$\varepsilon \in (0, 1)$, let
$f(\alpha, \varepsilon) = \sqrt{2(1-\alpha^2)\ln(1/\varepsilon)}$.
For $i \in [t]$, let $\Delta_{\q, i}$ be the value of the largest
inner product of $\q$ with the vectors $\a_{i, j}$ for
$j \in [m^{1/t}]$.  Furthermore, let
$\mathcal{I}_i = \{j \mid \ip{\a_{i,j}}{\q} \geq \alpha \Delta_{\q, i}
- f(\alpha, \varepsilon)\}$.  The query algorithm checks the points in
all buckets
$(i_1, \ldots, i_t) \in \mathcal{I}_1 \times \dots \times
\mathcal{I}_t$, one bucket after the other.  If a bucket contains a
close point, return it, otherwise return $\perp$.

\begin{theorem}
    Let $S \subseteq \X$ with $|S| = n$, $-1 < \beta < \alpha < 1$,
    and let $\varepsilon > 0$ be a constant.  Let
    $\rho = \frac{(1-\alpha^2)(1-\beta^2)}{(1 - \alpha\beta)^2}$.
    There exists $m = m(\alpha, \beta, n)$ such that the data
    structure described above solves the $(\alpha, \beta)$-NN problem
    with probability at least $1 - \varepsilon$ in linear space and
    expected time $n^{\rho + o(1)}$.
\end{theorem}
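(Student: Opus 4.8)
The plan is to fix the free parameter $m$ first, then verify the space bound, correctness, and expected query time in turn. Set $\theta=\frac{1-\beta^2}{(1-\alpha\beta)^2}$ and take $m$ to be the smallest $t$-th power that is at least $n^{\theta}$, so that $M:=m^{1/t}=\Theta(n^{\theta/t})$ is an integer. Everything hinges on the algebraic identity $(1-\alpha^2)(1-\beta^2)+(\alpha-\beta)^2=(1-\alpha\beta)^2$, which gives both $\theta(1-\alpha^2)=\rho$ and $1-\theta\frac{(\alpha-\beta)^2}{1-\beta^2}=\rho$; and a one-line monotonicity check, $\frac{d}{d\beta}\frac{1-\beta^2}{(1-\alpha\beta)^2}=\frac{2(\alpha-\beta)}{(1-\alpha\beta)^3}>0$ for $\beta<\alpha$, shows $\theta<\frac{1}{1-\alpha^2}\le t$, so $\theta/t<1$. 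The space bound is then immediate: each of the $n$ points occupies one slot, while the hash table together with the $tM$ Gaussian vectors costs $O(tM)=o(n)$, for a total of $O(n)$.

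For correctness, let $\p$ be the point promised by the hypothesis, with $\ip{\p}{\q}\ge\alpha$; since a larger correlation only strengthens the inequalities below, assume $\ip{\p}{\q}=\alpha$. As the query scans every bucket of $\mathcal I_1\times\cdots\times\mathcal I_t$ until a $\beta$-near point is met, and $\p$ is $\beta$-near, it suffices to show $\Pr[(j_1(\p),\dots,j_t(\p))\in\mathcal I_1\times\cdots\times\mathcal I_t]\ge 1-\varepsilon$. Fix a block $i$, write $\a=\a_{i,j_i(\p)}$ and $s=\ip{\p}{\a}$, and decompose $\q=\alpha\p+\sqrt{1-\alpha^2}\,\p^\perp$; by spherical symmetry of the Gaussian, $Z:=\ip{\p^\perp}{\a}\sim\mathcal N(0,1)$ is independent of $s$ and of the index $j_i(\p)$, so $\ip{\q}{\a}=\alpha s+\sqrt{1-\alpha^2}\,Z$. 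Both $s$ and $\Delta_{\q,i}$ are maxima of $M$ i.i.d.\ standard Gaussians, hence each lies in $\sqrt{2\ln M}\pm O(1/\sqrt{\ln M})$ except with probability $\le\varepsilon/(4t)$; on that event the condition $\alpha s+\sqrt{1-\alpha^2}\,Z\ge\alpha\Delta_{\q,i}-f(\alpha,\varepsilon')$ reduces to $Z\ge-\sqrt{2\ln(1/\varepsilon')}+o(1)$, which holds with probability $\ge 1-\varepsilon'(1+o(1))$ by the standard Gaussian tail estimate. Choosing the query parameter $\varepsilon'=\varepsilon/(ct)$ for a suitable absolute constant $c$ (which only rescales $f$ by a constant, since $t=O(1)$) makes the per-block failure probability at most $\varepsilon/(2t)$, and a union bound over the $t$ blocks completes the argument.

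For the running time I bound the cost of a query deterministically by $O(d\,tM)$ for evaluating the inner products with all $\a_{i,j}$, plus $\prod_i|\mathcal I_i|$ for enumerating the queried buckets, plus $O(d)$ per point actually tested. Because the scan over buckets (and over points within a bucket) stops the first time a $\beta$-near point is seen, every tested point except possibly the last is a \emph{non-near} point (inner product $<\beta$) sitting in some queried bucket; so the number of tested points is at most $1+\sum_{\p':\,\ip{\p'}{\q}<\beta}\mathbf 1[\p'\text{ in a queried bucket}]$. Two estimates remain. First, conditioning on $\Delta_{\q,i}$ and using its concentration at $\sqrt{2\ln M}(1\pm o(1))$ (the rare complement contributing only $M\cdot M^{-\omega(1)}$), the number of the $M$ values $\ip{\a_{i,j}}{\q}$ exceeding $\alpha\Delta_{\q,i}-f$ is at most $M^{\,1-\alpha^2+o(1)}$ in expectation, so by independence across blocks $\Ex{\prod_i|\mathcal I_i|}\le M^{\,t(1-\alpha^2)+o(1)}=m^{\,1-\alpha^2+o(1)}=n^{\rho+o(1)}$. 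Second, for a non-near point $\p'$ with $\ip{\p'}{\q}=c<\beta$ the same decomposition shows that its selected filter in a block passes the query threshold only if an independent $\mathcal N(0,1)$ exceeds $(\alpha-c)\sqrt{2\ln M}/\sqrt{1-c^2}-o(\sqrt{\ln M})$, i.e.\ with probability $M^{-(\alpha-c)^2/(1-c^2)+o(1)}$; since $(\alpha-c)^2/(1-c^2)$ is decreasing in $c$ on $c<\alpha$, this is at most $M^{-(\alpha-\beta)^2/(1-\beta^2)+o(1)}$, and multiplying over the $t$ independent blocks gives $\Pr[\p'\text{ queried}]\le m^{-(\alpha-\beta)^2/(1-\beta^2)+o(1)}$. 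Hence $\sum_{\p'\text{ non-near}}\Pr[\p'\text{ queried}]\le n\cdot m^{-(\alpha-\beta)^2/(1-\beta^2)+o(1)}=n^{\,1-\theta(\alpha-\beta)^2/(1-\beta^2)+o(1)}=n^{\rho+o(1)}$, while $tM=O(n^{\theta/t})=O(n^{\rho})$ because $1/t\le 1-\alpha^2$; since the $O(\cdot)$ hides the usual factor $d$ as in the remark above, all three terms are $n^{\rho+o(1)}$ and the claimed expected query time follows.

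The main obstacle is the second part of the running-time analysis (and, in the same spirit, the estimate of $|\mathcal I_i|$): the filter that a point selects in a block and the block-maximum $\Delta_{\q,i}$ are not independent fresh Gaussians. The way to handle this is to condition on the selecting index and on the realized value of $s$ (or $s'$), peel off an independent $\mathcal N(0,1)$ component via spherical symmetry, and then integrate against the heavy-tailed law of the Gaussian maximum, taking care that every bit of slack stays inside $n^{o(1)}=e^{O(\sqrt{\ln n})}$. A secondary point worth getting right is that a query possessing \emph{many} valid answers must not be charged for all of them---which is exactly why the query cost is charged only to the non-near points tested before the first $\beta$-near point (plus $O(1)$), so that the genuine worst case is the honest one of a single near point among $n-1$ points all at inner product $\beta$ with the query.
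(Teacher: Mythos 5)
Your proof is correct and follows essentially the same route as the paper's (given in the appendix): the same choice of $m \approx n^{(1-\beta^2)/(1-\alpha\beta)^2}$, the same spherical-symmetry decomposition that isolates a fresh $\mathcal{N}(0,1)$ component of the selected filter, the same concentration of the block maxima near $\sqrt{2\ln M}$, and the same balancing of the $m^{1-\alpha^2}$ enumerated buckets against the $n\cdot m^{-(\alpha-\beta)^2/(1-\beta^2)}$ expected far-point collisions. The only cosmetic differences are that you analyze the tensored structure directly rather than first proving the single-block version, and that you secure the $1-\varepsilon$ success probability by a per-block union bound with $\varepsilon'=\varepsilon/(ct)$ where the paper instead remarks that the per-block constant must be adjusted (or the structure repeated) to offset the $t$-fold exponentiation.
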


We remark that this result is equivalent to the running time
statements found in~\cite{christiani2017framework} for the linear
space regime, but the method is considerably simpler.  The analysis
connects storing data points in the list associated with the largest
inner product with well-known bounds on the order statistics of a
collection of standard normal variables as discussed
in~\cite{david2004order}.  The analysis is presented in
\apndref{tableau}.

\subsection[Solving a-\NNIS]{Solving $\alpha$-\NNIS}
\seclab{tableau:independent}

\paragraph*{Construction} Set $L = \Theta(\log n)$ and build $\LL$
independent data structures $\DS_1, \ldots, \DS_\LL$ as described
above. For each $\p \in \PS$, store a reference from $\p$ to the $\LL$
buckets it is stored in.

\paragraph*{Query} We run the rejection sampling approach from
\secref{uniform} on the data structure described
above. For query $\q$, evaluate all $tm^{1/t}$ filters in each
individual $\DS_\ell$.  Let $\FamilyA$ be the set of buckets
$(i_{\ell, 1},\ldots,i_{\ell,t})$ above the query threshold, for each
$\ell \in [L]$, and set $m = |\FamilyA|$.
First, check for the existence of a near neighbor by running the
standard query algorithm described above on every individual data
structure.  This takes expected time
$n^{\rho + o(1)} + O\left(\frac{b_S(\q, \beta)}{b_S(\q, \alpha) + 1}
    \log n\right)$, assuming points in a bucket appear in random
order.  If no near neighbor exists, output $\perp$ and return.
Otherwise, the algorithm performs the following steps until success is
declared:

\begin{compactenumI}
    \regVer{\smallskip}%
    \item \itemlab{linear:s:sample}%
    Picks one set from $\FamilyA$ with probabilities proportional to
    their sizes. That is, a set $\setA \in \FamilyA$ is picked with
    probability $\cardin{\setA} / m$.
    
    \item \itemlab{linear:b:sample}%
    It picks a point $\p \in \setA$ uniformly at random.
    
    \item Computes the degree $\degC = \degY{\FamilyA}{\p}$.

    \item If $\p$ is a far point, remove $\p$ from the bucket update
    the cardinality of $\setA$. Continue to the next iteration.
    
    \item If $\p$ is a near point, outputs $\p$ and stop with
    probability $1/\degC$. Otherwise, continue to the next iteration.
\end{compactenumI}

After a point $\p$ has been reported, move all far points removed
during the process into their bucket again.  As discussed in
\secref{prelims}, we assume that removing and inserting a
point takes constant time in expectation.

\begin{theorem}
    \thmlab{tableau:fair:nn}
    Let $S \subseteq \X$ with $|S| = n$ and $-1 < \beta < \alpha <
    1$. The data structure described above solves the $\alpha$-\NNIS
    problem in nearly-linear space and expected time
    $n^{\rho + o(1)} + O((b_S(\q, \beta)/(b_S(\q, \alpha) + 1)) \log^2
    n)$.
\end{theorem}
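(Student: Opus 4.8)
The plan is to establish three claims for the data structure of this section: it uses nearly-linear space, it samples \emph{exactly} uniformly and independently across queries from $B_S(\q,\alpha)$, and its expected query time is $n^{\rho+o(1)}+O\pth{\tfrac{b_S(\q,\beta)}{b_S(\q,\alpha)+1}\log^2 n}$. Space is immediate: each $\DS_\ell$ is the linear-space structure of the previous theorem, so storing the $L=\Theta(\log n)$ copies together with, for every $\p\in\PS$, back-references to the $L$ buckets containing it, costs $O(n\log n)$ words.

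For correctness I would first invoke the single-structure analysis (see \apndref{tableau}): a fixed point $\p$ with $\ip{\p}{\q}\ge\alpha$ is placed, in one structure, into a bucket that the query examines with probability at least $1-\varepsilon$. Taking the constant in $L=\Theta(\log n)$ large enough and union-bounding over the $\le n$ points with $\ip{\p}{\q}\ge\alpha$ (and, for the independence statement, over the $\le n$ queries), with probability $\ge 1-1/\poly(n)$ over the construction we have $B_S(\q,\alpha)\subseteq\bigcup\FamilyA$ for every query; call this the good event and condition on it. The heart of the argument is that the rejection loop (Steps I--V), which is the exact-degree sampler of \secref{uniform} augmented with deletion of points having $\ip{\cdot}{\q}<\alpha$, returns a uniform element of $B_S(\q,\alpha)$: a point $\p$ with $\ip{\p}{\q}\ge\alpha$ is never deleted, so its degree $\degY{\FamilyA}{\p}$ is unchanged throughout the query, and in any round with current total size $m'$ it is chosen in Steps I--II with probability $\degY{\FamilyA}{\p}/m'$ and retained in Step V with probability $1/\degY{\FamilyA}{\p}$, a net per-round output probability of $1/m'$ that is the same for all such $\p$. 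Deletions of far points only shrink $m'$ and never touch a point of $B_S(\q,\alpha)$, so conditioning on the loop terminating in a given round gives the uniform distribution on $B_S(\q,\alpha)$; if every bucket empties out then $B_S(\q,\alpha)=\emptyset$ under the good event, and returning $\perp$ is correct. Independence across queries follows because each query draws fresh randomness and reinserts all deleted far points afterwards, so, conditioned on the construction-time good event, each query's output is exactly uniform and independent of the history --- which is what the $\alpha$-\NNIS guarantee (the inner-product analogue of \defref{nnis}) requires.

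For the running time, the existence check costs $n^{\rho+o(1)}+O\pth{\tfrac{b_S(\q,\beta)}{b_S(\q,\alpha)+1}\log n}$ as recorded in the construction (assuming, as there, that points within a bucket appear in random order), and if it returns nothing we output $\perp$ within this bound. Otherwise the rejection loop takes over, and each round costs $O(\log n)$: Steps I--II use the weighted-sampling tree of \lemref{ds-tree}, which also absorbs the size updates caused by deletions, in $O(\log n)$ time, and --- the point emphasized at the start of the section --- $\degY{\FamilyA}{\p}\le L=\Theta(\log n)$ for every point, since $\p$ occupies exactly one bucket in each $\DS_\ell$, so Step III is computed in $O(L)=O(\log n)$ time by scanning the $L$ stored references of $\p$ and testing membership in $\FamilyA$. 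The expected number of rounds is at most $m/N$, where $m=\sum_{\setA\in\FamilyA}\cardin{\setA}$ is the total number of (bucket, point) incidences over all examined buckets and $N=b_S(\q,\alpha)\ge 1$, because the per-round output probability $N/m'$ is only ever decreased by deletions and hence always $\ge N/m$. Finally, over the $L=\Theta(\log n)$ independent structures, each of the at most $b_S(\q,\beta)$ points with $\ip{\cdot}{\q}\ge\beta$ contributes at most one incidence per structure, while the expected number of incidences from points with $\ip{\cdot}{\q}<\beta$ is at most the expected number of points inspected in a full scan of one structure, which the single-structure analysis bounds by $n^{\rho+o(1)}$; hence $\Ex{m}=n^{\rho+o(1)}+O(b_S(\q,\beta)\log n)$. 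Multiplying $m/N$ by the $O(\log n)$ per-round cost and using $N\ge 1$ (and absorbing the resulting $\log^2 n$ factor on the $n^{\rho+o(1)}$ term, since $\rho$ is a positive constant) yields $n^{\rho+o(1)}+O\pth{\tfrac{b_S(\q,\beta)}{b_S(\q,\alpha)+1}\log^2 n}$, which, added to the existence check, is the claimed bound.

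I expect the main obstacle to be the running-time accounting rather than the correctness: establishing $\Ex{m}=n^{\rho+o(1)}+O(b_S(\q,\beta)\log n)$ from the Gaussian order-statistics analysis of the underlying filters, so that collisions with points in $B_S(\q,\beta)\setminus B_S(\q,\alpha)$ are charged to the $\tfrac{b_S(\q,\beta)}{b_S(\q,\alpha)+1}$ term while the remaining ``truly far'' collisions are charged to $n^{\rho+o(1)}$, together with the care needed to confirm that the deletion and reinsertion of far points leaves the output distribution exactly uniform (not merely $\eps$-uniform, as in \lemref{outliers}) and keeps successive queries independent.
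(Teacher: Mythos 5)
Your proposal is correct and follows essentially the same route as the paper's proof: uniformity comes from the exact-degree rejection-sampling argument of \lemref{q:exact} (with the observation that deleted far points never affect the degrees of points in $B_S(\q,\alpha)$), the filter evaluation, existence check, and far-point deletions are charged to $n^{\rho+o(1)}$, and the remaining rounds are bounded by the ratio of total non-far incidences $O(b_S(\q,\beta)\log n)$ to $b_S(\q,\alpha)$, with $O(\log n)$ per round for the degree computation since each point lies in exactly one bucket per repetition. Your accounting of the round count via $\Ex{m}/N$ is just a repackaging of the paper's per-round success probability $\Omega\bigl(b_S(\q,\alpha)/(b_S(\q,\beta)\log n)\bigr)$, and your added care about the shrinking total mass $m'$ and cross-query independence only strengthens the argument.
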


\begin{proof}
    Set $L = \Theta(\log n)$ such that with probability at least
    $1-1/n^2$, all points in $B_S(\q, \alpha)$ are found in the $T$
    buckets.  Let $\p$ be an arbitrary point in $B_S(\q, \alpha)$.
    The output is uniform by the arguments given in the proof of the
    original variant in \lemref{q:exact}.

    We proceed to prove the running time statement in the case that
    there exists a point in $B_S(\q, \alpha)$. (See the discussion
    above for the case $b_S(\q, \alpha) = 0$.)  Observe that
    evaluating all filters, checking for the existence of a near
    neighbor, removing far points, and putting far points back into
    the buckets contributes $n^{\rho + o(1)}$ to the expected running
    time (see \apndref{tableau} for details).  We did not account for
    repeatedly carrying out steps \textsf{A}--\textsf{C} yet for
    rounds in which we choose a non-far point.  To this end, we next
    find a lower bound on the probability that the algorithm declares
    success in a single such round.  First, observe that there are
    $O(b_S(\q, \beta) \log n)$ non-far points in the $T$ buckets (with
    repetitions).  Fix a point $\p \in B_S(\q, \alpha)$.  With
    probability $\Omega(c_\p/(b_S(\q, \beta) \log n))$, $\p$ is chosen
    in step B.  Thus, with probability
    $\Omega(1/(b_S(\q, \beta) \log n))$, success is declared for point
    $\p$.  Summing up probabilities over all points in
    $B_S(\q, \alpha)$, we find that the probability of declaring
    success in a single round is
    $\Omega(b_S(\q, \alpha)/(b_S(\q, \beta) \log n))$.  This means
    that we expect $O(b_S(\q, \beta)\log n/b_S(\q, \alpha))$ rounds
    until the algorithm declares success.  Each round takes time
    $O(\log n)$ for computing $c_\p$ (which can be done by marking all
    buckets that are enumerated), so we expect to spend time
    $O((b_S(\q, \beta) / b_S(\q, \alpha))\log^2 n)$ for these
    iterations, which concludes the proof.
\end{proof}

\section{Experimental Evaluation}
\seclab{evaluation}
This section presents a principled experimental evaluation that sheds
light on the general fairness implications of our problem definitions.
The aim of this evaluation is to complement the theoretical study with
a case study focusing on the fairness implications of solving variants
of the near-neighbor problem.  The evaluation contains both a
validation of the (u{}n)fairness of traditional approaches in a
recommendation setting on real-world datasets, an empirical study of
unfairness in approximate approaches, a evaluation of the average
query time of different methods in this paper, and a short discussion
of the additional cost introduced by solving the exact neighborhood
problem.  We implemented all methods and additional tools in Python 3,
and also re-implemented some special cases in C\,++ for running time
observations.  The code, raw result files, and the experimental log
containing more details are available at
\regVer{\url{https://github.com/alfahaf/fair-nn}}\dbVer{\url{https://github.com/tods21fairness/submission}}.  Moreover, the repository
contains all scripts and a Docker build script necessary to reproduce
and verify the plots presented here.

\paragraph{Datasets and Query Selection}
We run our experiments on five different datasets which are either
standard benchmarks in a recommendation system setting or in a nearest
neighbor search context (see~\cite{dataset}):
\begin{compactenumI}
    \item \MovieLens: a dataset mapping 2112 users to
    65536 unique movies.  We obtain a set representation by mapping
    each user to movies rated 4 or higher by the user, resulting in an
    average set size of 178.1 ($\sigma=187.5$).
    \item \textsf{Last.FM}: a dataset with 1892 users and 19739
    unique artists. We obtain a set representation by mapping each
    user to their top-20 artists, resulting in an average set size of
    19.8 ($\sigma=1.78$).
    \item \MNIST: a random subset of 10K points in the \MNIST training
    data set \cite{lecun1998gradient}. The full data set contains 60K
    images of hand-written digits, where each image is of size $28$ by
    $28$. Therefore, each of our points lie in a $784$ dimensional
    Euclidean space and each coordinate is in $[0,255]$.
    \item \textsf{SIFT}: We take a random subset of 10K vectors of
    the SIFT1M image descriptors that contains 1M 128-dimensional
    points.
    \item \GloVe: Finally, we take a random subset of 10K words from
    the \GloVe data set~\cite{pennington2014glove}. \si{GloVe} is a
    data set of 1.2M word embeddings in 100-dimensional space.
\end{compactenumI}
All datasets are processed automatically by our experimental
framework.  For the first two datasets, we measure the similarity of
two user sets $\mathbf{X}$ and $\mathbf{Y}$ by their Jaccard
similarity
$J(\mathbf{X},\mathbf{Y}) = |\mathbf{X} \cap \mathbf{Y}| / |\mathbf{X}
\cup \mathbf{Y}|$.  For the latter three datasets, we measure distance
by using Euclidean distance/L$_2$ norm.

For each dataset, we pick a set of ``interesting queries'' to
guarantee that the output size is not too small.  More specifically,
we consider all data points as potential queries for which the 40\th
nearest neighbor is above a certain distance threshold.  Among those
points, we choose 50 data points at random as queries and remove them
from the data set.

\paragraph{Algorithms.} 
Two different distance measures made it necessary to implement two
different LSH families.  For Jaccard similarity, we implemented LSH
using standard Min{}Hash~\cite{bro97b} and applying the 1-bit scheme
of Li and König~\cite{LiK10}.  The implementation takes two parameters
$k$ and $\LL$, as discussed in \secref{lsh}.  We choose $k$ and
$\LL$ such that the average false negative rate (the ratio of near
points not colliding with the queries) is not more than 10\%.  In
particular, $k$ is set such that we expect no more than $5$ points
with Jaccard similarity at most 0.1 to have the same hash value as the
query in a single repetition.  Both for \textsf{Last.FM} and
\MovieLens we used $k=8$ and $L=100$ with a similarity threshold of
0.2 and 0.25, respectively.

For $L_2$ Euclidean distance, we use the LSH data structure from
\cite{diim-lshsb-04}.  Each of the $\LL$ hash functions $g_i$ is the
concatenation of $k$ unit hash functions
$h_i^1\circ\cdots\circ h_i^k$.  Each of the unit hash functions
$h_i^j$ is chosen by selecting a point in a random direction (by
choosing every coordinate from a Gaussian distribution with parameters
$(0,1)$).  Then all the points are projected onto this one-dimensional
direction, and we put a randomly shifted one-dimensional grid of
length $w$ along this direction.  The cells of this grid are
considered as buckets of the unit hash function.  For tuning the
parameters of \LSH, we follow the method described in
\cite{diim-lshsb-04}, and the manual of E2LSH library \cite{E2LSH}, as
follows.

For \MNIST, the average distance of a query to its nearest neighbor in
our data set is around $1250$.  Thus, we choose the near neighbor
radius $r = 1275$.  With this choice, the $r$-neighborhood of all but
one query is non-empty.  We tune the value of $w$ and set it to
$3750$.  As before, we tune $k$ and $\LL$ so that the false negative
rate is less than $10\%$, and moreover the cost of hashing
(proportional to $\LL$) balances out the cost of scanning.  This results
in the parameter choices $k=15$ and $L=100$.  This also agrees with
the fact that $\LL$ should be roughly the square root of the total
number of points.
We use the same method for the other two data sets.  For
\textsf{SIFT}, we use $R=270$, $w=870$, $k=15$, $L=100$, and for
\textsf{GLOVE} we use $R=4.7$, $w=15.7$, $k=15$, and $L=100$.

\paragraph{Algorithms}
Given a query point $\q$, we retrieve all $\LL$ buckets corresponding to
the query.  We then implement the following algorithms and compare
their performance in returning a uniform neighbor of the query point.
\begin{compactitem}[leftmargin=0.5cm]
    \item \textbf{Uniform/Uniform}: Picks bucket uniformly at random
    and picks a random point in bucket. %
    \item \textbf{Weighted/Uniform}: Picks bucket according to its
    size, and picks uniformly random point inside bucket.
    \item \textbf{Optimal}: Picks bucket according to size, and then
    picks uniformly random point $p$ inside bucket. Then it computes
    $p$'s degree \emph{exactly} and rejects $p$ with probability
    $1-1/d(p)$.
    \item \textbf{Degree approximation}: Picks bucket according to
    size, and picks uniformly random point $p$ inside bucket. It
    approximates $p$'s degree and rejects $p$ with probability
    $1-1/d'(p)$.
    \item \textbf{Rank perturbation}: Picks the point with minimal
    rank among all buckets and perturbs the rank afterwards.
\end{compactitem}
Each method removes non-close points that might be selected from the
bucket.  We remark that the variant Uniform/Uniform most closely
resembles a standard LSH approach.  Weighted/Uniform takes the
different bucket sizes into account, but disregards the individual
frequency of a point.  Thus, the output is \emph{not expected} to be
uniform, but might be closer in distribution to the uniform
distribution.

\paragraph{Degree approximation method.} We use the algorithm of
\secref{almost:uniform} for the degree approximation: we implement a
variant of the sampling algorithm which repeatedly samples a bucket
uniformly at random and checks whether $p$ belongs to the bucket. If
the first time this happens is at iteration $i$, then it outputs the
estimate as $d'(p)=L/i$.

\paragraph{Objectives of the Experiments} Our experiments are tailored
to answer the following questions:

\begin{enumerate}
    \item[(Q1)] How (u{}n)fair is the output of different query
    algorithms in a real-world scenario?
    \item[(Q2)] What is the extra cost term for solving the exact
    neighborhood problem?
    \item[(Q3)] How quickly can the different approaches answer
    queries?
    \item[(Q4)] How fair is the output of an algorithm solving the
    approximate neighborhood version?
\end{enumerate}

\begin{figure*}[t!]
    \centering \includegraphics[width=.9\textwidth]{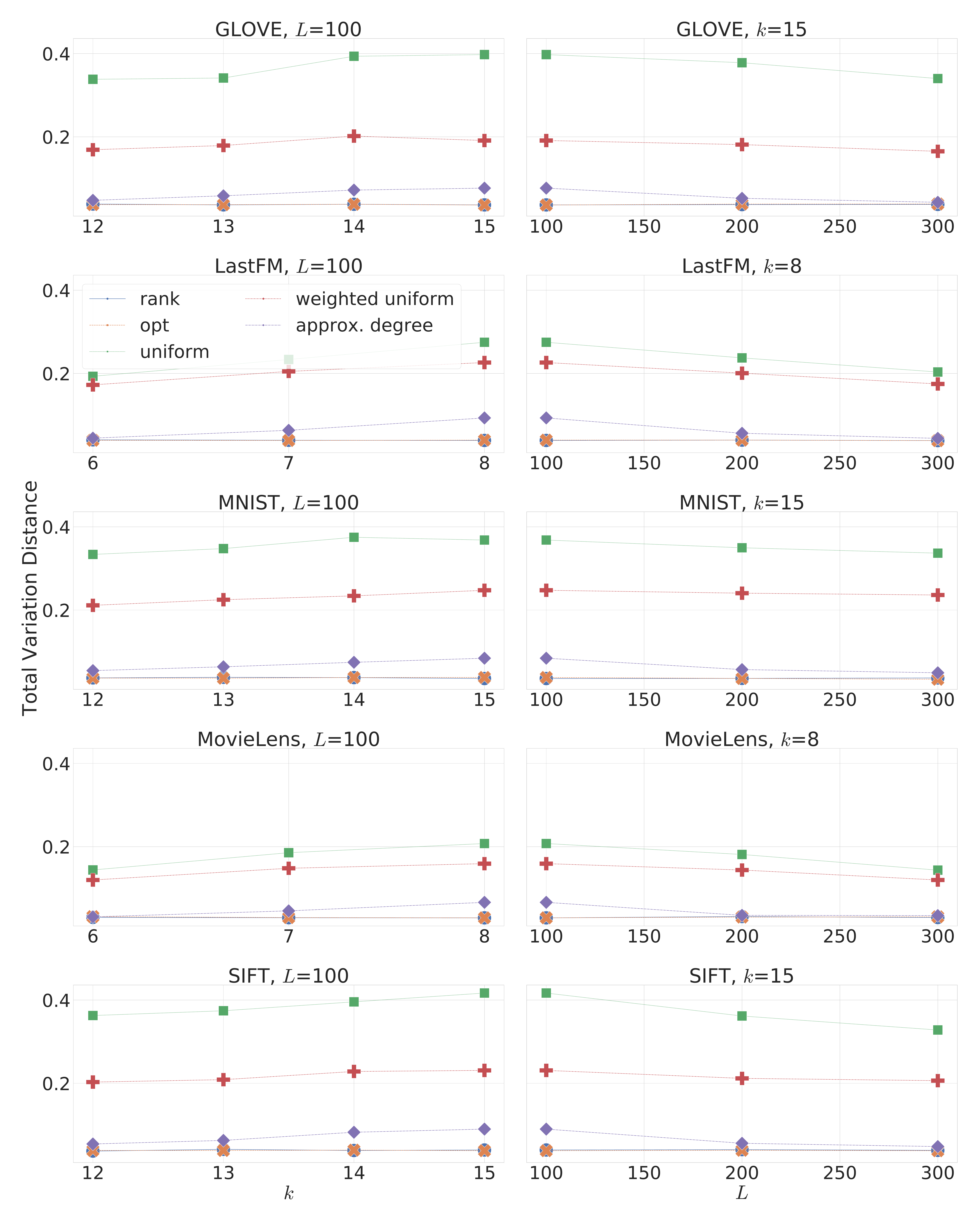}
    \caption{Total Variation Distance of the methods on the datasets.
       Parameter choices are $k=15,L=100$ for \textsf{GLOVE},
       \textsf{\MNIST}, \textsf{SIFT} and $k=8, L=100$ for
       \textsf{Last.FM} and \MovieLens.  Choices are varied to
       introduce more collisions.}
    \figlab{tvd}
\end{figure*}

\begin{figure*}[t]
    \includegraphics[width=.9\textwidth]{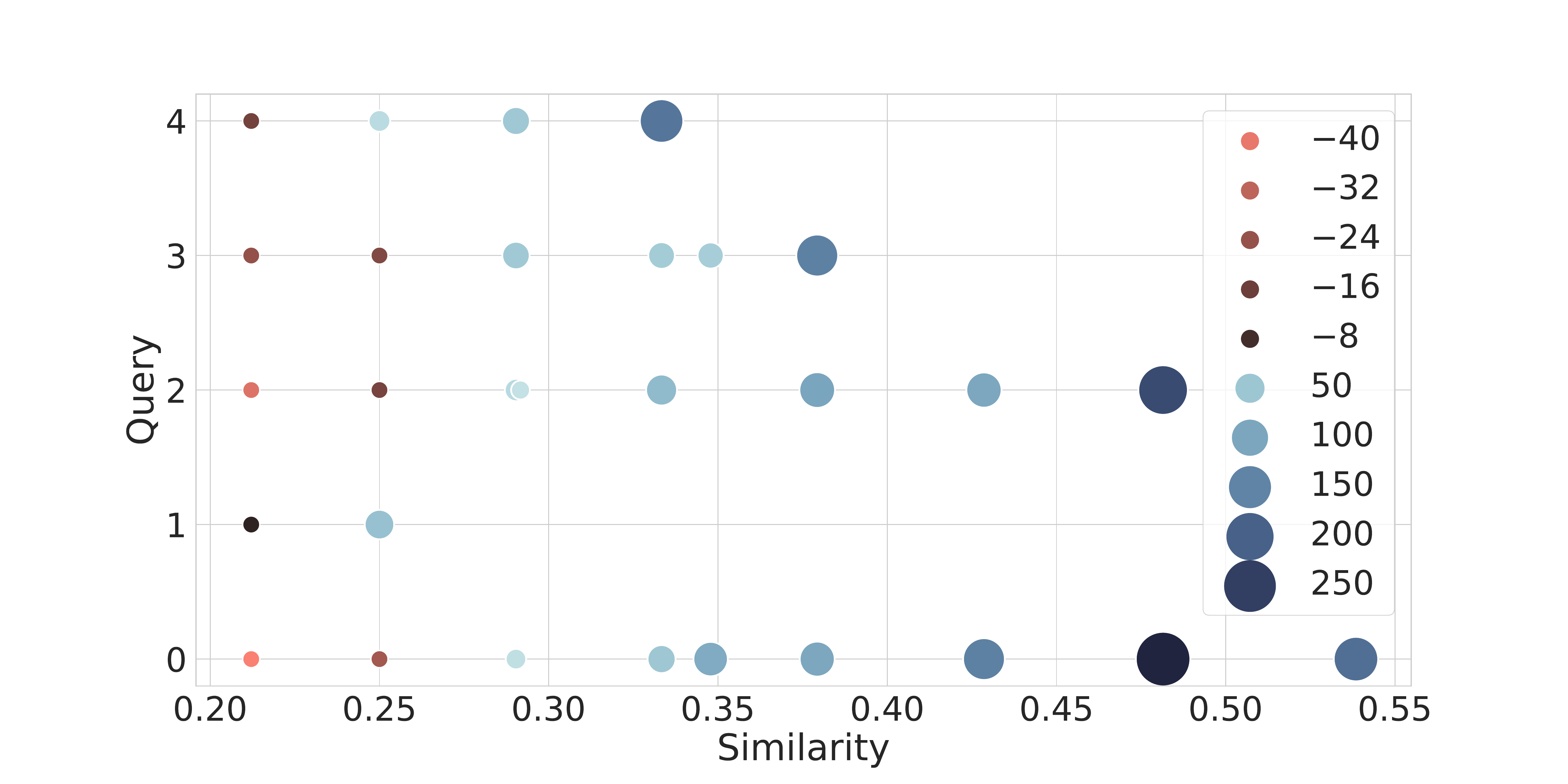}
    \caption{Scatter plot of differences in reported frequencies on
       \textsf{Last.FM} between uniform bucket (standard LSH) sampling
       and optimal (fair) sampling on five queries.  $x$-axis:
       Similarity to query point; $y$-axis: id of the query.  Each
       point represents the average difference in frequency among all
       points with the same similarity to a fixed query point.  Darker
       and larger circles means large difference in report
       frequencies.  Note large discrepancies for points at low and
       high similarity.}
    \figlab{standardfair}
\end{figure*}

\subsection{Output Distribution of Different Approaches (Q1)}

We start by building the LSH data structure with the parameters
detailed in the previous subsection.  For each query $\q$, we first
compute the number $M(\q)$ of near neighbors that collide with the
query point.  Next, we repeat the query $\q$ $100M(\q)$ times and
collect the reported points.

\figref{tvd} provides an overview over the resulting output
distribution of the five different approaches on the five datasets.
Each data point is the \emph{total variation distance}%
\footnote{For two discrete distributions $\mu$ and $\nu$ on a finite
   set $X$, the total variation distance is
   $\frac{1}{2}\sum_{x\in X} |\mu(x)-\nu(x)|$.}  (or \emph{statistical
   distance}) between the output distribution and the uniform
distribution.  Let us first concentrate on the standard choices
$k=15, L=100$ for Euclidean distance and $k=8, L=100$ for Jaccard
similarity.  We make the following observations.  Firstly, there is a
clear difference in the output distribution between the
(approximately) fair approaches and the two standard choices
\emph{uniform} and \emph{weighted uniform}.  On all datasets, there is
a large deviation from the uniform distribution with large differences
between the two approaches on the three Euclidean distance datasets,
and a marginal differences on the two Jaccard similarity datasets.
The degree approximation is much closer to the uniform distribution.
The rank approach described in
\secref{union-of-sets-dependent} is nearly indistinguishable
from the independent sampling approach that uses exact degree
computations.  This means that on the used datasets there is little
overlap between the neighborhood of the query points, and the
dependence considerations discussed in
\secref{union-of-sets-dependent} do not materialize.

As visible from the plot, we vary the parameters to introduce more
collisions between the query and dataset points in the following two
ways: We shorten the number of concatenated hash functions for a fixed
number of repetitions, and we increase the number of repetitions for a
fixed concatenation length.  In both cases, the output distributions
are closer to the uniform distribution.

Since it is difficult to argue about the unfairness of a distribution
that is around .4 away in total variation distance from the uniform
distribution, we provide a ``zoomed in'' visualization in
\figref{standardfair}.  In this plot, we show the difference
in reported frequencies between the uniform/uniform approach and fair
sampling for 5 out of the 50 queries on the \textsf{Last.FM} dataset.
Note that using our querying approach, each point should be roughly
reported 100 times.  This is true for the fair sampling approach (with
frequencies between 92 and 108), but we can clearly see a large
difference to the standard sampling approach.  This approach heavily
under-reports points at lower similarity above the threshold .6 (only
60 reports on average), and heavily over-reports points at high
similarity (360 reports in the maximum), which means that individually
there is a difference of a factor of nearly 6 in the reported
frequencies.  This means that standard approaches will indeed yield
biased neighborhoods in real-world situations.  Using algorithms that
solve the independent sampling version of the $r$-NN problem eliminate
such bias.

\subsection{Additional cost factor for solving the exact neighborhood
   variant (Q2)}
\seclab{costRatio}

The running time bounds of the algorithms in \secref{s:f:nn} and
\secref{tableau} have an additional additive or multiplicative running
time factor $\tilde{O}(b_S(\q, cr)/b_S(\q, r))$, putting in relation
the number of points at distance at most $cr, c\geq 1,$ (or similarity
at least $cr, c \leq 1,$) to those at threshold $r$.  The values $r$
and $cr$ are the distance/similarity thresholds picked when building
the data structure.  In general, a larger gap between $r$ and $cr$
makes the $n^\rho$ term in the running times smaller.  However, the
additive or multiplicative cost $\tilde{O}(b_S(\q, cr)/b_S(\q, r))$ can potentially be
prohibitively large for worst-case datasets.

\figref{ratio} depicts the ratio of points with similarity at least $cr$ and $r$ for two of the real-world datasets (the other three showed similar behavior).  
We see that one has to be careful and choose a rather small approximation factor,
since the ratio $b_S(\q, cr)/b_S(\q, r)$ can easily be the domination factor. For example, $c=2$ means that we may set $\LL$ to roughly $n^{1/4}$, which is 10. The cost of the query is then dominated by the ratio of near and $cr$-near points.
This shows that a careful inspection of dataset/query set
characteristics might be necessary to find good parameter choices that
balance cost terms.

\begin{figure}[t!]
    \includegraphics[height=4.5cm]{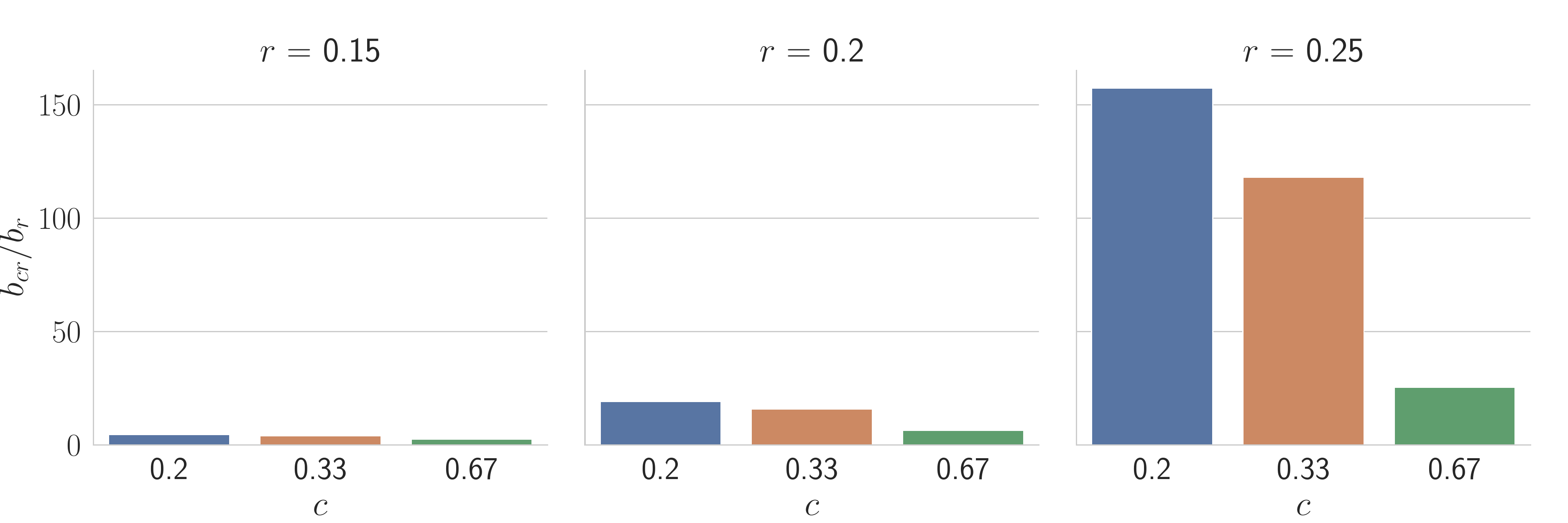}
    \includegraphics[height=4.5cm]{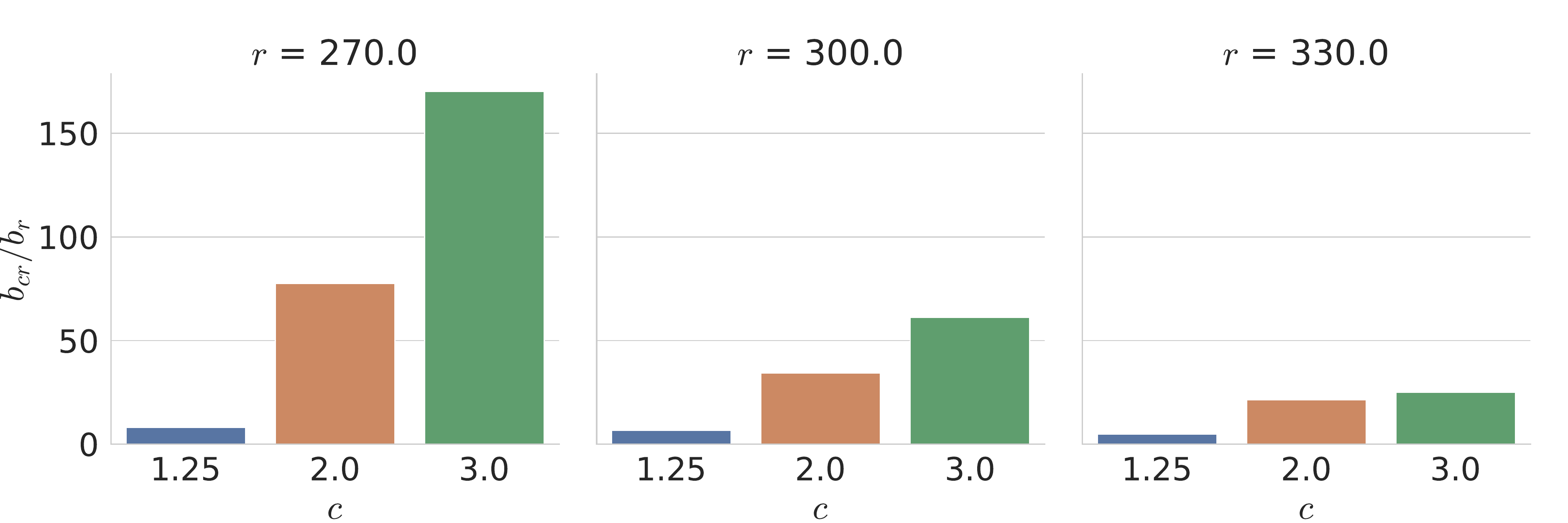}
    
    \caption{Top: Ratio of number of points with similarity at least
       $cr$ and number of points with similarity at least $r$ for
       $r\in\{0.15, 0.2, 0.25\}$ and $c = \{1/5, 1/3, 2/3\}$ for
       \MovieLens. Bottom: Ratio of number of points distance
       at most $cr$ and number of points within distance $r$ for
       $r\in\{270, 300, 300\}$ and $c = \{1.25, 2.0, 3.0\}$ for
       \textsf{SIFT}.  Note that similarity and distance behave
       inversely with regard to $r$ and $cr$.  }
    \figlab{ratio}
\end{figure}

\subsection{Running time observations (Q3)}\seclab{running_time}
We implemented LSH for Euclidean distance in C\,++ idea to compare
running times in language that generates suitable fast code.  We carry
out the same number of queries as before, but repeat each experiment
ten times.  To simplify the code structure and the use of additional
data structures, we made the following implementation choices: (i) In
the rank data structure, we store each bucket as a list of points
sorted by their rank. In the perturbation step, all buckets are sorted
again after exchanging a rank. (ii) For the other data structures, we
carry out a linear scan on the bucket sizes to select a bucket (by
weight).

\figref{running:times} reports on the average query time
needed by the different methods to answer the $100M(\q)$ queries.
These running times range from 10 to 100ms for the non-fair sampling
methods, and they are about a magnitude larger for the (approximately)
fair approaches.  With respect to exact and approximate degree
computation, we see that the former is around 1.5/2/3 times slower
than the approximate approach on \textsf{GLOVE}, \textsf{SIFT}, and
\MNIST, respectively.  The method using rank perturbation, which leads
to similar distributions as the optimal exact degree computation in
the considered workloads, is roughly as fast as the optimal variant on
\textsf{GLOVE}, 2 times faster on \textsf{SIFT}, and 3 times faster on
\MNIST.  On the \textsf{SIFT} dataset, it is roughly 2 times
faster than approximating the degree and using rejection sampling.

With a smaller number of trials, we also compared our methods to a
\naive approach that collects all colliding points, filters the near
neighbors, and returns a uniform choice in the resulting set.  This
method resulted in running times that were roughly 2-3 times slower
than the exact degree computation.

\subsection{Fairness in the approximate version
   (Q4)}\seclab{approx_fair}

We turn our focus to the approximate neighborhood sampling problem.
Recall that the algorithm may return points in
$B(\q, cr) \setminus B(\q, r)$ as well, which speeds up the query
since it avoids additional filtering steps.  In the following, we will
provide a concrete example that the output of an algorithm that solves
this problem might yield unexpected sampling results.

Let us define the following dataset over the universe
$\mathcal{U} = \{1,\ldots,30\}$: We let
$\mathbf{X} = \{16, \ldots, 30\}$, $\mathbf{Y} = \{1, \ldots, 18\}$,
and $\mathbf{Z} = \{1, \ldots, 27\}$.  Furthermore, we let
$\mathcal{M}$ contain all subsets of $\mathbf{Y}$ having at least $15$
elements (excluding $\mathbf{Y}$ itself).  The dataset is the
collection of $\mathbf{X}, \mathbf{Y}, \mathbf{Z}$ and all sets in
$\mathcal{M}$. Let the query $\mathbf{Q}$ be the set
$\{1, \ldots, 30\}$. To build the data structure, we set $r = 0.9$ and
$cr = 0.5$. It is easy to see that $\mathbf{Z}$ is the nearest
neighbor with similarity 0.9. $\mathbf{Y}$ is the second-nearest point
with similarity 0.6, but $\mathbf{X}$ is among the points with the
lowest similarity of 0.5. Finally, we note that each
$\mathbf{x}\in \mathcal{M}$ has similarity ranging from 0.5 to
$0.5\overline{6}$.  Under the approximate neighborhood sampling
problem, all points can be returned for the given query.

As in the previous subsection, the algorithm collects all the points
found in the $\LL$ buckets and returns a point picked uniformly at
random among those points having similarity at least 0.5.
\figref{difficult:sampling} shows the sampling probabilities of the
sets $\mathbf{X}, \mathbf{Y}, \mathbf{Z}$.  The plot clearly shows
that the notion of approximate neighborhood does not provide a
sensible guarantee on the individual fairness of users in this
example.  The set $\mathbf{X}$ is more than 50 times more likely than
$\mathbf{Y}$ to be returned, even though $\mathbf{Y}$ is more similar
to the query.  This is due to the clustered neighborhood of
$\mathbf{Y}$, making many other points appear at the same time in the
buckets.  On the other hand, $\mathbf{X}$ has no close points in its
neighborhood (except $\mathbf{Z}$ and $\mathbf{Q}$).

We remark that we did not observe this influence of clustered
neighborhoods on the real-world datasets. However, it is important to
notice that approximate neighborhood \emph{could} introduce
unintentional bias and, furthermore, can be exploited by an adversary
to discriminate a given user (e.g., an adversary can create a set of
objects $\mathcal{M}$ that obfuscate a given entry $\mathbf{Y}$.)

Based on this example, one could argue that the observed sampling
behavior is intentional.  If the goal is to find good representatives
in the neighborhood of the query, then it certainly seems preferable
that $\mathbf{X}$ is reported with such high probability (which is
roughly the same as all points in $\mathcal{M}$ and $\mathbf{Y}$
combined).  Our notion of fairness would make the output less diverse,
since $\mathbf{X}$ clearly stands out from many other points in the
neighborhood, but it is supposed to be treated in the very same way.
Such a trade-off between diversity and fairness was also observed, for
example, by Leonhardt \etal \cite{LeonhardtAK18}.

\begin{figure}[t]
    \begin{minipage}{.4\textwidth}
        \centering
        \includegraphics[height=5cm]{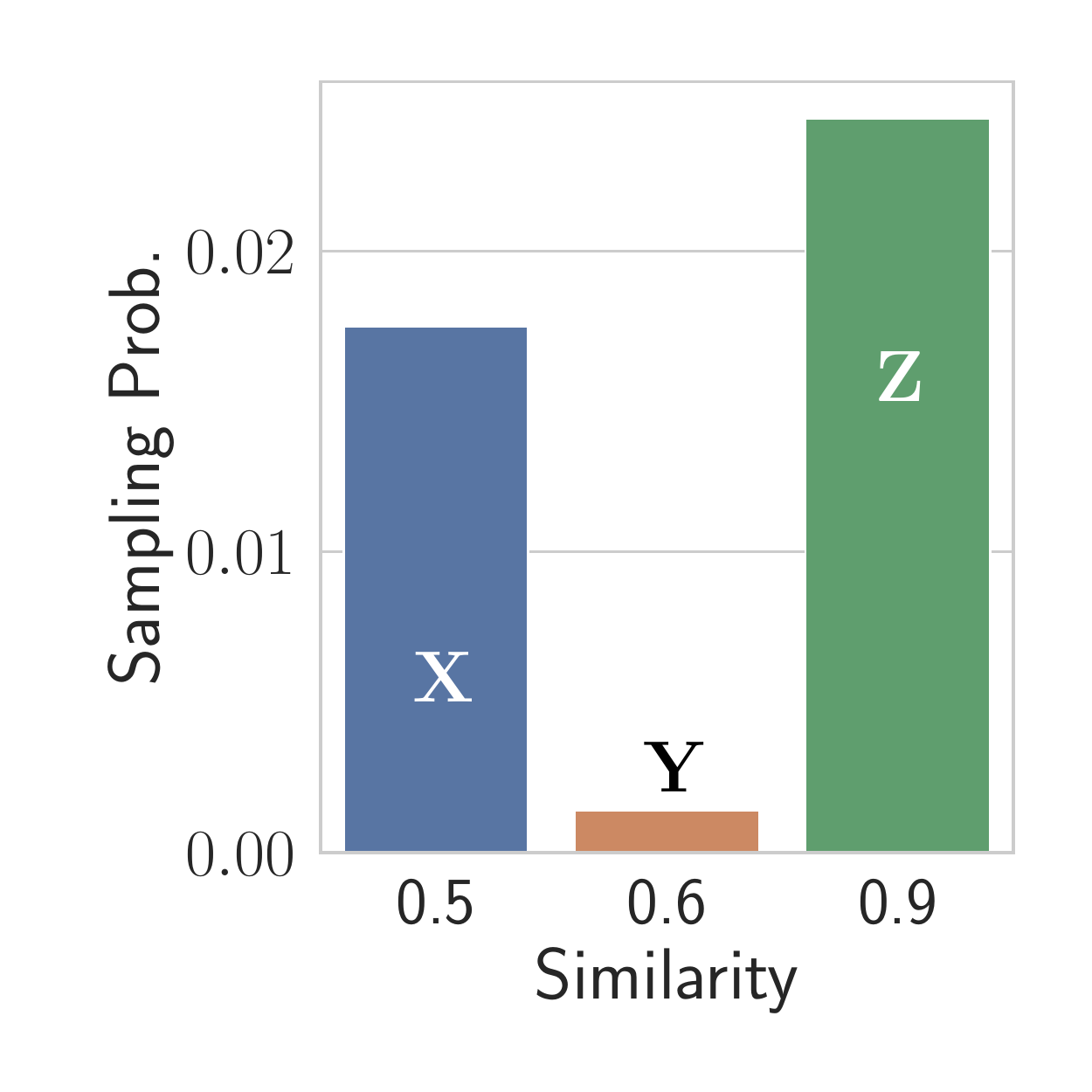}
        \caption{Empirical sampling probabilities of points
           $\mathbf{X}, \mathbf{Y}, \mathbf{Z}$.}
        \figlab{difficult:sampling}
    \end{minipage}
    \begin{minipage}{.59\textwidth}
        \centering
        \includegraphics[height=5cm]{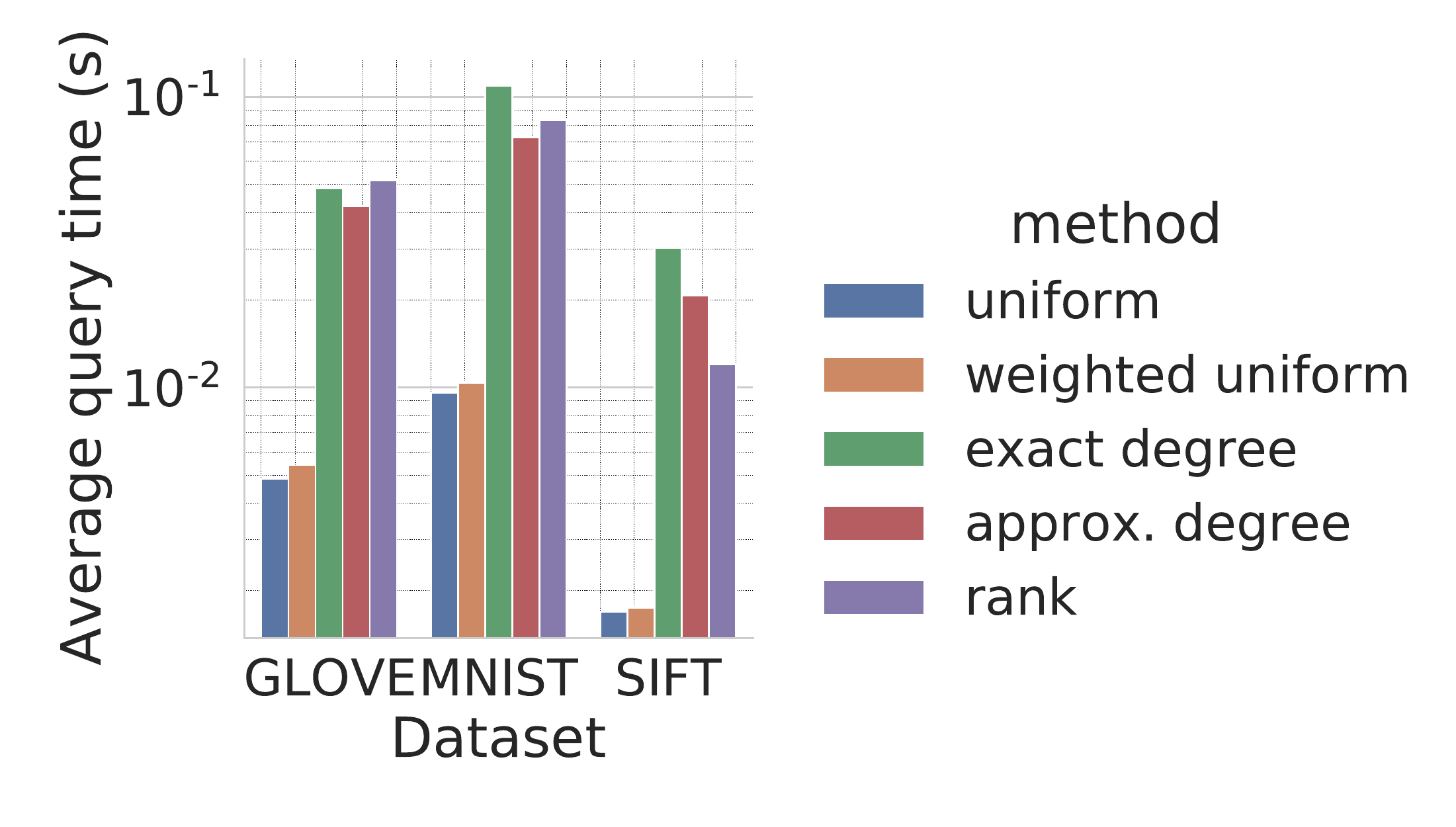}
        \caption{Comparison of average query times among the five
           methods on \textsf{GLOVE}, \MNIST, and
           \textsf{SIFT}.}
        \figlab{running:times}
    \end{minipage}
\end{figure}

\section{Conclusions}

In this paper, we have investigated a possible definition of fairness
in similarity search by connecting the notion of ``equal opportunity''
to independent range sampling.  An interesting open question is to
investigate the applicability of our data structures for problems like
discrimination discovery \cite{LuongRT11}, diversity in recommender
systems~\cite{adomavicius2014optimization}, privacy preserving
similarity search \cite{Riazi16}, and estimation of kernel density
\cite{CharikarS17}.  Moreover, it would be interesting to investigate
techniques for providing incentives (i.e., reverse
discrimination~\cite{LuongRT11}) to prevent discrimination: an idea
could be to merge the data structures in this paper with
distance-sensitive hashing functions in~\cite{Aumuller18}, which allow
to implement hashing schemes where the collision probability is an
(almost) arbitrary function of the distance.  Further, the techniques
presented here require a manual trade-off between the performance of
the LSH part and the additional running time contribution from finding
the near points among the non-far points. From a user point of view,
we would much rather prefer a parameterless version of our data
structure that finds the best trade-off with small overhead, as
discussed in~\cite{AhleAP17} in another setting. Finally, for some of the data structures presented here, the query time is also a function of the \emph{local density} around the query point (e.g. $\nNY{q}{cr}/\nNY{q}{r}$), it would be ideal to find the optimal dependence on this local density.

\regVer{
\section*{Acknowledgements}
S. Har-Peled was partially supported by a NSF AF award CCF-1907400.
R.~Pagh is part of BARC, supported by the VILLUM Foundation grant 16582.
F. Silvestri was partially supported by UniPD SID18 grant and PRIN Project n. 20174LF3T8 AHeAD. 
}

\bibliographystyle{ACM-Reference-Format}%
\bibliography{biblio}

\appendix
\section{A linear space near-neighbor data structure}
\apndlab{tableau}

We will split up the analysis of the data structure from
\secref{tableau} into two parts. First, we describe and
analyze a query algorithm that ignores the cost of storing and
evaluating the $m$ random vectors. Next, we will describe and analyze
the changes necessary to obtain an efficient query method as the one
described in \secref{tableau}.

\subsection{Description of the Data Structure}

\paragraph*{Construction} To set up the data structure for a point set
$S \subseteq \mathbb{R}^d$ of $n$ data points and two parameters
$\beta < \alpha$, choose $m \geq 1$ random vectors
$\a_1, \ldots, \a_m$ where each
$\a = (a_1, \ldots, a_d) \sim \mathcal{N}(0, 1)^d$ is a vector of $d$
independent and identically distributed standard normal Gaussians.
For each $i \in\{1,\ldots,m\}$, let $L_i$ contain all data points
$\x \in \PS$ such that $\ip{\a_i}{\x}$ is largest among all vectors
$\a$.

\paragraph*{Query} For a query point $\q \in \mathbb{R}^d$ and for a
choice of $\varepsilon \in (0, 1)$ controlling the success probability
of the query, define
$f(\alpha, \varepsilon) = \sqrt{2(1-\alpha^2)\ln(1/\varepsilon)}$.
Let $\Delta_\q$ be the largest inner product of $\q$ over all $\a$.
Let $L'_1, \ldots, L'_K$ denote the lists associated with random
vectors $\a$ satisfying
$\ip{\a}{\q} \geq \alpha \Delta_q - f(\alpha,\varepsilon)$.  Check all
points in $L'_1, \ldots, L'_K$ and report the first point $\x$ such
that $\ip{\q}{\x} \geq \beta$.  If no such point exists, report
$\perp$.

The proof of the theorem below will ignore the cost of evaluating
$\a_1, \ldots, \a_m$.  An efficient algorithm for evaluating these
vectors is provided in \apndref{tableau:efficient}.

\begin{theorem}
    \thmlab{tableau}%
    Let $-1 < \beta < \alpha < 1$, $\varepsilon \in (0, 1)$, and
    $n \geq 1$.  Let
    $\rho = \frac{(1-\alpha^2)(1-\beta^2)}{(1 - \alpha\beta)^2}$.
    There exists $m = m(n, \alpha, \beta)$ such that the data
    structure described above solves the $(\alpha, \beta)$-NN problem
    with probability at least $1 - \varepsilon$ using space $O(m + n)$
    and expected query time $n^{\rho + o(1)}$.
\end{theorem}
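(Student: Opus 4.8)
The plan rests on two ingredients. First, for any fixed unit vector $v$ the $m$ numbers $\ip{\a_1}{v},\ldots,\ip{\a_m}{v}$ are i.i.d.\ standard normals, so $\Delta_v=\max_i\ip{\a_i}{v}$ is an extreme order statistic of a Gaussian sample; by classical estimates (see \cite{david2004order}) it equals $\sqrt{2\ln m}\,(1\pm o(1))$ with overwhelming probability, its lower tail being extremely light ($\Delta_v<\sqrt{2\ln m}\,(1-\eta)$ has super-polynomially small probability for a suitable $\eta=o(1)$) while its upper tail satisfies only $\Prob{\Delta_v>\sqrt{2\ln m}+t}\le e^{-t\sqrt{2\ln m}}$. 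Second, if we fix a point $\x$ together with the index $j$ for which $\ip{\a_j}{\x}=\Delta_\x$ and decompose $\a_j$ into its component along $\x$ and its component in the orthogonal complement, then the latter is a standard Gaussian independent of $\Delta_\x$ and of $j$, since the event ``$\a_j$ is the argmax for $\x$'' is measurable with respect to the numbers $\ip{\a_i}{\x}$ alone. With these in hand I will set $m=\lceil n^{(1-\beta^2)/(1-\alpha\beta)^2}\rceil$; the algebraic identity $(1-\alpha^2)(1-\beta^2)+(\alpha-\beta)^2=(1-\alpha\beta)^2$ is exactly what converts the two bounds below into $n^{\rho+o(1)}$, and the space bound $O(m+n)$ is immediate, since we store $m$ vectors and each of the $n$ points lies in exactly one list.

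For correctness, suppose $\p$ has $\ip{\p}{\q}=\alpha'\ge\alpha$, let $\a_j$ be its argmax vector, and write $\q=\alpha'\p+\sqrt{1-(\alpha')^2}\,\p^\perp$ with $\p^\perp\perp\p$, so that $\ip{\a_j}{\q}=\alpha'\Delta_\p+\sqrt{1-(\alpha')^2}\,g$ with $g=\ip{\a_j}{\p^\perp}\sim\mathcal{N}(0,1)$ independent of $\Delta_\p$ and $j$. Conditioning on all the remaining coordinates, $\p$ fails to be inspected only if $g$ falls below a threshold of the form $\bigl(\alpha M-f(\alpha,\eps)-\alpha'\Delta_\p\bigr)/\sqrt{1-(\alpha')^2}$, where $M$ is the largest of $\ip{\a_i}{\q}$ over $i\ne j$. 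When $M$ and $\Delta_\p$ take their typical value $\sqrt{2\ln m}\,(1\pm o(1))$ this threshold is at most $(\alpha-\alpha')\sqrt{2\ln m}\,(1+o(1))-f(\alpha,\eps)/\sqrt{1-\alpha^2}\le -\sqrt{2\ln(1/\eps)}$ (using $\alpha'\ge\alpha$ and $1-(\alpha')^2\le1-\alpha^2$), so by the Gaussian tail bound $\Prob{g>t}\le\tfrac12 e^{-t^2/2}$ the failure probability here is at most $\eps/2$, the factor two absorbing the $o(1)$ fluctuations. The only remaining contribution, from an atypically large $M=\sqrt{2\ln m}+t$, integrates to at most $\int_0^\infty e^{-t\sqrt{2\ln m}}\,dt=1/\sqrt{2\ln m}=o(1)$. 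Hence for $m$ large enough $\p$ is inspected with probability at least $1-\eps$, and since $\ip{\p}{\q}\ge\alpha>\beta$ the query then returns a point of inner product at least $\beta$, i.e.\ the $(\alpha,\beta)$-NN guarantee holds.

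For the running time (ignoring, as the theorem permits, the cost of evaluating the $\a_i$'s), the query spends $O(K)$ to enumerate the queried lists and $O(1)$ per inspected point, and it halts at the first point of inner product at least $\beta$; hence the number of inspected points is at most one more than the number of $\beta$-far points lying in $L'_1\cup\cdots\cup L'_K$. Both $\Ex{K}$ and the expected number of such $\beta$-far collisions reduce to estimating, for a point $\x$ with $\ip{\x}{\q}=\gamma$, the probability that its list is queried, i.e.\ that $\ip{\a_j}{\q}\ge\alpha\Delta_\q-f(\alpha,\eps)$ for its argmax vector $\a_j$. Writing $\ip{\a_j}{\q}=\gamma\Delta_\x+\sqrt{1-\gamma^2}\,h$ with $h\sim\mathcal{N}(0,1)$ independent of $\Delta_\x$, this probability equals $\Ex{1-\Phi\bigl((\alpha\Delta_\q-f(\alpha,\eps)-\gamma\Delta_\x)/\sqrt{1-\gamma^2}\bigr)}$, the expectation over the law of the extreme order statistics $\Delta_\x,\Delta_\q$; one verifies that this integral is dominated by the typical range $\Delta_\x,\Delta_\q\approx\sqrt{2\ln m}$ (a larger $\Delta_\x$ raises the per-vector collision probability but is itself even more unlikely), giving the bound $m^{-(\alpha-\gamma)^2/(1-\gamma^2)+o(1)}$ for $\gamma<\alpha$, and the trivial $O(1)$ for $\gamma\ge\alpha$. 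A simpler variant of the same estimate yields $\Ex{K}\le m\cdot m^{-\alpha^2+o(1)}=m^{1-\alpha^2+o(1)}$. Since $(\alpha-\gamma)^2/(1-\gamma^2)$ is decreasing in $\gamma$ on $\{\gamma<\alpha\}$ (its derivative is $2(\alpha-\gamma)(\alpha\gamma-1)/(1-\gamma^2)^2<0$), the worst $\beta$-far point is the one at inner product $\beta$, so the expected number of $\beta$-far collisions is at most $n\cdot m^{-(\alpha-\beta)^2/(1-\beta^2)+o(1)}$; substituting $m=n^{(1-\beta^2)/(1-\alpha\beta)^2}$ and invoking the identity, both this quantity and $\Ex{K}$ equal $n^{\rho+o(1)}$, so the expected query time is $n^{\rho+o(1)}$.

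The step I expect to be the real obstacle is making these ``integrate over the extreme order statistic'' arguments rigorous, because the maximum of $m$ i.i.d.\ Gaussians has markedly asymmetric tails: its lower tail is super-polynomially light, but its upper tail decays only like $e^{-t\sqrt{2\ln m}}$, so a crude union bound forcing $\Delta_\x\le\sqrt{2\ln m}\,(1+o(1))$ for all $n$ points at once would cost a constant factor in the exponent and destroy the $n^\rho$ bound. The correct analysis must instead weight each value of $\Delta_\x$ (and of $\Delta_\q$, and of the near-maximum $M$ in the correctness argument) by its density and check that the typical value $\sqrt{2\ln m}$ wins the trade-off against the inflated collision probability; this is precisely where the sharp order-statistics estimates of \cite{david2004order} enter. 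Everything else---the Gaussian tail bound, the projection/independence decomposition, the monotonicity of $(\alpha-\gamma)^2/(1-\gamma^2)$ in $\gamma$, and the identity $(1-\alpha^2)(1-\beta^2)+(\alpha-\beta)^2=(1-\alpha\beta)^2$---is routine.
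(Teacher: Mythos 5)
Your proposal follows essentially the same route as the paper's proof (\apndref{tableau}): the same choice $m \approx n^{(1-\beta^2)/(1-\alpha\beta)^2}$, the same rotational decomposition of the argmax vector into its component along the data point plus an independent standard Gaussian in the orthogonal complement (\lemref{tableau:close} and \lemref{tableau:far} condition on $\ip{\a}{\x}=\Delta_\x$ in exactly this way), and the same balancing of the expected number of queried lists $m^{1-\alpha^2+o(1)}$ against the expected number of $\beta$-far collisions $n\cdot m^{-(\alpha-\beta)^2/(1-\beta^2)+o(1)}$ via the identity $(1-\alpha^2)(1-\beta^2)+(\alpha-\beta)^2=(1-\alpha\beta)^2$. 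The one point of genuine divergence is the treatment of the random maxima: the paper fixes deterministic thresholds $t_\x, t_\q$ near $\sqrt{2\ln m}$, subtracts the probabilities that the maxima land on the wrong side of them, and then conditions on the good events, whereas you propose to integrate the conditional collision probability against the law of the maximum. Your instinct that this is the delicate step is correct --- it is precisely where the paper is loosest (in \lemref{tableau:far} the event $\Delta_\x > t_\x$ has probability $\Theta(1/\log n)$ per point, and its contribution to the expected number of far collisions is not explicitly charged), and your density-weighting plan does close it: for $\Delta_\x=\sqrt{2\ln m}+t$ the conditional collision probability grows like $\exp\bigl(\gamma(\alpha-\gamma)\,t\sqrt{2\ln m}/(1-\gamma^2)\bigr)$ while the tail of the maximum decays like $\exp\bigl(-t\sqrt{2\ln m}\bigr)$, and the former loses exactly when $\alpha\gamma<1$, which always holds. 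So the proposal is correct and, at its flagged obstacle, somewhat more careful than the source; the only thing not actually done is carrying that integration out.
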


We split the proof up into multiple steps. First, we show that for
every choice of $m$, inspecting the lists associated with those random
vectors $\a$ such that their inner product with the query point $\q$
is at least the given query threshold guarantees to find a close point
with probability at least $1-\varepsilon$. The next step is to show
that the number of far points in these lists is $n^{\rho + o(1)}$ in
expectation.

\subsection{Analysis of Close Points}

\begin{lemma}
    \lemlab{tableau:close}
    Given $m$ and $\alpha$, let $\q$ and $\x$ such that
    $\ip{\q}{\x} = \alpha$.  Then we find $\x$ with probability at
    least $1- \varepsilon$ in the lists associated with vectors that
    have inner product at least
    $\alpha\Delta_\q - f(\alpha, \varepsilon)$ with $\q$.
\end{lemma}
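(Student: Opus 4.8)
The plan is to reduce the whole computation to a two‑dimensional Gaussian picture. By rotational invariance of the $\a_i$'s, assume $\q = e_1$ and $\x = \alpha e_1 + \sqrt{1-\alpha^2}\,e_2$, and let $\v = \sqrt{1-\alpha^2}\,e_1 - \alpha e_2$ be the unit vector orthogonal to $\x$ in $\mathrm{span}(\q,\x)$, so that $\q = \alpha\x + \sqrt{1-\alpha^2}\,\v$. Writing $P_i = \ip{\a_i}{\x}$ and $Q_i = \ip{\a_i}{\v}$, the pairs $(P_i,Q_i)$ are i.i.d.\ standard normal and $\ip{\a_i}{\q} = \alpha P_i + \sqrt{1-\alpha^2}\,Q_i$. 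The point $\x$ is stored in the list $L_j$ indexed by $j = \arg\max_i P_i$; put $D = P_j = \max_i P_i$. Conditioned on $(P_1,\dots,P_m)$, the index $j$ and the value $D$ become fixed while $(Q_1,\dots,Q_m)$ stays i.i.d.\ $\mathcal{N}(0,1)$, and
\[
  \ip{\a_j}{\q} = \alpha D + \sqrt{1-\alpha^2}\,Q_j .
\]

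Next I would rewrite the target event. List $L_j$ is inspected iff $\ip{\a_j}{\q} \ge \alpha\Delta_\q - f(\alpha,\varepsilon)$, i.e.\ (for $\alpha>0$) iff $\alpha\ip{\a_i}{\q} - \ip{\a_j}{\q} \le f$ for every $i$. For $i=j$ this is a pure lower‑tail statement about $Q_j$: it reduces to $\ip{\a_j}{\q} \ge -f/(1-\alpha)$; on the overwhelmingly likely event $D\ge 0$ we have $\ip{\a_j}{\q}\ge\sqrt{1-\alpha^2}\,Q_j$, so it suffices that $Q_j \ge -\sqrt{2\ln(1/\varepsilon)}/(1-\alpha)$, which fails with probability at most $\tfrac12\varepsilon$ by $\bar\Phi(t)\le\tfrac12 e^{-t^2/2}$. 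For $i\neq j$, conditioning further on $Q_j$, the event that $\alpha\ip{\a_i}{\q} - \ip{\a_j}{\q} > f$ becomes $Q_i > \tau_i$ with
\[
  \tau_i \;=\; \frac{D-\alpha P_i}{\sqrt{1-\alpha^2}} \;+\; \frac{Q_j + \sqrt{2\ln(1/\varepsilon)}}{\alpha},
\]
and since $D = \max_k P_k \ge P_i$ we get $D-\alpha P_i \ge (1-\alpha)D \ge 0$; hence on $\{D\ge 0,\ Q_j\ge-\sqrt{2\ln(1/\varepsilon)}\}$ all $\tau_i \ge 0$, and a union bound plus the Gaussian tail give a failure probability (over the $Q_i$, $i\ne j$) of at most $\tfrac12\sum_{i\neq j}e^{-\tau_i^2/2}$.

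The crux is the resulting Gaussian‑integral estimate. Keeping the cross term, $e^{-\tau_i^2/2} \le \exp\!\big(-\tfrac{(D-\alpha P_i)^2}{2(1-\alpha^2)}\big)\exp\!\big(-\tfrac{D-\alpha P_i}{\sqrt{1-\alpha^2}}\cdot\tfrac{Q_j+\sqrt{2\ln(1/\varepsilon)}}{\alpha}\big)$; averaging over $P_i$ (i.i.d.\ $\mathcal{N}(0,1)$ conditioned on $P_i<D$) uses the identity $\int e^{-p^2/2-(D-\alpha p)^2/(2(1-\alpha^2))}\,dp = \sqrt{2\pi}\sqrt{1-\alpha^2}\,e^{-D^2/2}$, which collapses $\sum_{i\ne j}\exp(-\tfrac{(D-\alpha P_i)^2}{2(1-\alpha^2)})$ to at most $m\sqrt{1-\alpha^2}\,e^{-D^2/2}/\Phi(D)$. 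One then integrates over $D$ (density $m\phi(d)\Phi(d)^{m-1}$, concentrated near $\sqrt{2\ln m}$) and over $Q_j$: the factor $m\,e^{-D^2/2}$ — polylogarithmic for typical $D$, but as large as $\Theta(m)$ when $D$ is atypically small — is tamed by the extra exponential decay carried by the cross term (which is $\Omega(D)$ since $D-\alpha P_i \ge (1-\alpha)D$), and the surviving contribution is $\le \tfrac12\varepsilon$. Adding the three budgets gives $\Pr[L_j\text{ not inspected}] \le \varepsilon$, with a negligible $2^{-m}$ charged for $\{D<0\}$.

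The main obstacle is precisely this last integration: if one bounds each $\tau_i$ only by its leading term, the union bound over the $m-1$ lists blows up by a factor polynomial in $m$, so it is essential to retain the coupling between the depth $D$ of $\x$'s list and the thresholds $\tau_i$. Equivalently, one may phrase matters through order statistics — $\Delta_\q$ and $\Delta_\x := D$ are both the maximum of $m$ i.i.d.\ standard normals (coupled with correlation $\alpha$), hence concentrate around $\sqrt{2\ln m}$, so $\alpha(\Delta_\q-\Delta_\x)$ is small and the claim follows by combining this concentration with $\ip{\a_j}{\q} = \alpha\Delta_\x + \sqrt{1-\alpha^2}\,Q_j$ and the tail of $Q_j$ — but either way, carrying out the order‑statistics estimate carefully (so as to land at the clean constant $1-\varepsilon$) is what makes this the delicate step.
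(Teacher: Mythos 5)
Your route is genuinely different from the paper's. The paper reduces everything to a \emph{deterministic} threshold: it picks a single value $t$ (its $\Delta_\x=\Delta_\q$), charges $\Pr[\max_i\ip{\a_i}{\x}<t]\le\delta$ and $\Pr[\max_i\ip{\a_i}{\q}<t]\le\delta$ separately via the elementary bound $(1-\bar\Phi(t))^m$ with $t=\sqrt{2\log m-\log(4\kappa\pi\log m)}$, and then the only probabilistic computation left is the one-dimensional tail $\Pr[Z\ge -f(\alpha,\varepsilon)/\sqrt{1-\alpha^2}]\ge 1-\varepsilon$ for the orthogonal Gaussian component of the vector maximizing $\ip{\a}{\x}$; the guarantee actually delivered is $1-\varepsilon-2\delta$, with the $2\delta$ absorbed into the lemma's $\varepsilon$. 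You instead condition on all projections $P_i=\ip{\a_i}{\x}$, rewrite ``$L_j$ is inspected'' as the conjunction over $i$ of $\alpha\ip{\a_i}{\q}-\ip{\a_j}{\q}\le f$, and union-bound over the $m-1$ competitors while retaining the coupling between $D=\max_i P_i$ and the thresholds $\tau_i$. The ingredients you exhibit check out (the orthogonal decomposition, the equivalence of events for $\alpha>0$, the bound $D-\alpha P_i\ge(1-\alpha)D$, the completing-the-square identity collapsing the average of $\exp(-(D-\alpha P_i)^2/(2(1-\alpha^2)))$ to $\sqrt{1-\alpha^2}\,e^{-D^2/2}/\Phi(D)$), and your conditioning is arguably cleaner than the paper's, which conditions simultaneously on both argmaxes and then treats the orthogonal component of $\a_j$ as an unconditioned Gaussian.

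The gap is exactly where you flag it: the final integration over $D$ and $Q_j$ is asserted, not carried out, and it is the entire difficulty of your route. Pointwise your bound is $\tfrac12 m\sqrt{1-\alpha^2}\,e^{-D^2/2}\Phi(D)^{-1}\exp\bigl(-\sqrt{(1-\alpha)/(1+\alpha)}\,D\,(Q_j+\sqrt{2\ln(1/\varepsilon)})/\alpha\bigr)$. At the typical values $D\approx\sqrt{2\ln m}$, $Q_j\approx 0$ this equals $\exp(-\Theta(\sqrt{\ln m\,\ln(1/\varepsilon)}))$, which tends to $0$ in $m$ but is not $\le\varepsilon/2$ uniformly, so the clean constant already requires $m$ large relative to $\alpha$ and $\varepsilon$ (a caveat the paper's proof shares through its $\delta$). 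Worse, at $D=\sqrt{2\ln m}-c$ the factor $me^{-D^2/2}$ blows up to $e^{c\sqrt{2\ln m}}$, so you genuinely must integrate against the density $m\phi(d)\Phi(d)^{m-1}$ of the maximum and verify that its doubly-exponential left tail beats this blow-up; nothing in the proposal establishes that. The paper's deterministic-threshold device exists precisely to avoid this integral: by paying an additive $2\delta$ it never controls the union bound in the regime where the maximum is atypical. So either complete that integration, or fall back on the order-statistics/concentration phrasing you mention at the end --- which is, in essence, the paper's proof.
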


\begin{proof}
    By spherical symmetry \cite{christiani2017framework}, we may
    assume that $\x = (1, 0, \ldots, 0)$ and
    $\q = (\alpha, \sqrt{1 - \alpha^2}, 0,$ $ \ldots, 0)$ . The
    probability of finding $\x$ when querying the data structure for
    $\q$ can be bounded as follows from below. Let $\Delta_\x$ be the
    largest inner product of $\x$ with vectors $\a$ and let
    $\Delta_\q$ be the largest inner product of $\q$ with these
    vectors. Given these thresholds, finding $\x$ is then equivalent
    to the statement that for the vector $\a$ with
    $\ip{\a}{\x} = \Delta_\x$ we have
    $\ip{\a}{\q} \geq \alpha \Delta_{\q} - f(\alpha,\varepsilon)$. We
    note that
    $\PR{\max\{\ip{\a}{\q}\} = \Delta} = 1 - \PR{\forall i:
       \ip{\a_i}{\q} < \Delta}$.

    Thus, we may lower bound the probability of finding $\x$ for
    arbitrary choices $\Delta_{\x}$ and $\Delta_\q$ as follows:
    \begin{align}
      \eqlab{tableau:success}
      \PR{\text{find \hspace{-0.1em} $\x$}} \hspace{-0.1em} & \hspace{-0.1em} \geq \hspace{-0.1em} \PR{\hspace{-0.1em} \ip{\a}{\q}\hspace{-0.1em} \geq \hspace{-0.1em} \alpha\Delta_{\q}{-} f(\alpha,\varepsilon) \hspace{-0.1em} \mid \hspace{-0.1em} \ip{\a}{\x} = \Delta_{\x}\hspace{-0.1em}\text{ and } \hspace{-0.1em}\ip{\a'}{\q} \hspace{-0.1em} = \hspace{-0.1em} \Delta_{\q}\hspace{-0.1em}}\notag\\ 
                                                            &\quad - \PR{\forall i: \ip{\a_i}{\x} < \Delta_\x}  - \PR{\forall i: \ip{\a_i}{\q} < \Delta_\q}.
    \end{align}
    Here, we used that
    $\PR{A \cap B \cap C} = 1 - \PR{\overline{A} \cup \overline{B}
       \cup \overline{C}} \geq \PR{A} - \PR{\overline{B}} -
    \PR{\overline{C}}$.  We will now obtain bounds for the three terms
    on the right-hand side of \Eqref{tableau:success} separately, but
    we first recall the following lemma from \cite{SZAREK1999193}:
    \begin{lemma}[\cite{SZAREK1999193}]\lemlab{normal:bound}
        Let $Z$ be a standard normal random variable. Then, for every
        $t \geq 0$, we have that
        \begin{align*}
          \frac{1}{\sqrt{2\pi}}\frac{1}{t + 1}e^{-t^2/2} \leq \Pr(Z \geq t) \leq \frac{1}{\sqrt{\pi}}\frac{1}{t + 1}e^{-t^2/2}.
        \end{align*}
    \end{lemma}

    \paragraph{Bounding the first term.} Since
    $\q = (\alpha, \sqrt{1- \alpha^2}, 0, \ldots, 0)$ and
    $\x=(1, 0, \ldots, 0)$, the condition $\ip{\a}{\x} = \Delta_\x$
    means that the first component of $\a$ is $\Delta_\x$.  Thus, we
    have to bound the probability that a standard normal random
    variable $Z$ satisfies the inequality
    $\alpha \Delta_\x + \sqrt{1 - \alpha^2} Z \geq \alpha\Delta_\q -
    f(\alpha, \varepsilon)$.  Reordering terms, we get
    \begin{align*}
      Z \geq \frac{\alpha\Delta_{\q} - f(\alpha,\varepsilon) - \alpha \Delta_\x}{\sqrt{1 - \alpha^2}}.
    \end{align*}
    Choose $\Delta_\q = \Delta_\x$.  In this case, we bound the
    probability that $Z$ is larger than a negative value.  By symmetry
    of the standard normal distribution and using
    \lemref{normal:bound}, we may compute
    \begin{align}
      \eqlab{tableau:success:2}
      \PR{Z \geq -\frac{f(\alpha, \varepsilon)}{\sqrt{1 - \alpha^2}}}
      &=1 - \PR{Z < -\frac{f(\alpha, \varepsilon)}{\sqrt{1 - \alpha^2}}}
        \notag
      \\&=1 - \PR{Z \geq \frac{f(\alpha, \varepsilon)}{\sqrt{1 - \alpha^2}}}\notag\\
      &\geq 
        1 - \frac{\text{Exp}\left(-\frac{(f(\alpha, \varepsilon))^2}{2(1 - \alpha^2)}\right)}{\sqrt{2\pi}
        \left(\frac{f(\alpha, \varepsilon)}{\sqrt{1 - \alpha^2}} + 1\right) }    \geq 1 - \varepsilon.
    \end{align}

    \paragraph{Bounding the second term and third term.} We first
    observe that
    \begin{align*}
      \PR{\forall i: \ip{\a_i}{\x} < \Delta_\x} &= \PR{\ip{\a_1}{\x} < \Delta_\x}^m \\
                                                &= \left(1 - \PR{\ip{\a_1}{\x} \geq \Delta_\x}\right)^m\\
                                                &\leq \left(1 - \frac{\text{Exp}[-\Delta_\x^2/2]}{\sqrt{2\pi}(\Delta_x + 1)}\right)^m.
    \end{align*}
    Setting $\Delta_\x = \sqrt{2 \log m - \log(4\kappa\pi \log(m))}$
    upper bounds this term by $\text{Exp}[-\sqrt{\kappa}]$.  Thus, by
    setting $\kappa \geq \log^2(1/\delta)$ the second term is upper
    bounded by $\delta \in (0, 1)$.  The same thought can be applied
    to the third summand of \Eqref{tableau:success}, which is only
    smaller because of the negative offset $f(\alpha, \varepsilon)$.

\paragraph{Putting everything together.} 
Putting the bounds obtained for all three summands together shows that
we can find $\x$ with probability at least $1 - \varepsilon'$ by
choosing $\varepsilon$ and $\delta$ such that
$\varepsilon' \geq \varepsilon + 2 \delta$.
\end{proof}

\subsection{Analysis of Far Points}

\begin{lemma}
    \lemlab{tableau:far}%
    Let $-1 < \beta < \alpha < 1$.  There exists
    $m = m(n, \alpha, \beta)$ such that the expected number of points
    $\x$ with $\ip{\x}{\q} \leq \beta$ in $L'_1,\ldots,L'_K$ where
    $K = |\{i \mid \ip{\a_i}{\q} \geq \alpha\Delta_\q - f(\alpha,
    \varepsilon)\}|$ is $n^{\rho + o(1)}$.
\end{lemma}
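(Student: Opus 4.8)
The plan is to bound the expected number of far points (those with $\ip{\x}{\q} \leq \beta$) that land in the queried lists $L'_1,\ldots,L'_K$, i.e., the lists associated with vectors $\a$ satisfying $\ip{\a}{\q} \geq \alpha\Delta_\q - f(\alpha,\varepsilon)$. A far point $\x$ is inspected precisely when the vector $\a$ achieving the maximum inner product with $\x$ also clears the query threshold $\alpha\Delta_\q - f(\alpha,\varepsilon)$. Since $\Delta_\q$ is itself random, I would first condition on a typical value of $\Delta_\q$ — by the order-statistics bounds on the maximum of $m$ standard normals (as referenced via \cite{david2004order} and the normal tail bound \lemref{normal:bound} already used in the close-point analysis), $\Delta_\q$ concentrates around $\sqrt{2\ln m}$ up to lower-order terms. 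I would show that the contribution from the atypical event $\{\Delta_\q \text{ far from } \sqrt{2\ln m}\}$ is negligible, so that effectively the query threshold is $\theta := \alpha\sqrt{2\ln m} - f(\alpha,\varepsilon) + o(\sqrt{\ln m})$.

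Next, fix a far point $\x$ with $\ip{\x}{\q} = \gamma \leq \beta$. By spherical symmetry, write $\x = (1,0,\ldots)$ and $\q = (\gamma, \sqrt{1-\gamma^2}, 0, \ldots)$. The key quantity is the probability that, among the $m$ vectors, the one maximizing $\ip{\a}{\x}$ — whose first coordinate is $\Delta_\x$, distributed as the max of $m$ standard normals — also satisfies $\gamma\Delta_\x + \sqrt{1-\gamma^2}\,Z \geq \theta$ for an independent standard normal $Z$. Integrating over the density of $\Delta_\x$ (again $\approx \sqrt{2\ln m}$) and using the Gaussian tail bound on $Z$, the probability that $\x$ is inspected is roughly $\exp\!\bigl(-\tfrac{(\theta - \gamma\sqrt{2\ln m})^2}{2(1-\gamma^2)}\bigr)$ up to polynomial-in-$\log m$ factors. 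Substituting $\theta$ and simplifying, this becomes $m^{-\psi(\gamma) + o(1)}$ for an explicit exponent $\psi(\gamma)$; one then checks $\psi$ is minimized over $\gamma \leq \beta$ at $\gamma = \beta$ (monotonicity in $\gamma$), so the worst far point contributes $m^{-\psi(\beta)+o(1)}$. Summing over at most $n$ far points gives an expected count of $n \cdot m^{-\psi(\beta) + o(1)}$.

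Finally, I would choose $m = m(n,\alpha,\beta)$ to balance this against the query cost: setting $m = n^{\sigma}$ for the right exponent $\sigma$ makes the expected number of far points $n^{\rho + o(1)}$, where the algebra collapses exactly to $\rho = \frac{(1-\alpha^2)(1-\beta^2)}{(1-\alpha\beta)^2}$ — this is the standard locality-sensitive-filter exponent, and the computation mirrors the one in \cite{christiani2017framework}. The main obstacle I anticipate is handling the randomness of $\Delta_\q$ and $\Delta_\x$ carefully: these are maxima of $m$ Gaussians, they appear in both the conditioning event and the threshold, and one must integrate against their densities rather than just plugging in the typical value, ensuring all the deviation terms are genuinely $o(\sqrt{\ln m})$ in the exponent (hence $n^{o(1)}$ overall). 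Getting the order-statistics tail estimates tight enough — and verifying the optimal $m$ indeed yields the clean closed form for $\rho$ — is where the real work lies; the rest is bookkeeping with Gaussian tails.
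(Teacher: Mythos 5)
Your proposal is correct and follows essentially the same route as the paper's proof: condition on typical values of the maxima $\Delta_\q$ and $\Delta_\x$ (the paper uses high-probability deterministic surrogates $t_\q, t_\x$ rather than integrating against the densities, but this is the same idea), bound the collision probability of a worst-case far point at inner product $\beta$ via the Gaussian tail bound to get $m^{-(\alpha-\beta)^2/(1-\beta^2)+o(1)}$, and choose $m$ to balance the $n$ far points against the $m^{1-\alpha^2+o(1)}$ lists inspected, which yields $m = n^{(1-\beta^2)/(1-\alpha\beta)^2+o(1)}$ and hence $n^{\rho+o(1)}$. No substantive gap.
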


\begin{proof}
    We will first focus on a single far-away point $\x$ with inner
    product at most $\beta$.  Again, let $\Delta_\q$ be the largest
    inner product of $\q$. Let $\x$ be stored in $L_i$.  Then we find
    $\x$ if and only if
    $\ip{\a_i}{\q} \geq \alpha\Delta_\q - f(\alpha, \varepsilon).$ By
    spherical symmetry, we may assume that $\x = (1, 0, \ldots, 0)$
    and $\q = (\beta, \sqrt{1 - \beta^2}, 0, \ldots, 0)$.

    We first derive values $t_\q$ and $t_\x$ such that, with high
    probability, $\Delta_\q \geq t_\q$ and $\Delta_\x \leq t_\x$.
    From the proof of \lemref{tableau:close}, we know that
    \begin{align*}
      \PR{\max\{\ip{\a}{\q}\} \geq t}
      &\leq 1 - \left(1 - \frac{\text{Exp}\left(-t^2/2\right)}{\sqrt{2\pi}(t + 1)}\right)^m.
    \end{align*}

    Setting
    $t_\q = \sqrt{2\log(m / \log(n)) - \log(4 \pi \log(m / \log n))}$
    shows that with high probability we have $\Delta_\q \geq
    t_\q$. Similarly, the choice
    $t_\x = \sqrt{2\log(m \log(n)) - \log(4 \pi \log(m \log n))}$ is
    with high probability at least $\Delta_\x$. In the following, we
    condition on the event that $\Delta_\q \geq t_\q$ and
    $\Delta_\x \leq t_\x$.

    We may bound the probability of finding $\x$ as follows:
    \begin{align*}
      \PR{\ip{\a}{\q} \geq \alpha\Delta_\q - f(\alpha, \varepsilon) \mid \ip{\a}{\x} = \Delta_\x} \leq \\
      \leq \PR{\ip{\a}{\q} \geq \alpha\Delta_\q - f(\alpha, \varepsilon) \mid \ip{\a}{\x} = t_\x}\\
      \leq \PR{\ip{\a}{\q} \geq \alpha t_\q - f(\alpha, \varepsilon) \mid \ip{\a}{\x} = t_\x}.
    \end{align*}

    Given that $\ip{\a}{\x}$ is $t_\x$, the condition
    $\ip{\a}{\q} \geq \alpha t_\q - f(\alpha, \varepsilon)$ is
    equivalent to the statement that for a standard normal variable
    $Z$ we have
    $Z \geq \frac{(\alpha t_\q - f(\alpha, \varepsilon) - \beta
       t_\x)}{\sqrt{1 - \beta^2}}$.  Using
    \lemref{normal:bound}, we have
    \begin{align}
      & \PR{\ip{\a}{\q} \geq \alpha t_\q 
        \hspace{-0.1em}- \hspace{-0.1em} f(\alpha, \varepsilon) \hspace{-0.2em} \mid \hspace{-0.2em} \ip{\a}{\x} = t_\x}
        \hspace{-0.2em} \leq \hspace{-0.2em} \frac{\text{Exp}\left(-\frac{(\alpha t_\q- f(\alpha, \varepsilon)- \beta t_\x)^2}{2(1 - \beta^2)}\right)}{\sqrt{\pi} \left(\frac{(\alpha t_\q- f(\alpha, \varepsilon)- \beta t_\x)}{\sqrt{1 - \beta^2}} + 1\right)}\notag\\
      &\hspace{2em}\leq  \text{Exp}\left(-\frac{(\alpha t_\q- f(\alpha, \varepsilon)- \beta t_\x)^2}{2(1 - \beta^2)}\right)\notag\\
      &\hspace{2em}\stackrel{(1)}{=} \text{Exp}\left(-\frac{(\alpha - \beta)^2 t_\x^2}{2(1 - \beta^2)} \left(1 + O(1 / \log \log n)\right)\right)\notag\\
      &\hspace{2em}= \left(\frac{1}{m}\right)^{\frac{(\alpha - \beta)^2}{1 - \beta^2} + o(1)},
        \eqlab{far:prob}
    \end{align}
    where step (1) follows from the observation that
    $t_\q/t_\x = 1 + O(1/\log \log n)$ and
    $f(\alpha, \varepsilon)/t_\x = O(1/\log\log n)$ if
    $m = \Omega(\log n)$.

    Next, we want to balance this probability with the expected cost
    for checking all lists where the inner product with the associated
    vector $\a$ is at least
    $\alpha \Delta_\q - f(\alpha, \varepsilon)$.  By linearity of
    expectation, the expected number of lists to be checked is not
    more than
    $$m \cdot \text{Exp}\left(-(\alpha t_\q)^2\left(1/2 -
            f(\alpha,\varepsilon)/(\alpha t_\q) + f(\alpha,
            \varepsilon)^2/(2(\alpha t_\q)^2)\right)\right),$$ which
    is $m^{1-\alpha^2 + o(1)}$ using the value of $t_\q$ set above.
    This motivates to set~\Eqref{far:prob} equal to
    $m^{1-\alpha^2} / n$, taking into account that there are at most
    $n$ far-away points.  Solving for $m$, we get
    $ m = n^{\frac{1 - \beta^2}{(1 - \alpha\beta)^2} + o(1)} $ and
    this yields $m^{1-\alpha^2 +o(1)} = n^{\rho + o(1)}$.
\end{proof}

\subsection{Efficient Evaluation}
\apndlab{tableau:efficient}

The previous subsections assumed that $m$ filters can be evaluated and
stored for free.  However, this requires space and time
$n^{(1-\beta^2)/(1-\alpha\beta)^2}$, which is much higher than the
work we expect from checking the points in all filters above the
threshold. We solve this problem by using the tensoring approach,
which can be seen as a simplified version of the general approach
proposed in~\cite{christiani2017framework}.

\paragraph*{Construction} Let $t = \lceil 1/(1 - \alpha^2)\rceil$ and
assume that $m^{1/t}$ is an integer.  Consider $t$ independent data
structures $\DS_1, \ldots, \DS_t$, each using $ m^{1/t}$ random
vectors $\a_{i, j}$, for $i \in\{1,\ldots,t\}, j \in [m^{1/t}]$.  Each
$\DS_i$ is instantiated as described above.  During preprocessing,
consider each $\x \in \PS$.  If $\a_{1,i_1},\ldots,\a_{t,i_t}$ are the
random vectors that achieve the largest inner product with $\x$ in
$\DS_1, \ldots, \DS_t$, map the index of $\x$ in $S$ to the bucket
$(i_1,\ldots,i_t) \in [m^{1/t}]^t$.  Use a hash table to keep track of
all non-empty buckets.  Since each data point in $S$ is stored exactly
once, the space usage is $O(n + tm^{1/t})$.

\paragraph*{Query} Given the query point $\q$, evaluate all
$t m^{1/t}$ filters.  For $i \in \{1, \ldots, t\}$, let
$\mathcal{I}_i = \{j \mid \ip{\a_{i,j}}{\q} \geq \alpha \Delta_{\q, i}
- f(\alpha, \varepsilon)\}$ be the set of all indices of filters that
are above the individual query threshold in $\DS_i$.  Check all
buckets
$(i_1, \ldots, i_t) \in \mathcal{I}_1 \times \dots \times
\mathcal{I}_t$.  If there is a bucket containing a close point, return
it, otherwise return $\perp$.

\begin{theorem}
    Let $S \subseteq X$ with $|S| = n$ and $-1 < \beta < \alpha <
    1$. The tensoring data structure solves the $(\alpha, \beta)$-NN
    problem in linear space and expected time $n^{\rho + o(1)}$.
\end{theorem}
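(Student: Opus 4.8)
The plan is to reduce everything to the single--data-structure bounds already established in \apndref{tableau}, namely \lemref{tableau:close} and \lemref{tableau:far}, by exploiting the product structure of the tensored construction. The key observation is that a point $\x \in \PS$ is stored in the single bucket $(i_1,\dots,i_t)$ in which, for each $j\in[t]$, the vector $\a_{j,i_j}$ attains the largest inner product with $\x$ among the $m^{1/t}$ vectors of $\DS_j$; hence $\x$ lies in a bucket inspected by the query if and only if $i_j\in\mathcal{I}_j$ for \emph{every} $j$, i.e.\ iff in each $\DS_j$ the unique filter $\x$ is assigned to is above the per--structure threshold $\alpha\Delta_{\q,j}-f(\alpha,\varepsilon)$. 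Since the $\DS_j$ use disjoint families of Gaussians, these $t$ events are mutually independent. Note also that with $m=n^{\Omega(1)}$ and $t=\ceil{1/(1-\alpha^2)}$ a constant in $n$, each $\DS_j$ still uses $m^{1/t}=n^{\Omega(1)}$ filters, which is amply large for the order--statistics estimates $t_\q/t_\x=1+O(1/\log\log n)$ and $f(\alpha,\varepsilon)/t_\x=O(1/\log\log n)$ of \apndref{tableau} to remain valid with $m^{1/t}$ in place of $m$.

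Correctness: fix $\x$ with $\ip{\q}{\x}\ge\alpha$. Apply \lemref{tableau:close} to each $\DS_j$ separately, choosing the internal failure parameters so that the per--structure success probability is at least $1-\varepsilon/t$; a union bound over $j\in[t]$ then shows that, with probability at least $1-\varepsilon$, the filter $\x$ is assigned to is above threshold in \emph{all} of $\DS_1,\dots,\DS_t$ simultaneously. In that event $\x$ lies in an inspected bucket $\mathcal{I}_1\times\dots\times\mathcal{I}_t$, so the query returns a point of inner product at least $\beta$ (either $\x$ itself or an earlier encountered close point).

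Running time and space: for a far point $\x$ with $\ip{\q}{\x}\le\beta$, independence gives that the probability its bucket is inspected equals $\prod_{j=1}^t p_j$, where $p_j$ is the bound of \Eqref{far:prob} applied with $m^{1/t}$ filters, i.e.\ $p_j=(m^{1/t})^{-(\alpha-\beta)^2/(1-\beta^2)+o(1)}$; hence $\prod_j p_j=m^{-(\alpha-\beta)^2/(1-\beta^2)+o(1)}$, precisely the quantity from the untensored analysis (the $t$ copies of the $o(1)$ exponent still sum to $o(1)$). Summing over the $\le n$ far points and plugging in $m=n^{(1-\beta^2)/(1-\alpha\beta)^2+o(1)}$ from \lemref{tableau:far} yields an expected $n^{\rho+o(1)}$ inspected far points, hence at most $1+n^{\rho+o(1)}$ inspected points before the first $\beta$--close one. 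By the same independence, the expected number of inspected buckets is $\prod_j\Ex{\cardin{\mathcal{I}_j}}=m^{1-\alpha^2+o(1)}=n^{\rho+o(1)}$, each empty bucket costing $O(1)$ via the hash table. Finally, evaluating and storing the $t\,m^{1/t}$ filters costs $O(t\,m^{1/t}d)$; since $1/t\le 1-\alpha^2$ we have $m^{1/t}\le m^{1-\alpha^2}=n^{\rho+o(1)}$, so this too is $n^{\rho+o(1)}$, and because $(1-\alpha\beta)^2-(1-\alpha^2)(1-\beta^2)=(\alpha-\beta)^2>0$ we have $\rho<1$, so $n^{\rho+o(1)}=O(n)$ and the total space $O(n+t\,m^{1/t})$ is linear. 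Adding the three $n^{\rho+o(1)}$ contributions gives the claimed expected query time.

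The main obstacle is the bookkeeping in the first two paragraphs: one must check carefully that replacing $m$ by $m^{1/t}$ does not invalidate the tail estimates of \apndref{tableau} --- concretely that $m^{1/t}$ remains polynomial in $n$, so that the $O(1/\log\log n)$ error terms and the choices of $t_\q,t_\x$ carry over --- and that the $t$--fold union bound for correctness together with the $t$--fold product of the per--structure $o(1)$ exponents only costs constant factors, which holds because $t$ depends on $\alpha$ alone and not on $n$.
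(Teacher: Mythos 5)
Your proof is correct and follows essentially the same route as the paper's: decompose membership in an inspected bucket into $t$ independent per-structure threshold events, multiply the per-structure bounds from \lemref{tableau:close} and \Eqref{far:prob} (with $m^{1/t}$ filters each), and observe that $m^{1/t}=n^{\rho+o(1)}$ so that filter evaluation, bucket enumeration, and far-point inspection all cost $n^{\rho+o(1)}$. The only (harmless) deviation is in handling correctness: the paper accepts that the success probability drops to $p^{t}$ and notes that a constant number of repetitions restores it, whereas you retune the per-structure failure to $\varepsilon/t$ and take a union bound --- both are valid since $t$ depends only on $\alpha$.
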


Before proving the theorem, we remark that efficient evaluation comes
at the price of lowering the success probability from a constant $p$
to $p^{1/(1-\alpha^2)}$.  Thus, for $\delta \in (0,1)$ repeating the
construction $\ln(1/\delta)p^{1-\alpha^2}$ times yields a success
probability of at least $1-\delta$.

\begin{proof}
    Observe that with the choice of $m$ as in the proof of
    \lemref{tableau:far}, we can bound
    $m^{1/t} = n^{(1-\alpha^2)(1-\beta^2)/(1-\alpha\beta)^2 + o(1)} =
    n^{\rho + o(1)}$.  This means that preprocessing takes time
    $n^{1+\rho + o(1)}$.  Moreover, the additional space needed for
    storing the $t m^{1/t}$ random vectors is $n^{\rho + o(1)}$ as
    well.  For a given query point $\q$, we expect that each
    $\mathcal{I}_i$ is of size $m^{(1-\alpha^2)/t + o(1)}$.  Thus, we
    expect to check not more than
    $m^{1-\alpha^2 + o(1)}=n^{\rho+o(1)}$ buckets in the hash table,
    which shows the stated claim about the expected running time.

    Let $\x$ be a point with $\ip{\q}{\x} \geq \alpha$.  The
    probability of finding $\x$ is the probability that the vector
    associated with $\x$ has inner product at least
    $\alpha\Delta_{\q,i} - f(\alpha, \varepsilon)$ in $\DS_i$, for all
    $i \in \{1, \ldots, t\}$.  This probability is $p^t$, where $p$ is
    the probability of finding $\x$ in a single data structure
    $\DS_i$.  By \thmref{tableau} and since $\alpha$ is a constant,
    this probability is constant and can be bounded from below by
    $1 - \delta$ via a proper choice of $\varepsilon$ as discussed in
    the proof of \lemref{tableau:close}.

    Let $\y$ be a point with $\ip{\q}{\y} < \beta$.  Using the same
    approach in the proof of \lemref{tableau:far}, we observe that the
    probability of finding $\y$ in an individual $\DS_i$ is
    $(1/m)^{1/t \cdot (\alpha - \beta)^2/(1-\beta^2) + o(1)}$.  Thus
    the probability of finding $\y$ in a bucket inspected for $\q$ is
    at most $(1/m)^{(\alpha - \beta)^2/(1-\beta^2) + o(1)}$. Setting
    parameters as before shows that we expect at most $n^{\rho +o(1)}$
    far points in buckets inspected for query $\q$, which completes
    the proof.
\end{proof}


\end{document}